\let\originalleft\left
\let\originalright\right
\renewcommand{\left}{\mathopen{}\mathclose\bgroup\originalleft}
\renewcommand{\right}{\aftergroup\egroup\originalright}
\let\ams@cases@copy@\cases
\def\cases{%
  \newcommand{\CASE}{&\text{if }}%
  \newcommand{\AND}{\\&\text{and }}%
  \newcommand{\OTHERWISE}{& \text{otherwise}}%
  \ams@cases@copy@%
}
\let\stdphi\phi
\let\phi\varphi
\let\varphi\stdphi
\def\pre#1-{$#1$\nobreak-\nobreak\hskip0pt}
\newcommand{\picalc}{%
    \texorpdfstring{\pre\pi-calculus}{pi-calculus}}
\newcommand{\piterm}{%
    \texorpdfstring{\pre\pi-term}{pi-term}}
\newcommand{\PiTerms}{\mathcal{P}}
\newcommand{\PiNf}{\PiTerms_{\!\mathsf{nf}}}
\newcommand{\PiAnnot}[1][\Types]{\PiTerms^{#1}}
\newcommand{\PiNfAnnot}[1][\Types]{\PiNf^{#1}}
\newcommand{\Seq}{\mathcal{S}}
\newcommand{\Forests}{\mathcal{F}}
\newcommand{\tcompat}[1][ible]{%
  \texorpdfstring{\pre\Types-compat#1}{T-compat#1}}
\newcommand{\tshaped}[1][]{%
  \texorpdfstring{\pre\Types-shaped#1}{T-shaped#1}}
\newcommand{\bigO}[1]{\ensuremath{O(#1)}}
\let\arrowvect\vec
\newcommand{\vect}[1]{\arrowvect{#1}\@ifnextchar]{\,}{\@ifnextchar){\,}{\@ifnextchar{\rangle}{\,}{}}}}
\renewcommand{\vec}[1]{\mathbf{#1}}
\newcommand{\sem}[1]{\llbracket#1\rrbracket}
\newcommand{\Nat}{\mathbb{N}}
\newcommand{\union}{\cup}
\newcommand{\inters}{\cap}
\newcommand{\Union}{\bigcup}
\newcommand{\dunion}{\uplus}
\newcommand{\Dunion}{\biguplus}
\DeclarePairedDelimiter\card{\lvert}{\rvert}
\providecommand{\implies}{\Rightarrow}
\newcommand{\domain}{\operatorname{dom}}
\newcommand{\lst}[3][1]{{#2_{#1}}\ldots {#2_{#3}}}    %
\newcommand{\lstc}[3][1]{{#2_{#1}},\ldots, {#2_{#3}}} %
\providecommand{\coloneq}{\mathrel{\mathop:}=} % required if not loading newtxmath
\providecommand{\Coloneqq}{\mathrel{\mathop{::}}=} % required if not loading newtxmath
\newcommand{\is}{\coloneq}
\def\congr{\equiv}
\newcommand{\reach}{\operatorname{Reach}}
\newcommand{\from}{\colon}
\newcommand{\pto}{\rightharpoonup}
\newcommand{\inv}[1]{#1^{-1}}
\newcommand{\st}{.\:}  % such that
\let\SavedDoubleVert\relax
  \xdef\set{\protect\expandafter\noexpand\csname set \endcsname}
\gdef\csname set \endcsname#1{\begingroup\mathinner%
  \ifx!#1!%
      \emptyset%
  \else%
      {\lbrace\mathcode`\|32768\let|\midvert #1\rbrace}%
  \fi%
  \endgroup%
  }
  \xdef\Set{\protect\expandafter\noexpand\csname Set \endcsname}
\gdef\csname Set \endcsname#1{%
      \ifx!#1!%
         \emptyset%
      \else%
         \left\{%
         \ifx\SavedDoubleVert\relax \let\SavedDoubleVert\|\fi
         \,{\let\|\SetDoubleVert
         \mathcode`\|32768\let|\SetVert
         #1}\,\right\}%
      \fi%
  }
\def\midvert{\egroup\mid\bgroup}
\def\SetVert{\@ifnextchar|{\|\@gobble}% turn || into \|
    {\egroup\;\mid@vertical\;\bgroup}}
\def\SetDoubleVert{\egroup\;\mid@dblvertical\;\bgroup}
 \edef\@tempa{\meaning\middle}
 \edef\@tempb{\string\middle}
 \def\mid@vertical{\middle|}
 \def\mid@dblvertical{\middle\SavedDoubleVert}
 \def\mid@vertical{\mskip1mu\vrule\mskip1mu}
 \def\mid@dblvertical{\mskip1mu\vrule\mskip2.5mu\vrule\mskip1mu}
\newcommand{\map}[1]{%
    \if\relax\noexpand#1\relax%
        \emptyset%
    \else%
        [\,\@map#1;\relax\noexpand\@end@map\,]%
    \fi%
}
\newcommand{\Map}[1]{%
    \if\relax\noexpand#1\relax%
        \emptyset%
    \else%
        \left[\,\@map#1;\relax\noexpand\@end@map\,\right]%
    \fi%
}
\def\@map#1;#2\@end@map{
    \ifx\relax#2\relax%
        \map@binding[#1]
    \else%
        \map@binding[#1],\:\@map#2\@end@map
    \fi%
}
\def\map@binding[#1->#2]{#1\mapsto#2}
\newcommand{\subst}[1]{%
    \if\relax\noexpand#1\relax%
    \else%
        [\,\@subst#1,\relax\noexpand\@end@subst\,]%
    \fi%
}
\def\@subst#1,#2\@end@subst{
    \ifx\relax#2\relax%
        \subst@binding[#1]
    \else%
        \subst@binding[#1],\:\@subst#2\@end@subst
    \fi%
}
\def\subst@binding[#1->#2]{#2/#1}
\def\grammOr{\hspace{3pt}\mid\hspace{3pt}}
\def\grammIs{\Coloneqq}
\gdef\@grammar@bar{%
    \catcode`\|=\active%
    \def|{\grammOr}%
}
\newcommand{\gramm}[1]{%
  \begingroup
  \def\is{\grammIs}%
  \@grammar@bar%
  #1%
  \endgroup%
}
\newenvironment{grammar}{%
    \begin{equation*}%
    \def\is{& \grammIs }%
    \@grammar@bar%
    \aligned%
}
{%
    \endaligned%
    \end{equation*}%
    \aftergroup\ignorespaces%
}
\newcommand{\Names}{\ensuremath{\mathcal{N}}}
\def\out#1<#2>{\overline{#1}\langle#2\rangle}
\def\inp#1(#2){{#1}(#2)}
\def\tact{\boldsymbol\tau} % tau action
\def\outz#1{\overline{#1}}
\def\new#1.{\restr #1.\ignorespaces}
\newcommand{\restr}{\upnu}  % requires upgreek or newtxmath
\newcommand{\zero}{\mathbf{0}}
\newcommand{\freenames}{\operatorname{fn}}
\newcommand{\boundnames}{\operatorname{bn}}
\newcommand{\resboundnames}{\operatorname{bn}_\nu}
\newcommand{\actrestr}{\operatorname{active}_{\restr}}
\newcommand{\seqproc}{\operatorname{seq}}
\newcommand{\nf}{\operatorname{nf}}
\newcommand{\nestr}{\operatorname{nest}_{\restr}}
\newcommand{\depth}{\operatorname{depth}}
\newcommand{\emptyforest}{(\emptyset, \emptyset)}
\newcommand{\forest}{\operatorname{forest}}
\newcommand{\AST}{\operatorname{\mathcal{F}}\sem}
\newcommand{\height}{\operatorname{height}}
\newcommand{\minrestr}{\operatorname{min}_{\Types}}
\newif\if@initial@pi@term@
\newcommand{\redto}{\to}
\newcommand{\bang}[1]{#1^{*}}
\newcommand{\Parallel}{\@ifstar{\prod}{{\textstyle\prod}}}
\newcommand{\Alt}{\@ifstar{\sum}{{\textstyle\sum}}}
\newcommand{\linkedto}[1]{\leftrightarrow_{#1}}
\newcommand{\tiedto}[1]{\smallfrown_{#1}}
\newcommand{\ntiedto}[1]{\triangleleft_{#1}}
\newcommand{\migr}[1]{\operatorname{Mig}_{#1}}
\newcommand{\Types}{\ensuremath{\mathcal{T}}}
\newcommand{\Env}{\Gamma}
\newcommand{\types}[1][\Types]{\vdash_{#1}}
\newcommand{\type}{\tau} % type meta-variable
\newcommand{\tas}{\,{:}\,} % typed as
\newcommand{\ty}[1]{\mathsf{#1}} % concrete base type
\newcommand{\tleq}{\leq}
\newcommand{\tlt}{<}
\newcommand{\parent}{%
  \@ifnextchar+{\p@rentop@plus}{%
  \@ifnextchar*{\p@rentop@star}{%
% else
  \p@rentop}}}
\DeclareMathSymbol\YleftRel\mathrel{stmry}{"06}\fi
\newcommand{\p@rentop}{\YleftRel}
\newcommand{\p@rentop@plus}[1]{<}
\newcommand{\p@rentop@star}[1]{\leq}
\newcommand{\typevar}{\mathfrak{t}}  % temporary
\newcommand{\paths}{\operatorname{paths}}
\newcommand{\traces}{\operatorname{traces}}
\newcommand{\nodes}{\operatorname{nodes}}
\newcommand{\Premise}{\Psi}
\newcommand{\phimig}{\phi_{\mathrm{mig}}}
\newcommand{\phinonmig}{\phi_{\neg\mathrm{mig}}}
\newcommand{\treeins}{\operatorname{ins}}
\newcommand{\base}{\operatorname{base}}
\newcommand{\NDA}{NDCMA}  % wrapped in a macro so it is easy to change it later
\newcommand{\dataset}{\ensuremath{\mathcal{D}}}
\newcommand{\CMF}[1][\dataset, \States]{\operatorname{CMF}(#1)}
\newcommand{\Aut}{\ensuremath{\mathcal{A}}}
\newcommand{\AutEnc}{\operatorname{\mathcal{A}}\sem}
\newcommand{\ProcEnc}{\operatorname{\mathcal{P}}\sem}
\newcommand{\pred}{\operatorname{pred}}
\newcommand{\fresh}{\mathfrak{f}}
\newcommand{\deriv}{\operatorname{der}}
\newcommand{\Deriv}[1][P]{\Delta_{#1}}
\newcommand{\States}{\mathbb{Q}}
\newcommand{\q}[1]{q_{\text{#1}}}
\newcommand{\qready}{\q{ready}}
\newcommand{\qdead}{q_\dagger}
\newcommand{\s}[2]{#1^{\text{#2}}}
\newcommand{\mig}[1]{\phi_{\text{mig}}(#1)}
\newcommand{\nmig}[1]{\phi_{\neg\text{mig}}(#1)}
\newcommand{\enctr}[2]{[#1]\to[#2]}
\newcommand{\enctradd}[3]{[#1; {\fresh}]\to[#2; {#3}]}
\newcommand{\enctrstates}{\operatorname{states}}
\newcommand{\tran}{\operatorname{tran}}
\newcommand{\trgen}[1]{\operatorname{\textsc{#1}}}  % trgen = transition generator
\newcommand{\autto}[1][\!\Aut]{\to_{#1}}
\newcommand{\readyto}[1][\text{ready}]{\Rightarrow_{#1}}
\newcommand{\chanzto}[1][\Chan^0]{\Rightarrow_{#1}}
\let\bisim\approx
\newcommand{\bisimA}{\sim}
\newcommand{\bisimP}{\backsim}
\newcommand{\Chan}{\mathcal{C}}
\theoremstyle{plain}
\newtheorem{theorem}{Theorem}
\newtheorem{lemma}{Lemma}
\newtheorem{proposition}{Proposition}
\newtheorem{remark}{Remark}
\newtheorem{corollary}{Corollary}
\theoremstyle{definition}
\newtheorem{definition}{Definition}
\newtheorem*{nameuniq}{Name Uniqueness Assumption}
\let\@nameuniq\nameuniq
\def\nameuniq{%
  \@nameuniq%
  \def\@currentlabel{Name Uniqueness}
  \phantomsection
  \label{nameuniq}
}
\theoremstyle{remark}
\newtheorem{example}{Example}
\newenvironment{proof*}
    {\begin{proof}[Proof \textup(Sketch\textup)]}
    {\end{proof}}
\let\amsproof\proof
\let\endamsproof\endproof
\def\tikzset#1{}%
\def\tikzsetnextfilename#1{}%
\def\inputfig figures/#1.tikz{\includegraphics{export/#1}}
\newenvironment{tikzpicture}[1][]{%
  \def\input##1{\inputfig##1}%
}{}
\newif\ifshortversion
\def\iflongversion{\ifshortversion\else} % does not support `\else`!
\def\shortversioninline#1{%
  \ifshortversion%
    $#1$
  \else%
    \[#1\]
  \fi%
}
\newif\ifincludeappendix
\def\appendixorfull{%
  \ifincludeappendix%
    Appendix%
  \else%
    \cite{fullversion}%
  \fi%
}
\begin{document}
\let\proof\amsproof
\let\endproof\endamsproof

\title{A Type System for proving Depth Boundedness in the \picalc}

\author{%
  \IEEEauthorblockN{Emanuele D'Osualdo}%
  \IEEEauthorblockA{%
    University of Oxford, UK\\
    emanuele.dosualdo@cs.ox.ac.uk}
  \and
  \IEEEauthorblockN{Luke Ong}%
  \IEEEauthorblockA{
    University of Oxford, UK\\
    lo@cs.ox.ac.uk}
  }

\maketitle

\begin{abstract}
The depth-bounded fragment of the \picalc\ is an expressive class of systems enjoying decidability of some important verification problems.
Unfortunately membership of the fragment is undecidable.
We propose a novel type system, parameterised over a finite forest,
that formalises name usage by $\pi$-terms in a manner that respects the forest.
Type checking is decidable and type inference is computable;
furthermore typable \piterm{s} are guaranteed to be depth bounded.

The second contribution of the paper is a proof of equivalence between the semantics of typable terms and nested data class memory automata,
a class of automata over data words.
We believe this connection can help to establish new links
between the rich theory of infinite-alphabet automata and nominal calculi.
\end{abstract}

%!TEX root = main.tex
\section{Introduction}
\label{sec:intro}

The \picalc\ \cite{MilnerPW92} is a concise yet expressive model of concurrent computation.
Its view of a concurrent system is a set of processes exchanging messages over channels, either private or public.
Both processes and private channels can be created dynamically.
A key feature of the calculus is mobility:
a private channel name can be sent as a message over a public one
and later used to exchange messages with an initially disconnected party.
The \emph{communication topology} of a \picalc\ system,
i.e.,~the graph linking processes that share channels,
is therefore dynamically evolving, in contrast to those of simpler process calculi such as CCS.

From a verification point of view,
proving properties of \picalc\ terms is challenging:
the full \picalc\ is Turing-complete.
As a consequence, a lot of research effort has been devoted to defining
fragments of \picalc\ that could be verified automatically
while retaining as much expressivity as possible.
To date, the most expressive fragment that has decidable verification problems is the \emph{depth-bounded \picalc}~\cite{meyer:db}.
Roughly speaking, the depth of a \picalc\ term can be understood
as the maximum length of the simple (i.e~non looping) paths
in the communication topology of the term.
A term is depth-bounded if there exists a $k \in \Nat$ such that the maximal nested depth of restriction of each reachable term is bounded by $k$.
Notably, depth-bounded systems can have an infinite state-space and generate unboundedly many names.
Besides enabling the design of procedures for deciding such important verification problems as termination or coverability,
depth boundedness can be useful as a correctness property of a system in itself.
Consider, for example, a system modelling an unbounded number of processes, each maintaining a private queue of tasks and communicating via message-passing.
In the \picalc, structures such as lists and queues are typically modelled using private channels to represent the ``next'' pointers.
Proving a bound in depth $k$ for such a system would guarantee that none of the queues grows unboundedly, which is an oft-desired resource-usage property.

Unfortunately, depth boundedness is a \emph{semantic} property,
it is undecidable whether a given arbitrary \picalc\ term is depth-bounded.
It has recently been proved that the problem becomes decidable if the bound $k$ is fixed~\cite{boundsmobility} but the complexity is very high.

\subsection*{Contributions}

The first contribution of this paper is a novel fragment of \picalc\ which we call \emph{typably hierarchical}, which is a proper subset of the depth-bounded \picalc.
This fragment is defined by means of a type system with decidable checking and inference.
The typably hierarchical fragment is rather expressive:
it includes terms that are unbounded in the number of private channels
and exhibit mobility.

The type system itself is based on the novel notion of \emph{\tcompat[ibility]}, where $\Types$ is a given finite forest.
We start from the observation that the communication topologies of depth bounded terms often exhibit a hierarchical structure:
channels are organisable into layers with decreasing degree of sharing.
Consider the example of an unbounded number of clients communicating with their local server:
a message from a client containing a private channel is sent to the server's channel, the server replies to the client's request on the client's private channel.
While the server's channel is shared among all the clients,
the private channel of each client is shared only between itself and the server.
\tcompat[ibility] formalises and generalises this intuition.
Roughly speaking, we associate to each channel name a \emph{base type}
which is a node in a (finite) forest $\Types$.
The forest $\Types$ represents the hierarchical relationship between channels:
it is the blueprint according to which one can organise the relationship between channels in each reachable term.
\ifshortversion%
  The type system we present exploits the structure of $\Types$ so that each term that is reachable from a typably hierarchical term is guaranteed to have
  a depth bounded by the height of $\Types$.
\else%
  \par
  More precisely, the names hierarchy imposes constraints on the scopes of private names that can be considered valid.
  Consider the term
    $(\new b.(\out a<b>.\inp b(y))) \parallel a(x).(\new c.\out x<c>)$:
  two parallel processes ready to synchronise on the public channel $a$.
  Upon synchronisation, the private name $b$%
  ---known only by the first process---%
  will be transmitted to the second process
  which will ``migrate'' under the scope of $b$.
  The result of this communication is the term
    $\new b.(\inp b(y) \parallel \new c.(\out b<c>))$,
  note how the migration nests the scope of $c$ in the scope of $b$.
  If $\Types$ dictates that $c$ is higher in the hierarchy than $b$
  the scoping resulting from the communication would be invalid:
  scope nesting should always respect the hierarchy.
  The type system we present constrains the use of names so that each term that is reachable from a typably hierarchical term is guaranteed to have
  scopes respecting $\Types$.
  From this guarantee it can be shown that typably hierarchical terms have
  a depth bounded by the height of $\Types$.
\fi%
We believe that the notion of \tcompat[ibility] has potential
as a specification device:
it allows the user to specify the desired relationship between channels instead of just a numeric bound on depth.

After defining the typably hierarchical fragment, we turn to the question:
\emph{is there an automata-based model that can represent the same set of systems?}
The second contribution of this paper is an encoding of typably hierarchical
into \emph{Nested Data Class Memory Automata}~\cite{conrad:14},
a class of automata over data-words (i.e.~finite words over infinite alphabets).
An encoding of Nested Data Class Memory Automata
into typably hierarchical terms is also presented,
showing that the two models are equi-expressive.
The two encodings are heavily based on the notion of \tcompat[ibility] and open an approach to fruitful interactions between process algebra and automata over infinite alphabets.

%!TEX root = main.tex
\section{Preliminaries}
\label{sec:picalc}\label{sec:prelim}

\subsection*{Labelled forests}

A \emph{forest} is a simple,
%simple means no self-loop and at most one edge between any given pair of nodes
acyclic, directed graph $f = (N_f, \parent_f)$ such that the edge relation,
$\parent_f^{-1} \from N_f \pto N_f$, is the \emph{parent} map
which is defined on every node of the forest except the \emph{root}(s).
A \emph{path} is a sequence of nodes, $n_1 \, \dots \, n_k$,
such that for each $i < k$, $n_{i} \parent_f n_{i+1}$.
Thus every node of a forest has a unique path to a root
(and it follows that that root is unique).
Henceforth we assume that all forests are finite.
We write $\paths(f)$ for the set of paths in $f$.
The \emph{height} of a forest, $\height(f)$, is the length of its longest path.

An \emph{$L$-labelled forest} is a pair $\phi = (f_\phi, \ell_\phi)$ where
  $f_\phi$ is a forest and
  $\ell_\phi \from N_\phi \to L$ is a labelling function on nodes.
Given a path $n_1 \dots n_k$ of $f_\phi$, its \emph{trace} is the induced sequence
  $\ell_\phi(n_1) \dots \ell_\phi(n_k)$.
By abuse of language, a \emph{trace} is an element of $L^\ast$
which is the trace of some path in the forest.
We write $\traces(\phi)$ for the set of traces of the labelled forest.

We define $L$-labelled forests inductively from the empty forest $\emptyforest$.
We write $\phi_1\dunion \phi_2$ for the disjoint union
of forests $\phi_1$ and $\phi_2$,
and $l[\phi]$ for the forest with a single root, labelled with $l \in L$,
which has the respective roots of the forest $\phi$ as children.
\iflongversion
Since the choice of the set of nodes is irrelevant, we will always
interpret equality between forests up to isomorphism
(i.e. a bijection on nodes respecting parent and labeling).
\fi

\subsection*{The \picalc}
We use a \picalc{} with guarded replication to express recursion~\cite{milner:fun}.
Fix a universe $\Names$ of names representing channels and messages
occurring in communications.
The syntax follows the grammar:
\begin{grammar}
    \PiTerms \ni P, Q \is \new x.P | P_1 \parallel P_2 | M | \bang M
        && \text{process}\\
    % M \is \Alt_{i\in I} \pi_i. P_i
    M \is \zero | M + M | \pi_i. P_i
        && \text{choice}\\
    \pi \is \inp a(x) | \out a<b> | \tact
        && \text{prefix}
\end{grammar}
% We write $\zero$ for the empty summation.
Structural congruence is defined as the smallest congruence closed by
  \pre\alpha-conversion of bound names
  % (we write $P \congr_\alpha P'$ to mean that $P'$ is obtained from $P$ by $\alpha$-renaming),
  commutativity and associativity of choice and parallel composition
  with $\zero$ as the neutral element,
  and the following laws for restriction, replication and scope extrusion:%
  \footnote{% shall we omit this footnote?
    Technically, the $\bang \zero \congr \zero$ rule
    is not in the standard definition,
    but this does not affect the reduction semantics.%
  }%
\begin{mathpar}
  \new x.\zero \congr \zero \and
  \new x.\new y.P \congr \new y.\new x.P \and
  \bang \zero \congr \zero \and
  \bang M \congr M \parallel \bang M \and
   P \parallel \new a . Q \congr \new a . (P \parallel Q)
     \quad (\text{if } a \not\in \freenames(P))
\end{mathpar}

The name $x$ is bound in both $\new x.P$, and in $\inp a(x).P$.
We will write $\freenames(P)$, $\boundnames(P)$ and $\resboundnames(P)$
for the set of free, bound and restriction-bound names in $P$, respectively.
A sub-term is \emph{active} if it is not under a prefix.
A name is active when it is bound by an active restriction.
The set $\actrestr(P)$ is the set of the active names of $P$.
Terms of the form $M$ and $\bang M$ are called \emph{sequential}.
We write $\Seq$ for the set of all sequential terms.
$\seqproc(P)$ is the set of all active sequential processes of~$P$.
% The \emph{derivatives} of a term are all its sequential sub-terms,
% both active and not active.

We will often rely on the following mild assumption,
that the choice of names is unambiguous,
especially when selecting a representative for a congruence class.

\begin{nameuniq}
  Each name in $P$ is bound at most once; and
  $\freenames(P) \inters \boundnames(P) = \emptyset$.
\end{nameuniq}

Note that channels are unary;
extending our work to the polyadic case is strightforward
but we only consider the unary case for conciseness.

As we will see in the rest of the paper,
the notions of depth and of hierarchy between names
rely heavily on structural congruence.
In particular, given a certain structure on names,
there will be a specific representative of the structural congruence class
that exhibits the desired properties.
Nevertheless, we cannot assume the input term
is always presented as that specific representative;
worse yet, when the structure on names is not fixed,
as in the case of type inference,
we cannot fix any particular representative and be sure
it will witness the desired properties.
So, instead, in the semantics and in the type system,
we manipulate a neutral representative called \emph{normal form},
which is a variant of the \emph{standard form}~\cite{milner:picalc}.
In this way we are not distracted by
the particular syntactic representation we are presented with.

We say that a term $P$ is in \emph{normal form} ($P \in \PiNf$) if it is in standard form
and each of its inactive subterms is also in normal form.
Formally, each process in normal form follows the grammar
\begin{grammar}
    \PiNf \ni N \is \new x_1.\cdots \new x_n.
                      (A_1 \parallel \cdots \parallel A_m) \\
    A \is \pi_1. N_1 + \cdots + \pi_n. N_n \\
      &|  \bang{\left(\pi_1. N_1 + \cdots + \pi_n. N_n \right)} \\
\end{grammar}
where the sequences $\lst{x}{n}$ and $\lst{A}{m}$ may be empty;
when they are both empty the normal form is the term $\zero$.
We further assume w.l.o.g. that a normal form satisfies \ref{nameuniq}.
Since the order of appearance of the restrictions, sequential terms or choices
in a normal form is irrelevant in the technical development of our results,
we use the following abbreviations.
Given a finite set of indexes $I = \set{i_1,\dots,i_n}$ we write
$\Parallel_{i \in I} A_i$ for $(A_{i_1} \parallel \cdots \parallel A_{i_n})$,
%that
which is $\zero$ when $I$ is empty;
and $\Alt_{i \in I} \pi_i. N_i$ for
$(\pi_{i_1}. N_{i_1} + \cdots + \pi_{i_n}. N_{i_n})$.
This notation is justified by commutativity and associativity of the parallel and choice operators.
We also write $\new X.P$ or $\new x_1\:x_2\cdots x_n.P$
for $\new x_1.\cdots \new x_n.P$
when $X = \set{\lstc{x}{n}}$, or just $P$ when $X$ is empty;
this is justified by the structural laws of restrictions.
When $X$ and $Y$ are disjoint sets of names, we use juxtaposition for union.

\iflongversion
\begin{figure*}[tb]
  \centering
  \input{definitions/nf}
  \caption{Definition of the $\nf \from \PiTerms \to \PiNf$ function.}
  \label{fig:nf}
\end{figure*}
\fi

Every process $P \in \PiTerms$ is structurally congruent
to a process in normal form.
The function $\nf \from \PiTerms \to \PiNf$,
\ifshortversion
defined in~\appendixorfull,
\else
defined in Figure~\ref{fig:nf},
\fi
extracts, from a term, a structurally equivalent normal form.

We are interested in the reduction semantics of a \piterm{},
which can be described using the following rule.

\begin{definition}[Semantics of \picalc]
  \label{def:sem-of-picalc}
  The operational semantics of \picalc\ is defined
  by the transition system on \piterm{s},
  with transitions satisfying $P\redto Q$ if
  \begin{defenum}
    \item
      $P \congr \new W.(S \parallel R \parallel C) \in \PiNf$,
    \item
      $S = (\out a<b>.\new Y_s.S')+M_s$,
    \item
      $R = (\inp a(x).\new Y_r.R')+M_r$,
    \item
      $Q \congr \new W Y_s Y_r.
        (S' \parallel R'\subst{x->b} \parallel C)$,
  \end{defenum}
  or if
  \begin{defenum}
    \item
      $P \congr \new W.(\tact.\new Y.P' \parallel C) \in \PiNf$,
    \item
      $Q \congr \new W Y. (P' \parallel C)$.
  \end{defenum}

  We define the set of reachable configurations as
    $\reach(P) \is \Set{Q | P \redto^* Q}$,
  writing $\redto^*$ to mean the reflexive, transitive closure of $\redto$.
\end{definition}
Note that the use of structural congruence takes care of unfolding replications,
if necessary.

\begin{example}[Server/Client system]
  \label{ex:servers}
  Consider the term
    $\new s\:c.P$
  where:
  \begin{align*}
    P &= \bang{S} \parallel \bang{C} \parallel \bang{M} &
    S &= s(x).\new d.\out x<d> \\
    C &= c(m).(\out s<m> \parallel \inp m(y).\out c<m>) &
    M &= \tact.\new m.\out c<m>
  \end{align*}
  The term $\bang{S}$, which is presented in normal form, represents a server
  listening to a port $s$ for a client's requests.
  A request is a channel $x$ that the client sends to the server
  for exchanging the response.
  After receiving $x$ the server creates a new name $d$ and sends it over $x$.
  The term $\bang{M}$ creates unboundedly many clients,
  each with its own private mailbox $m$.
  A client on a mailbox $m$ repeatedly sends requests to the server
  and concurrently waits for the answer on the mailbox before recursing.
  An example run of the system:
  % \begin{align*}
  %   \new s\:c.P
  %   & \to \new s\:c.(P \parallel \new m.\out c<m>) \\
  %   & \to \new s\:c.\bigl(P \parallel \new m.(\out s<m> \parallel \inp m(y).\out c<m>)\bigr) \\
  %   & \to \new s\:c.\bigl(P \parallel \new m.(\new d.\out m<d> \parallel \inp m(y).\out c<m>)\bigr) \\
  %   & \to \new s\:c.\bigl(P \parallel \new m.(\new d.\out c<m>)\bigr)
  %     \congr \new s\:c.(P \parallel \new m.\out c<m>)
  % \end{align*}
  \begin{align*} %%% ALL NORMAL FORMS
    \new s\:c.P
    & \to \new s\:c\:m.(P \parallel \out c<m>) \\
    & \to \new s\:c\:m.(P \parallel \out s<m> \parallel \inp m(y).\out c<m>) \\
    & \to \new s\:c\:m\:d.(P \parallel \out m<d> \parallel \inp m(y).\out c<m>) \\
    & \to \new s\:c\:m\:d.(P \parallel \out c<m>)
      \congr \new s\:c\:m.(P \parallel \out c<m>)
  \end{align*}
\end{example}

\begin{example}[Stack-like system]
  \label{ex:stack}
  Consider the normal form
    $\new X.( \bang{S} \parallel \out s<a> )$
  where $X = \set{s, n, v, a}$ and
  \[
    S = \inp s(x).\new b.\bigl( (\out v<b>.\out n<x>) \parallel \out s<b> \bigr)
  \]
  The term $\out s<a>$ represents a stack with top element $a$;
  the stack is in an infinite loop that pushes new names (copies of $b$):
  this is represented by the term $\out v<b>.\out n<a> \parallel \out s<b>$
  indicating that the top value is $b$, the next is $a$
  and the stack now starts from $b$.
  An example run:
  \begin{align*}
    & \new X.( \bang{S} \parallel \out s<a> ) \\
    & \to \new X.( \bang{S} \parallel \new b.((\out v<b>.\out n<a>) \parallel \out s<b>)) \\
    & \to \new X.( \bang{S} \parallel \new b\:b'.((\out v<b>.\out n<a>) \parallel (\out v<b'>.\out n<b>) \parallel \out s<b'>))
  \end{align*}
\end{example}
The following definitions are minor variations of (but equivalent to) the concepts introduced in~\cite{meyer:db}.%
\footnote{%
  In~\cite{meyer:db} these functions are defined on fragments.
  It is easy to prove that our definition of $\nestr$ coincides
  with the one in~\cite{meyer:db} on fragments
  and that for any fragment $F$ and non-fragment $P$,
  if $F \congr P$ then $\nestr(P) \geq \nestr(F)$.
  As a consequence our definition of depth coincides
  with the one in~\cite{meyer:db}.
}
\begin{definition}[$\nestr$, $\depth$, depth-bounded term]
\label{def:nest}\label{def:depth}
\label{def:depth-bounded}
  The \emph{nesting of restrictions} of a term is given by the function
  \ifshortversion
  $\nestr \from \PiTerms \to \Nat$ defined as follows:
    $\nestr(M)  \is \nestr(\bang{M}) \is 0$,
    $\nestr(\new x.P)  \is 1 + \nestr(P)$,
    $\nestr(P \parallel Q)  \is \max(\nestr(P), \nestr(Q))$.
  \else
  \begin{align*}
    \nestr(M) & \is \nestr(\bang{M}) \is 0 \\
    \nestr(\new x.P) & \is 1 + \nestr(P) \\
    \nestr(P \parallel Q) & \is \max(\nestr(P), \nestr(Q)).
  \end{align*}
  \fi
  The \emph{depth} of a term is defined as the minimal nesting of restrictions
  in its congruence class: \shortversioninline{
    \depth(P) \is \min\set{\nestr(Q) | P \congr Q}.
  }
  A term $P \in \PiTerms$ is \emph{depth-bounded} if
  there exists a $k \in \Nat$ such that
  for each $Q \in \reach(P)$, $\depth(Q) \leq k$.
\end{definition}

\begin{example}
\label{ex:unbounded}
  The term in Example~\ref{ex:servers} is depth bounded:
  all the reachable terms are congruent to terms of the form
  \[
    Q_{ijk} = \new s\:c.\bigl(
      P \parallel N^i \parallel \mathit{Req}^j \parallel \mathit{Ans}^k
    \bigr)
  \]
  for some $i, j, k \in \Nat$
  where $N = \new m.\out c<m>$,
        $\mathit{Req} =\new m.(\out s<m> \parallel \inp m(y).\out c<m>)$,
        $\mathit{Ans} = \new m.(\new d.\out m<d> \parallel \inp m(y).\out c<m>)$
  and by $Q^n$ we mean the parallel composition of $n$ copies of the term $Q$.
  For any $i, j, k$, $\nestr(Q_{ijk}) \leq 4$:
  the longest chain of nested restrictions is $s, c, m, d$.

  The term in Example~\ref{ex:stack} is unbounded in depth:
  the number of nested copies of $b$ grows every time a push is performed;
  it is not possible to extrude their scope to reduce the number of nested levels.

  Note that both terms are not \emph{name bounded}
  (in the sense of~\cite{meyer:name}):
  the number of active restrictions in the reachable terms is not bounded.
\end{example}

\begin{definition}[Forest representation]
\label{def:forest-repr}
  We represent the structural congruence class of a term $P \in \PiTerms$
  with the set of labelled forests $\AST{P} \is \set{\forest(Q) | Q \congr P}$
  with labels in $\actrestr(P) \dunion \seqproc(P)$
  where $\forest(Q)$ is defined as
  \[
    \forest(Q) \is
      \begin{cases}
        x[\forest(Q')]                    \CASE Q = \new x.Q' \\
        \forest(Q_1) \dunion \forest(Q_2) \CASE Q = Q_1 \parallel Q_2 \\
        Q[\emptyforest]                   \CASE Q \text{ is sequential}\\
        \emptyforest                      \CASE Q = \zero
      \end{cases}
  \]
  Note that only leaves are labelled with sequential processes.

  The \emph{restriction height}, $\height_\restr(\forest(P))$, is the length
  of the longest path formed of nodes labelled with names in $\forest(P)$.
\end{definition}

Clearly, for any $P \in \PiTerms$,
$\depth(P) = \min\set{\height_\restr(\phi) | \phi \in \AST{P}}$.

\begin{lemma}
\label{lemma:forest-nf}
Let $\phi$ be a forest with labels in $\Names \dunion \Seq$.
Then $\phi = \forest(Q)$ with
\ifshortversion
    $Q \congr Q_\phi \is \new X_\phi.\Parallel_{(n, A) \in I} A$
  where
    $X_\phi \is \set{\ell_\phi(n) \in \Names | n \in N_\phi}$ and
    $I = \set{(n, A) \mid \ell_\phi(n) = A \in \Seq}$
\else
  $Q \congr Q_\phi$ where
  \begin{align*}
    Q_\phi &\is \new X_\phi.\Parallel_{(n, A) \in I} A\\
    X_\phi &\is \set{\ell_\phi(n) \in \Names | n \in N_\phi}\\
    I &\is \set{(n, A) \mid \ell_\phi(n) = A \in \Seq}
  \end{align*}
\fi
provided
\begin{enumerate}
  \item $\forall n \in N_\phi$,
        if $\ell_\phi(n) \in \Seq$ then
        $n$ has no children in $\phi$, and
        \label{lemma:forest-nf:seq-leaf}
  \item $\forall n, n' \in N_\phi$,
        if $\ell_\phi(n) = \ell_\phi(n') \in \Names$
        then $n = n'$, and
        \label{lemma:forest-nf:name-uniq}
  \item $\forall n \in N_\phi$,
        if $\ell_\phi(n) = A \in \Seq$
        then for each $x \in X_\phi \inters \freenames(A)$
          there exists $n' <_\phi n$ such that $\ell_\phi(n') = x$.
          \label{lemma:forest-nf:scoping}
\end{enumerate}
\end{lemma}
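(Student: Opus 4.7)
The plan is to construct a witness term $Q$ directly from the forest by structural recursion, verify $\forest(Q) = \phi$ by unfolding the definitions, and then rearrange $Q$ via structural congruence into the flat shape $Q_\phi$.

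First I would recursively define a term $Q$ on the structure of $\phi$: set $Q \is \zero$ when $\phi = \emptyforest$; for a decomposition $\phi = l_1[\phi'_1] \dunion \cdots \dunion l_k[\phi'_k]$ into its root subtrees, obtain $Q'_i$ from $\phi'_i$ recursively and set $Q \is R_1 \parallel \cdots \parallel R_k$, where $R_i$ is $\new x.Q'_i$ when $l_i = x \in \Names$ and simply $A$ when $l_i = A \in \Seq$. Note that condition~\ref{lemma:forest-nf:seq-leaf} guarantees $\phi'_i = \emptyforest$ in the latter case, so no information is lost. The identity $\forest(Q) = \phi$ then follows by a direct induction matching this recursion against the defining clauses of $\forest$.

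Next I would establish $Q \congr Q_\phi$ by iteratively applying scope extrusion $P \parallel \new a.R \congr \new a.(P \parallel R)$ to pull every binder to the top, followed by commutativity and associativity of $\parallel$ and $\new$ to collect the binders into a single $\new X_\phi$ prefix and the sequential subterms into one parallel composition. Condition~\ref{lemma:forest-nf:name-uniq} ensures that each name in $X_\phi$ labels exactly one node and therefore arises from a single $\new$, while condition~\ref{lemma:forest-nf:scoping} is the critical ingredient that legalises every extrusion: if a sequential process $A$ sitting at some leaf $n$ mentions a name $x \in X_\phi$ freely, then the unique node labelled $x$ is an ancestor of $n$ in $\phi$, so by construction $A$ already lies inside the syntactic scope of $\new x$ in $Q$, and extruding $\new x$ past its non-descendant siblings is therefore capture-free.

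I expect the main obstacle to be making the scope-extrusion step fully rigorous: one must pick a suitable order of extrusions (innermost-first is the natural choice) and verify that each intermediate term is in the shape demanded by the extrusion law, tracking carefully how the ancestor invariant of condition~\ref{lemma:forest-nf:scoping} is preserved as binders migrate outward. Everything else is inductive bookkeeping with the congruence laws and the defining clauses of $\forest$.
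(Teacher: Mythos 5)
Your proposal is correct and follows essentially the same route as the paper: a structural induction over the forest, with condition~\ref{lemma:forest-nf:name-uniq} supplying disjointness of the binder sets and condition~\ref{lemma:forest-nf:scoping} legalising each application of scope extrusion. The only organisational difference is that the paper interleaves the flattening into the induction itself (the inductive hypothesis already yields $Q_i \congr Q_{\phi_i}$ in flat form, so each step extrudes only one layer), which dissolves the extrusion-ordering concern you flag at the end.
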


\begin{proof}
  \ifshortversion
    By induction on the structure of $\phi$.

  \else
    \newcommand{\ALLCOND}{%
  \ref{lemma:forest-nf:seq-leaf},
  \ref{lemma:forest-nf:name-uniq} and
  \ref{lemma:forest-nf:scoping}
}
We proceed by induction on the structure of $\phi$.
The base case is when $\phi = \emptyforest$,
for which we have $Q_\phi = \zero$ and $\phi = \forest(\zero)$.

When $\phi = \phi_0 \dunion \phi_1$ we have that
if conditions \ALLCOND hold for $\phi$,
they must hold for $\phi_0$ and $\phi_1$ as well,
hence we can apply the induction hypothesis to them obtaining
$\phi_i\forest(Q_i)$ with $Q_i \congr Q_{\phi_i}$ ($i \in \set{0,1}$).
We have $\phi = \forest(Q_0 \parallel Q_1)$ by definition of $\forest$,
and we want to prove that $Q_0 \parallel Q_1 \congr Q_{\phi}$.
By condition \ref{lemma:forest-nf:name-uniq} on $\phi$,
$X_{\phi_0}$ and $X_{\phi_1}$ must be disjoint;
furthermore, by condition \ref{lemma:forest-nf:scoping}
on both $\phi_0$ and $\phi_1$ we can conclude that
$\freenames(Q_{\phi_i}) \inters X_{\phi_{1-i}} = \emptyset$.
We can therefore apply scope extrusion:
$Q_0 \parallel Q_1
 \congr Q_{\phi_0} \parallel Q_{\phi_1}
 \congr \new X_{\phi_0} X_{\phi_1}.(P_{\phi_0} \parallel P_{\phi_1})
 = Q_\phi$.

The last case is when $\phi = l[\phi']$.
Suppose conditions \ALLCOND hold for $\phi$.
We distinguish two cases.
If $l = A \in \Seq$, by \ref{lemma:forest-nf:seq-leaf}
we have $\phi' = \emptyforest$, $\phi = \forest(A)$ and $A = Q_{\phi}$.
If $l = x \in \Names$ then we observe that conditions \ALLCOND hold for $\phi'$
under the assumption that they hold for $\phi$.
Therefore $\phi' = \forest(Q')$ with $Q' \congr Q_{\phi'}$,
and, by definition of $\forest$, $\phi = \forest(\new x.Q')$.
By condition \ref{lemma:forest-nf:name-uniq} we have $x \not\in X_{\phi'}$ so
$\new x.Q'
 \congr \new x. Q_{\phi'}
 \congr \new (X \union \set{x}).P_{\phi'}
 = Q_\phi$.

  \fi
\end{proof}

%!TEX root = main.tex
\section{The notion of \tcompat[ibility]}
\label{sec:t-compat}

In this section we will introduce the concept of \emph{\tcompat[ibility]},
which is a central tool in our constructions.
First we will introduce types, which annotate names,
and postulate that they are arranged as a forest $(\Types, \parent)$.
Intuitively, by annotating names with types we impose a hierarchy on them,
and \tcompat[ibility] of a term $P$ will mean
that the structure of $P$ respects this hierarchy.

For the rest of the paper we will fix a
  \emph{finite forest of base types} $(\Types, \parent)$ where $n_1 \parent n_2$ means that ``$n_1$ is the parent of $n_2$''.
We write $\tleq$ and $\tlt$ for the
reflexive transitive and the transitive closure
of $\parent$, respectively.

Types are of the form
% $\gramm{
\begin{grammar}
    \type \is t \mid t[\type]
\end{grammar}
% }$
where $t \in \Types$ is a base type.
A name with type $t$ cannot be used as a channel but can be used as a message;
a name with type $t[\type]$ can be used to transmit a name of type $\type$.
We will write $\base(\type)$ for $t$ when $\type = t[\type']$ or $\type = t$.
Note that these are (a fragment of) the I/O-types in the sense of Pierce and Sangiorgi \cite{PierceS93}.
%%%
An environment $\Env$ is a partial map from names to types,
which we will write as a set of \emph{type assignments}, $x \tas \type$.
Given a set of names $X$ and an environment $\Env$,
we write $\Env(X)$ for the set $\set{\Env(x) | x \in X \inters \domain(\Env)}$.
Given two environments $\Env$ and $\Env'$ with
  $\domain(\Env)\inters\domain(\Env') = \emptyset$,
we write $\Env \Env'$ for their union.
For a type environment $\Env$ we define
$\minrestr(\Env) \is \set{(x \tas \type) \in \Env |
    \forall (y \tas \type') \in \Env \st \base(\type') \not\tlt \base(\type)}$.

\iflongversion
\begin{figure*}
  \centering
  \tikzsetnextfilename{example-forests}
  \begin{tikzpicture}
    \makeatletter
    \def\forestname#1#2{\def\@currentlabel{#1}\phantomsection\label{forest:#2}}
    \makeatother
    \forestname{1}{broom}
    \forestname{2}{cab}
    \forestname{3}{min-abc}
    \forestname{4}{min-ab-c}
    \forestname{5}{min-ba-c}
    %%%%%%%%%%%%%%%%%%%%%%%%%%%%%
\makeatletter
\def\forestname#1#2{
  #1\def\@currentlabel{#1}\phantomsection\label{forest:#2}
}
\makeatother
%%%%%%%%%%%%%%%%%%%%%%%%%%%%%
\ensuretikzpicturebegin
\begin{scope}[
  AST,
  forests distance=3.4
]
\def\newforest{++(\forestsdist,0) node}
\def\bratio{.43}

\path

\newforest (broom) {$a$}
  child[sibling distance=6mm] { node {$b$}
    child { node {$c$}
      child { node {$A_1$}}
      child { node {$A_2$}}
      child { node {$A_3$}}
      child { node {$A_4$}}
    }
  }

\newforest (cab) {$c$}
  child { node {$a$}
    child { node {$A_1$}}
    child { node {$b$}
      child { node {$A_2$}}
      child { node {$A_4$}}
    }
    child { node {$A_3$}}
  }

\newforest (min-abc) {$a$}
  child { node {$A_1$}}
  child { node {$b$}
    child { node {$A_2$}}
    child { node {$c$}
      child { node {$A_3$}}
    }
    child { node {$A_4$}}
  }
  child[missing] {}

++(-.2,0)
\newforest {$a$}
  child { node {$A_1$}}
  child { node {$b$}
    child { node {$A_2$}}
    child { node {$A_4$}}
  }
  child[missing] {}

++(.2,0) coordinate (min-ab-c) +(.8,0)
node {$c$}
    child { node {$A_3$}}

++(-.2,0)
\newforest {$b$}
  child { node {$A_2$}}
  child { node {$a$}
    child { node {$A_1$}}
    child { node {$A_4$}}
  }
  child[missing] {}

++(.2,0) coordinate (min-ba-c) +(.8,0)
node {$c$}
    child { node {$A_3$}}

;

\path[draw=black!20, rounded corners]
  foreach[count=\i] \n in {broom,cab,min-abc,min-ab-c,min-ba-c}{
    (\n) ++(-\bratio*\forestsdist,.3)
      node[circle,inner sep=1pt, fill=white, draw] at +(.8ex,-.8ex) {\forestname{\i}{\n}}
      rectangle +(2*\bratio*\forestsdist,-2.4)
  }
;
\end{scope}
\ensuretikzpictureend
  \end{tikzpicture}
  \caption{%
    Examples of forests in $\AST{P}$ of Example~\ref{ex:tied-to}:
    $ P = \new a\:b\:c. ( A_1 \parallel A_2 \parallel A_3 \parallel A_4 ) $
    where
      $A_1 = \inp a(x)$,
      $A_2 = \inp b(x)$,
      $A_3 = \inp c(x)$ and
      $A_4 = \out a<b>$.%
  }
  \label{fig:forests}
\end{figure*}
\fi

From now on, we will assume every \piterm\ is annotated with types:
in a restriction $\restr X$, $X$ is a set of type assignments.

\begin{definition}[Annotated term]
\label{def:annot-term}
  A \emph{$\Types$-annotated \piterm} (or simply \emph{annotated \piterm}) $P \in \PiAnnot$
  has the same syntax as regular \piterm{s}
  except restrictions take the form $\restr x\tas \type$.
  The semantics is the same, except type annotations get copied
  when a name is duplicated or renamed by structural congruence.
  The definition of forest representation is also extended
  to annotated \piterm{s} by changing the case
  when $Q = \new x\tas \type.Q'$
  to $(x, t)[\forest(Q')]$, where $\base(\type) = t$.
  The forests in $\AST{P}$ will thus have labels in
  $(\actrestr(P) \times \Types) \dunion \seqproc(P)$.
  We write $\Forests_{\Types}$ for the set of forests with labels in
    $(\Names \times \Types) \dunion \Seq$.
  The set $\PiNfAnnot$ contains all the annotated \piterm{s} in normal form.
\end{definition}

Given a normal form $P = \new X. \Parallel_{i \in I} A_i$ we say that
$A_i$ is \emph{linked to $A_j$ in $P$}, written $i \linkedto{P} j $, if
$\freenames(A_i) \inters \freenames(A_j) \inters \set{x | (x \tas \type) \in X}
    \neq \emptyset$.
We also define the \emph{tied-to} relation as
the transitive closure of $\linkedto{P}$.
I.e.~$A_i$ is \emph{tied to} $A_j$, written $i \tiedto{P} j$, if
%there exists an $A_k$ such that $A_k$ is linked to $A_i$ and tied to $A_j$.
$\exists k \in I \st i \linkedto{P} k \, \wedge \, k \tiedto{P} j$.
Furthermore, we say that a name $y$ is \emph{tied to $A_i$ in $P$},
written $y \ntiedto{P} i$,
%if there exists an $A_j$ tied to $A_i$ with $y \in \freenames(A_j)$.
if $\exists j \in I \st y \in \freenames(A_j) \, \wedge \, j \tiedto{P} i$.
Given an input-prefixed normal form $\inp a(y). P$
where $P = \new X. \Parallel_{i \in I} A_i$,
we say that \emph{$A_i$ is migratable in $\inp a(y). P$},
written $\migr{\inp a(y).P}(i)$, if
%$x$ is tied to $A_i$ in $\new X. \Parallel_{i \in I} A_i$.
$y \ntiedto{P} i$.

The tied-to relation may seem obscure at first.
Its meaning is better explained by the following lemma
which indicates how this relation fundamentally constrains the possible shape of the forest of a term.

\begin{lemma}
\label{lemma:tied-tree}
  % If $A_i$ is tied to $A_j$ in $P = \new X.\Parallel_{i\in I} A_i \in \PiNfAnnot$
  Let $P = \new X.\Parallel_{i\in I} A_i \in \PiNfAnnot$, if $i \tiedto{P} j$
  then any forest $\phi \in \AST{P}$ containing two leaves
  labelled with $A_i$ and $A_j$ respectively,
  will be such that these leaves belong to the same tree
  (i.e.~have a common ancestor in $\phi$).
\end{lemma}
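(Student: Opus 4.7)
The plan is to prove this by induction on the length of the chain witnessing $i \tiedto{P} j$ in the transitive closure of the link relation, reducing the result to a single direct fact about the scoping condition in forest representations.

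First I would fix $\phi \in \AST{P}$ and observe that, since $\phi = \forest(Q)$ for some $Q \congr P$ with $Q$ satisfying the name uniqueness assumption, $\phi$ enjoys the three properties of Lemma~\ref{lemma:forest-nf}. In particular, every name in $X$ appears as the label of exactly one interior node of $\phi$ (from conditions~\ref{lemma:forest-nf:seq-leaf} and~\ref{lemma:forest-nf:name-uniq}), and every leaf labelled by some $A \in \seqproc(P)$ is a descendant of the node labelled $x$ whenever $x$ is a restricted free name of $A$ (condition~\ref{lemma:forest-nf:scoping}).

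For the base case, assume $i \linkedto{P} j$, so there is some restricted name $x$ with $(x\tas\type) \in X$ and $x \in \freenames(A_i) \inters \freenames(A_j)$. Applying the scoping condition to the two leaves labelled $A_i$ and $A_j$, each has an ancestor labelled $x$; by name uniqueness in $\phi$ this ancestor is the same node, so the two leaves lie in the same tree of $\phi$ and share an ancestor.

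For the inductive step, write the chain witnessing $i \tiedto{P} j$ as $i = i_0 \linkedto{P} i_1 \linkedto{P} \cdots \linkedto{P} i_n = j$. By the base case, consecutive pairs $A_{i_k}, A_{i_{k+1}}$ share a common ancestor and therefore belong to the same tree of $\phi$. Since belonging to the same tree of a forest is transitive (two nodes in the same tree share the root of that tree as an ancestor), $A_i$ and $A_j$ lie in the same tree as well. The only mild subtlety — and the main obstacle to watch for — is confirming that $\forest(Q)$ really does satisfy condition~\ref{lemma:forest-nf:scoping} for the given labels; this follows directly from the definition of $\forest$ together with the fact that $Q$, being congruent to a normal form, respects the standard scoping discipline of the \picalc.
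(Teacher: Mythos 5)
Your proof is correct and follows essentially the same route as the paper's: reduce to the single $\linkedto{P}$ step by induction on the length of the chain, and for that base case observe that both $A_i$ and $A_j$ must lie in the scope of the shared restricted name in every congruent term, so by definition of $\forest$ their leaves descend from the node labelled with that name. Your detour through the conditions of Lemma~\ref{lemma:forest-nf} is a mild repackaging of the same scoping fact the paper invokes directly, and the subtlety you flag at the end is handled in the paper at the same (informal) level of rigour.
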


\iflongversion
\begin{proof}
  We show that the claim holds in the case where $A_i$ is linked to $A_j$ in $P$.
From this, a simple induction over the length of linked-to steps required
to prove $i \tiedto{P} j$, can prove the lemma.

Suppose $i \linkedto{P} j$.
Let $Y = \freenames(A_i)
           \inters
         \freenames(A_j)
           \inters
         \set{x | (x \tas \type) \in X}$,
we have $Y \neq \emptyset$.
Both $A_i$ and $A_j$ are in the scope
of each of the restrictions bounding names $y \in Y$ in any
of the processes $Q$ in the congruence class of $P$,
hence, by definition of $\forest$, the nodes labelled with $A_i$ and $A_j$
generated by $\forest(Q)$ will have nodes labelled with $(y, \base(X(y)))$
as common ancestors.
\end{proof}

\begin{example}
\label{ex:tied-to}
  Take the normal form
    $ P = \new a\:b\:c. ( A_1 \parallel A_2 \parallel A_3 \parallel A_4 ) $
    where
      $A_1 = \inp a(x)$,
      $A_2 = \inp b(x)$,
      $A_3 = \inp c(x)$ and
      $A_4 = \out a<b>$.
  We have $1 \linkedto{P} 4$, $2 \linkedto{P} 4$,
  therefore $1 \tiedto{P} 2 \tiedto{P} 4$ and $a \ntiedto{P} 2$.
  In Figure~\ref{fig:forests} we show some of the forests in $\AST{P}$.
  Forest~\ref{forest:broom} represents $\forest(P)$.
  The fact that $A_1, A_2$ and $A_4$ are tied is reflected by the fact
  that none of the forests place them in disjoint trees.
  Now suppose we select only the forests in $\AST{P}$
  that have $a$ as an ancestor of $b$: in all the forests in this set,
  the nodes labelled with $A_1, A_2$ and $A_4$ have $a$ as common ancestor
  (as in forests~\ref{forest:broom},
                 \ref{forest:cab},
                 \ref{forest:min-abc} and
                 \ref{forest:min-ab-c}).
  In particular, in these forests $A_2$ is necessarily a descendent of $a$
  even if $a$ is not one of its free names.
\end{example}
\fi

\begin{definition}[{\tcompat[ibility]}]
\label{def:t-compat}
  Let $P \in \PiAnnot$ be an annotated \piterm.
  A forest $\phi \in \AST{P}$ is said to be \emph{\tcompat} if
  for every trace
    $((x_1, t_1) \dots (x_k, t_k) \: A)$
  in $\phi$ it holds that
    $t_1 \tlt t_2 \dots \tlt t_k$.
  $P$ is said to be \emph{\tcompat} if there exists
  a \tcompat\ forest in $\AST{P}$.
  %%%
  A term is \emph{\tshaped}\ if each of its subterms is \tcompat.
\end{definition}

\begin{figure*}[!t]
  \input{definitions/phi}
  \caption[Definition of $\Phi_\Types$]{%
    Definition of $\Phi_\Types \from \PiNfAnnot \to \Forests_\Types$.%
  }
  \label{fig:phi}
\end{figure*}

\begin{example}
  \label{ex:types}
  Let us fix $\Types$ to be the forest
    $\ty{s} \parent \ty{c} \parent \ty{m} \parent \ty{d}$.
  The normal form in Example~\ref{ex:servers} is \tcompat{}
  when $s$ and $c$ are annotated with types $\type_s$ and $\type_c$
  respectively, with $\base(\type_s) = \ty{s}$ and $\base(\type_c)=\ty{c}$;
  indeed we have
    $\forest(\new (s\tas\type_s)\:(c\tas\type_c).P)
      = (s, \ty{s})\bigl[
          (c, \ty{c})[\:
            \bang{S}[] \dunion \bang{C}[] \dunion \bang{M}[]
          \:]
        \bigr]$.
  By annotating $m$ and $d$ with types with base type $\ty{m}$ and $\ty{d}$
  respectively, the term is also \tshaped.
\end{example}

Since \tcompat[ibility] is a condition on types,
\pre\alpha-renaming does not interfere with it.

\begin{lemma}
\label{lemma:tcompat-alpha}
  If $\forest(P)$ is \tcompat\ then
  for any term $Q$ which is an \pre\alpha-renaming of $P$,
  $\forest(Q)$ is \tcompat.
\end{lemma}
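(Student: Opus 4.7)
The plan is to unfold the definitions and observe that \tcompat[ibility] depends only on the base-type component of node labels, which \pre\alpha-renaming leaves untouched.

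First I would recall from Definition~\ref{def:annot-term} that type annotations are preserved under the structural laws: when \pre\alpha-renaming replaces a bound name $x$ by a fresh $x'$ in a restriction $\restr x \tas \type$, the annotation $\type$ is copied unchanged, so the restriction becomes $\restr x' \tas \type$, and every occurrence of $x$ in the body is replaced by $x'$ without altering types. Consequently, the forest $\forest(Q)$ is obtained from $\forest(P)$ by replacing, at each restriction node, the label $(x, \base(\type))$ with $(x', \base(\type))$, while leaf labels $A$ may differ only up to \pre\alpha-renaming of their free names (not types, since types do not appear in sequential subterms' syntax).

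Next I would argue that there is a trace-preserving bijection between $\forest(P)$ and $\forest(Q)$ on nodes: identity on tree shape, and on each internal node $(x_i, t_i) \mapsto (x'_i, t_i)$. Hence every trace $((x_1, t_1) \dots (x_k, t_k) \: A)$ of $\forest(P)$ corresponds to a trace $((x'_1, t_1) \dots (x'_k, t_k) \: A')$ of $\forest(Q)$ with exactly the same sequence of base types $t_1, \dots, t_k$. Since Definition~\ref{def:t-compat} requires only that $t_1 \tlt t_2 \tlt \dots \tlt t_k$, the \tcompat[ibility] condition transfers verbatim from $\forest(P)$ to $\forest(Q)$.

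There is no real obstacle: the only thing to verify carefully is that the forest construction in Definition~\ref{def:annot-term} commutes with \pre\alpha-renaming, which is immediate from the inductive clauses of $\forest$ (the renaming case only affects the first component of the label pair at a restriction node, and preserves the shape of the recursion on the body). A one-line induction on the structure of $P$ formalises this commutation, after which the conclusion follows.
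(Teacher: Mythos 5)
Your proof is correct and follows exactly the reasoning the paper relies on: the paper gives no explicit proof, only the one-line remark that \tcompat[ibility] is a condition on types and \pre\alpha-renaming leaves type annotations untouched, which is precisely the observation you formalise via the trace-preserving bijection on forest nodes. Nothing further is needed.
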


\begin{lemma}
\label{lemma:tcompat-takeout}
  Let $P = \new X.\Parallel_{i \in I} A_i$ be a \tcompat\ normal form,
  $Y \subseteq X$ and $J \subseteq I$.
  Then $P' = \new Y.\Parallel_{j \in J} A_j$ is \tcompat.
\end{lemma}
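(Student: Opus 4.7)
The plan is to obtain a $\Types$-compatible forest for $P'$ by pruning one for $P$. Since $P$ is $\Types$-compatible, fix some $\phi \in \AST{P}$ whose every trace $(x_1,t_1)\cdots(x_k,t_k)\,A$ satisfies $t_1 \tlt \cdots \tlt t_k$. Construct $\phi'$ from $\phi$ by two deletions: (a) remove every leaf labelled $A_i$ with $i \in I \setminus J$; and (b) remove every node labelled $(x,t)$ with $x \in X \setminus Y$, promoting its children to become children of its former parent (or new roots if it had none). The outcome is still a forest, whose labels lie in $(Y \times \Types) \dunion \set{A_j \mid j \in J}$.

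Next, I would show $\phi' \in \AST{P'}$ by applying Lemma~\ref{lemma:forest-nf} to $\phi'$. Conditions \ref{lemma:forest-nf:seq-leaf} and \ref{lemma:forest-nf:name-uniq} are inherited from $\phi$ because only deletions were performed. For condition \ref{lemma:forest-nf:scoping}, consider a remaining leaf $A_j$ and a free name $y \in Y \cap \freenames(A_j)$: since $y \in X$, the same condition applied to $\phi$ supplies an ancestor of $A_j$ in $\phi$ labelled $y$, and since $y \in Y$ this node survives step (b) and remains an ancestor after reparenting. The lemma then gives $\phi' = \forest(Q)$ for some $Q \congr Q_{\phi'} = \new Y.\Parallel_{j \in J} A_j = P'$, so $\phi' \in \AST{P'}$.

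It remains to check that $\phi'$ is $\Types$-compatible. Each trace $(y_1,t_1)\cdots(y_\ell,t_\ell)\,A_j$ of $\phi'$ arises from a trace $(x_1,t'_1)\cdots(x_k,t'_k)\,A_j$ of $\phi$ by deleting exactly the entries whose name lies in $X \setminus Y$, so $(t_1,\ldots,t_\ell)$ is a subsequence of $(t'_1,\ldots,t'_k)$. Since $t'_1 \tlt \cdots \tlt t'_k$ and $\tlt$ is transitive, the subsequence is strictly $\tlt$-increasing as well, establishing $\Types$-compatibility of $\phi'$ and therefore of $P'$.

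The only delicate step is verifying that the scoping condition of Lemma~\ref{lemma:forest-nf} survives the deletion of internal name-nodes in step (b); the hypothesis $Y \subseteq X$ is exactly what makes this work, since any ancestor node that a surviving leaf still requires is labelled with a name in $Y$ and is therefore never removed.
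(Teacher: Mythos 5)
Your proposal is correct and follows essentially the same route as the paper's proof: prune the witness forest by deleting the discarded leaves and the name-nodes for $X \setminus Y$ (reparenting their children), observe that \tcompat[ibility] survives by transitivity of $\tlt$, and recover a term congruent to $P'$ via Lemma~\ref{lemma:forest-nf}. Your write-up is in fact somewhat more explicit than the paper's in checking the three conditions of Lemma~\ref{lemma:forest-nf}; the only detail the paper adds is an appeal to Lemma~\ref{lemma:tcompat-alpha} to assume the chosen forest's labels match the names and sequential terms of $P$ without \pre\alpha-renaming, which your argument implicitly assumes.
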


\iflongversion
\begin{proof}
    Take a \tcompat\ forest $\phi \in \AST{P}$.
  By Lemma~\ref{lemma:tcompat-alpha} we can assume without loss of generality
  that $\phi = \forest(Q)$ where proving $Q \congr P$
  does not require \pre\alpha-renaming.
  Clearly, removing the leaves that do not correspond to sequential terms
  indexed by $Y$ does not affect the \tcompat[ibility] of $\phi$.
  Similarly, if a restriction $(x\tas\type)\in X$ is not in $Y$,
  we can remove the node of $\phi$ labelled with $(x, \base(\type))$
  by making its parent the new parent of its children.
  This operation is unambiguous under \mbox{\ref{nameuniq}}
  and does not affect \tcompat[ibility], by transitivity of $\tlt$.
  We then obtain a forest $\phi'$ which is \tcompat\ and that,
  by Lemma~\ref{lemma:forest-nf},
  is the forest of a term congruent to the desired normal form $P'$.

\end{proof}
\fi

While many forests in $\AST{P}$ can be witnesses of the \tcompat[ibility] of $P$,
we want to characterise the shape of a witness that \emph{must} exist
if $P$ is \tcompat.
Such forest is identified by $\Phi_\Types(\nf(P))$ where
$\Phi_\Types \from \PiNfAnnot \to \Forests_\Types$
is the function defined in Figure~\ref{fig:phi}.
We omit the subscript when irrelevant or clear from the context.

\begin{example}
  In the run shown in Example~\ref{ex:servers}, after three steps we reach
  $Q = \new s\:c\:m\:d.(P \parallel \out m<d> \parallel \inp m(y).\out c<m>)$.
  The forest $\Phi_\Types(Q)$, when $\Types$ and types annotations are as in Example~\ref{ex:types}, is
  \begin{center}\small
  \tikzsetnextfilename{example-phi}
  \begin{tikzpicture}[scale=.8]
    \ensuretikzpicturebegin
\begin{scope}[AST]

\node (s) {$s$}
  child { node {$c$}
    child { node {$m$}
      [sibling distance=15mm]
      child { node {$\inp m(y).\out c<m>$} }
      child { node {$d$}
        child { node {$\out m<d>$} }
      }
    }
    child[missing] { }
    child { node {$\bang{S}$}}
    child { node {$\bang{C}$}}
    child { node {$\bang{M}$}}
  };

\end{scope}
\ensuretikzpictureend
  \end{tikzpicture}
  \end{center}
  where the nodes show only the name components of their labels for conciseness.
  Note how the scope of names is minimised while respecting \tcompat[ibility].

  Consider the term $P$ in Example~\ref{ex:tied-to}, with annotations
  $a \tas \ty{a}[\ty{b}[t]]$,
  $b \tas \ty{b}[t]$ and
  $c \tas \ty{c}[t']$.
  Forests \ref{forest:min-ab-c} and \ref{forest:min-ba-c}
  of Figure~\ref{fig:forests} represent $\Phi_\Types(P)$
  when $\Types$ is
    $\ty{a} \parent \ty{b}$ and
    $\ty{b} \parent \ty{a}$ respectively.
\end{example}

\begin{lemma}
\label{lemma:phi-tcompat}
  Let $P \in \PiNfAnnot$.
  Then:
  \begin{enumerate}[label=\alph*)]
    \item $\Phi_\Types(P)$ is a \tcompat\ forest;
          \label{lemma:phi-tcompat:tcompat}
    \item $\Phi_\Types(P) \in \AST{P}$ if and only if $P$ is \tcompat;
          \label{lemma:phi-tcompat:ast}
    \item if $P \congr Q \in \PiAnnot$ then
          $\Phi_\Types(P) \in \AST{Q}$ if and only if $Q$ is \tcompat.
          \label{lemma:phi-tcompat:congr}
  \end{enumerate}
\end{lemma}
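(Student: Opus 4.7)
The plan is to establish (a) by induction on the recursive structure of $\Phi_\Types$, (b) by combining (a) with an application of Lemma~\ref{lemma:forest-nf}, and (c) as a direct corollary of (b) together with the fact that \tcompat[ibility] depends only on the structural-congruence class.

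For (a), I would argue by induction on $|X|+|I|$. The base case $X=\emptyset$ yields $\Dunion_{i\in I}\{A_i[]\}$, whose traces contain no pair of type labels and are therefore vacuously \tcompat. In the inductive step, the output is the disjoint union of trees of the form $(x,\base(\type))[\Phi_\Types(\new Y_x.\Parallel_{j\in I_x}A_j)]$ for $x\tas\type\in\minrestr(X)$, together with the residual $\Phi_\Types(\new Z.\Parallel_{r\in R}A_r)$. The induction hypothesis supplies \tcompat[ibility] of the subforests, so the only new obligation is that attaching a root $(x,\base(\type))$ respects \tcompat[ibility]. The key point is that any immediate child of this root is either a sequential-term leaf or is labelled $(y,\base(\type'))$ with $y\tas\type'\in\minrestr(Y_x)$; since $x\in\minrestr(X)$, $y\in X\setminus\minrestr(X)$, and $y$ is tied to $x$ via some $A_i$ with $i\in I_x$, the base types are comparable in $\Types$ and in fact $\base(\type)\tlt\base(\type')$.

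For (b), the \emph{only if} direction follows immediately from (a): $\Phi_\Types(P)\in\AST{P}$ puts a \tcompat\ forest into $\AST{P}$, so $P$ is \tcompat. For the \emph{if} direction, assume $P=\new X.\Parallel_{i\in I}A_i$ is \tcompat\ and apply Lemma~\ref{lemma:forest-nf} to $\phi=\Phi_\Types(P)$: if the three side-conditions hold then $\phi=\forest(Q)$ for some $Q\congr P$, giving $\phi\in\AST{P}$. Conditions (\ref{lemma:forest-nf:seq-leaf}) and (\ref{lemma:forest-nf:name-uniq}) follow by an easy induction on the recursion: sequential-term labels appear only as leaves in the base case, and each restriction name enters the construction exactly once, via the disjoint partition of $I$ by the $I_x$'s and $R$ and of $X$ by the $Y_x$'s, the chosen roots and $Z$; \tcompat[ibility] of $P$ is what guarantees this disjointness, since otherwise a sequential term or restriction could be placed under two incomparable minimal names. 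The scoping condition (\ref{lemma:forest-nf:scoping}) is the delicate step: every restricted free name $x$ of a leaf $A_i$ must appear as an ancestor of that leaf. Using Lemma~\ref{lemma:tied-tree} together with \tcompat[ibility] of $P$, one shows that such $x$ sits on the root-to-leaf path of $A_i$ in $\phi$.

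Part (c) is then a direct corollary: $P\congr Q$ forces $\AST{P}=\AST{Q}$, and \tcompat[ibility] depends only on this set, so $P$ is \tcompat\ iff $Q$ is, and substituting into (b) yields (c). The main obstacle is the scoping step in the \emph{if} direction of (b): one must show by induction on the recursion of $\Phi_\Types$ that restricted free names end up on the expected root-to-leaf paths, while simultaneously using \tcompat[ibility] to preclude duplication of sequential-term leaves and of restriction names. Both arguments hinge on the tied-to machinery of Lemma~\ref{lemma:tied-tree}.
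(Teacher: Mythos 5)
Your overall architecture is reasonable, but the justification you give for part (a) has a genuine gap. You claim that when a root $(x,\base(\type))$ with $x\tas\type\in\minrestr(X)$ is attached above the recursively computed forest, every name $y\tas\type'$ labelling a node directly below it satisfies $\base(\type)\tlt\base(\type')$ ``because $y$ is tied to $x$ via some $A_i$, so the base types are comparable''. The tied-to relation is purely syntactic and carries no information about the forest $\Types$: two names can be tied while their base types sit at incomparable positions of $\Types$. Comparability only becomes available when $P$ is assumed \tcompat, via Lemma~\ref{lemma:tied-tree} (this is exactly how the paper deploys it, inside the disjointness argument of part (b)); but part (a) is stated for arbitrary $P\in\PiNfAnnot$, so that hypothesis is not at your disposal. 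Concretely, take $\Types$ with $t_1\parent t_2$ and a separate root $t_3$, and $P=\new X.(A_1\parallel A_2)$ where $X$ assigns base types $t_3$, $t_1$, $t_2$ to $a$, $b$, $c$ respectively, $\freenames(A_1)=\set{a,c}$ and $\freenames(A_2)=\set{b}$. Then $\minrestr(X)=\set{a,b}$, $c\in Y_a$, and $\Phi_\Types(P)$ contains the trace $(a,t_3)\,(c,t_2)\,A_1$ with $t_3\not\tlt t_2$. So the inequality your inductive step needs can actually fail; no amount of extra work closes it without importing \tcompat[ibility] of $P$ (or of the relevant subterm) as an assumption, and you should at minimum flag that your argument for (a) silently relies on it.

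On part (b) your route is genuinely different from the paper's: you verify the three side conditions of Lemma~\ref{lemma:forest-nf} directly on the whole forest $\Phi_\Types(P)$, whereas the paper inducts on $\card{X}$, uses Lemma~\ref{lemma:tcompat-takeout} to apply the induction hypothesis to the subterms $\new Y_x.\Parallel_{j\in I_x}A_j$ and $\new Z.\Parallel_{r\in R}A_r$, and reassembles a witness $Q\congr P$ by explicit scope extrusion. Both routes bottom out in the same key fact --- that the sets $I_x$ for $x\tas\type\in\minrestr(X)$ are pairwise disjoint (and likewise the $Y_x$), proved by contradiction from Lemma~\ref{lemma:tied-tree} together with \tcompat[ibility] --- and your sketch correctly identifies this and the scoping condition as the load-bearing steps. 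But you leave both at the level of ``one shows that'': the disjointness argument and the verification that every restricted free name of $A_i$ ends up on the root-to-leaf path of its leaf are precisely where the real work lies, so a complete write-up would have to spell them out. Part (c) matches the paper's one-line derivation from $\AST{P}=\AST{Q}$.
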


\iflongversion
\begin{proof}
  %!TEX root = ../main.tex
Item~\ref{lemma:phi-tcompat:tcompat} is an easy induction on the cardinality of $X$.

Item~\ref{lemma:phi-tcompat:ast} requires more work.
By item~\ref{lemma:phi-tcompat:tcompat} $\Phi(P)$ is \tcompat\ so
$\Phi(P) \in \AST{P}$ proves that $P$ is \tcompat.

To prove the $\Leftarrow$-direction we assume that
$P = \new X.\Parallel_{i\in I} A_i$ is \tcompat{}
and proceed by induction on the cardinality of $X$
to show that $\Phi(P) \in \AST{P}$.
The base case is when $X = \emptyset$:
$ \Phi(P) = \Phi(\Parallel_{i\in I} A_i)
          = \Dunion_{i\in I} \set{A_i[]}
          = \forest(\Parallel_{i\in I} A_i)
          = \forest(P) \in \AST{P}
$.
For the induction step,
we observe that $X \neq \emptyset$ implies $\minrestr(X) \neq \emptyset$ so,
$Z \subset X$ and
for each $(x \tas \type) \in \minrestr(X)$, $Y_x \subset X$
since $x \not\in Y_x$.
This, together with Lemma~\ref{lemma:tcompat-takeout},
allows us to apply the induction hypotesis on the terms
  $P_x = \new Y_x.\Parallel_{j \in I_x} A_j$ and
  $P_R = \new Z.\Parallel_{r\in R} A_r$,
obtaining that there exist terms $Q_x \congr P_x$ and $Q_R \congr P_R$ such that
$\forest(Q_x) = \Phi(P_x)$ and $\forest(Q_R) = \Phi(P_R)$ where
all the forests $\forest(Q_x)$ and $\forest(Q_R)$ are \tcompat.
Let
$ Q = \Parallel \set{\new (x \tas \type). Q_x | (x \tas \type) \in \minrestr(X)}
      \parallel Q_R
$,
then $\forest(Q) = \Phi(P)$.
To prove the claim we only need to show that $Q \congr P$.
We have
$ Q \congr \Parallel
      \set{ \new (x \tas \type).\new Y_x.\Parallel_{j \in I_x} A_j
            | (x \tas \type) \in \minrestr(X) }
      \parallel P_R
$
and we want to apply extrusion to get
$ Q \congr
      \new Y_{\min}.
      \left( \Parallel_{i \in I_{\min}} A_i \right)
      \parallel P_R
$
for
  $I_{\min} = \Dunion \set{I_x | (x \tas \type) \in \minrestr(X)}$,
  $Y_{\min} = \minrestr(X) \dunion \Dunion \set{Y_x | (x \tas \type) \in \minrestr(X)}$
which adds an obligation to prove that
\begin{enumerate}[label=\roman*)]
  \item $I_x$ are all pairwise disjoint so that $I_{\min}$ is well-defined,
        \label{phi-tcompat:Ix-disjoint}
  \item $Y_x$ are all pairwise disjoint and all disjoint from $\minrestr(X)$
        so that $Y_{\min}$ is well-defined,
        \label{phi-tcompat:Yx-disjoint}
  \item $Y_x \inters \freenames(A_j) = \emptyset$ for every $j \in I_z$
        with $z \neq x$ so that we can apply the extrusion rule.
        \label{phi-tcompat:Yx-Iz}
\end{enumerate}

To prove condition~\ref{phi-tcompat:Ix-disjoint},
assume by contradiction that there exists an $i \in I$
and names $x, y \in \minrestr(X)$ with $x \neq y$,
such that both $x$ and $y$ are tied to $A_i$ in $P$.
By transitivity of the tied-to relation, we have $I_x = I_y$.
By Lemma~\ref{lemma:tied-tree} all the $A_j$ with $j \in I_x$
need to be in the same tree in any forest $\phi \in \AST{P}$.
Since $P$ is \tcompat\ there exist such a $\phi$ which is \tcompat\ and
has every $A_j$ as label of leaves of the same tree.
This tree will include a node $n_x$ labelled with $(x, \base(X(x)))$
and a node $n_y$ labelled with $(y, \base(X(y)))$.
By \tcompat[ibility] of $\phi$
and the existence of a path between $n_x$ and $n_y$
we infer $\base(X(x)) < \base(X(y))$ or $\base(X(y)) < \base(X(x))$
which contradicts the assumption that $x, y \in \minrestr(X)$.

Condition~\ref{phi-tcompat:Yx-disjoint} follows from condition~\ref{phi-tcompat:Ix-disjoint}:
suppose there exists a $(z \tas \type) \in X \inters Y_x \inters Y_y$ for $x \neq y$,
then we would have that $z \in \freenames(A_i) \inters \freenames(A_j)$
for some $i \in I_x$ and $j \in I_y$,
% but then $A_i$ and $A_j$ would be tied in $P$,
but then $i \tiedto{P} j$,
meaning that $i \in I_y$ and $j \in I_x$ violating condition~\ref{phi-tcompat:Ix-disjoint}.
The fact that $Y_x \inters \minrestr(X) = \emptyset$
follows from the definition of $Y_x$.
The same reasoning proves condition~\ref{phi-tcompat:Yx-Iz}.

Now we have
$ Q \congr
      \new Y_{\min}.
      \left( \Parallel_{i \in I_{\min}} A_i \right)
      \parallel \new Z.\Parallel_{r\in R} A_r $
and we want to apply extrusion again to get
$ Q \congr
      \new Y_{\min} Z. \Parallel \set{A_i | i \in (I_{\min} \dunion R)}
$
which is sound under the following conditions:
\begin{enumerate}[resume*]
  \item $Y_{\min} \inters Z = \emptyset$,
  \item $I_{\min} \inters R = \emptyset$,
  \item $Z \inters \freenames(A_i) = \emptyset$ for all $i \not\in R$
\end{enumerate}
of which the first two hold trivially by construction,
while the last follows from condition~\ref{phi-tcompat:X-complete} below,
as a name in the intersection of $Z$ and a $\freenames(A_i)$
would need to be in $X$ but not in $Y_{\min}$.
To be able to conclude that $Q \congr P$ it remains to prove that
\begin{enumerate}[resume*]
  \item $I = I_{\min} \dunion R$ and
        \label{phi-tcompat:I-complete}
  \item $X = Y_{\min} \dunion Z$
        \label{phi-tcompat:X-complete}
\end{enumerate}
which are also trivially valid by inspection of their definitions.
This concludes the proof for item~\ref{lemma:phi-tcompat:ast}.

Finally, for every $Q \in \PiAnnot$ such that
$Q \congr P$, $\Phi(P) \in \AST{Q}$ if and only if $\Phi(P) \in \AST{P}$
by definition of $\AST{-}$;
since $\Phi(P)$ is \tcompat\ we can infer that
$Q$ is \tcompat{} if and only if $\Phi(P) \in \AST{Q}$,
which proves item~\ref{lemma:phi-tcompat:congr}.
\end{proof}
\fi

\begin{lemma}
\label{lemma:phi-tied}
  Let $P = \new X . \Parallel_{i \in I} A_i \in \PiNfAnnot$ be a \tcompat\ normal form.
  Then for every trace
    $((x_1, t_1) \dots (x_k, t_k) \: A_j)$
  in the forest $\Phi(P)$,
  for every $i \in \set{1, \ldots, k}$, we have $x_i \ntiedto{P} j$ (i.e.
  $x_i$ is tied to $A_j$ in $P$).
\end{lemma}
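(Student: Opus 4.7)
The plan is to proceed by induction on $|X|$, where $P = \new X. \Parallel_{i \in I} A_i$. The base case $X = \emptyset$ is vacuous: by Figure~\ref{fig:phi}, $\Phi(P)$ consists of the isolated leaves $A_i[]$, so every trace has empty name-prefix and the claim holds trivially. For the inductive step, $X \neq \emptyset$ implies $\minrestr(X) \neq \emptyset$ (minimal elements always exist in a finite forest of base types), so $\Phi(P)$ decomposes into the disjoint union described in Figure~\ref{fig:phi}. A trace $((x_1, t_1) \dots (x_k, t_k) \: A_j)$ therefore falls into one of two cases: either (a) it is entirely a trace of $\Phi(\new Z . \Parallel_{r \in R} A_r)$, or (b) it starts at a root $(x, \base(\type))$ arising from some $(x \tas \type) \in \minrestr(X)$, with the suffix $((x_2, t_2) \dots (x_k, t_k) \: A_j)$ being a trace of $\Phi(P_x)$ where $P_x = \new Y_x . \Parallel_{j' \in I_x} A_{j'}$.

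The central auxiliary observation is a monotonicity property for the tied-to relation: if $P' = \new X' . \Parallel_{i \in I'} A_i$ with $X' \subseteq X$ and $I' \subseteq I$ (re-using the same sequential components $A_i$ from $P$), then $i \linkedto{P'} k$ implies $i \linkedto{P} k$ for $i, k \in I'$, since any common free name in $X'$ witnessing the link lies also in $X$. Taking transitive closures, $\tiedto{P'} \subseteq \tiedto{P}$, and consequently $y \ntiedto{P'} j$ implies $y \ntiedto{P} j$ whenever $j \in I'$.

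With this in hand, case (a) follows because $\new Z . \Parallel_{r \in R} A_r$ is strictly smaller than $P$ (as $Z \subseteq X \setminus \minrestr(X)$ and $\minrestr(X) \neq \emptyset$): the induction hypothesis gives $x_i$ tied to $A_j$ in this subterm, and monotonicity lifts the conclusion to $P$. In case (b), $x_1 = x$ and necessarily $j \in I_x$ (the subtree under $(x, \base(\type))$ contains only indices from $I_x$), so $x_1 \ntiedto{P} j$ follows immediately from the definition of $I_x$. For the remaining names, $P_x$ is strictly smaller than $P$ (because $x \in X$ but $x \notin Y_x$, since $Y_x$ explicitly excludes $\minrestr(X)$), so the induction hypothesis yields $x_i \ntiedto{P_x} j$ for $i \geq 2$, which again lifts to $x_i \ntiedto{P} j$ by monotonicity.

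The main technical obstacle is formulating and justifying the monotonicity lemma cleanly, since the tied-to relation is defined relative to a specific normal form and its binding set; once in place, the rest is routine bookkeeping on the recursive definition of $\Phi$. A minor but essential point is the well-foundedness of the induction in case (b), which hinges precisely on the observation $x \notin Y_x$.
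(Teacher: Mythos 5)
Your proof is correct and is essentially an expanded version of the paper's argument, which simply observes that, by definition of $I_x$, the subtree rooted at a node labelled $(x,t)$ is built from a recursive call on exactly the sequential terms tied to $x$; your induction on $\card{X}$ and the explicit monotonicity step for $\tiedto{}$ supply the details the paper leaves implicit. (To apply the induction hypothesis to the subterms as literally stated you should note that they remain \tcompat\ --- which is Lemma~\ref{lemma:tcompat-takeout} --- or observe that the \tcompat\ hypothesis plays no role in the structural argument.)
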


\begin{proof}
  Straightforward from the definition of $I_x$ in $\Phi$:
  when a node labelled by $(x,t)$ is introduced,
  its subtree is extracted from a recursive call on a term
  that contains all and only the sequential terms that are tied to $x$.
\end{proof}

\iflongversion
\begin{remark}
\label{remark:phi-nf}
  $\Phi(P)$ satisfies conditions
    \ref{lemma:forest-nf:seq-leaf},
    \ref{lemma:forest-nf:name-uniq} and
    \ref{lemma:forest-nf:scoping}
  of Lemma~\ref{lemma:forest-nf}.
\end{remark}
\fi

It is clear from the definition that if a \piterm{} $P$ is \tcompat{}
then $\depth(P)$ is bounded by the length of the longest strictly increasing chain in \Types; since \Types\ is assumed to be finite, the bound on the depth is finite.

\begin{proposition}
\label{prop:t-compat-db}
  Let $\Types$ be a forest and $P$ an annotated \piterm.
  If every $Q \in \reach(P)$ is \tcompat,
  then $P$ is depth-bounded.
\end{proposition}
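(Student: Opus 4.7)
The plan is to show that any \tcompat{} term has depth bounded by $\height(\Types)$, and then deduce the bound for $P$ uniformly over $\reach(P)$.

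First, I would recall the identity noted immediately after Definition~\ref{def:forest-repr}: for any $Q \in \PiTerms$,
\[
  \depth(Q) \;=\; \min\set{\height_\restr(\phi) | \phi \in \AST{Q}}.
\]
So to bound $\depth(Q)$ it suffices to exhibit some $\phi \in \AST{Q}$ whose restriction height is small. If $Q$ is \tcompat, Definition~\ref{def:t-compat} gives us precisely such a witness: there exists $\phi \in \AST{Q}$ such that for every trace $((x_1,t_1)\dots(x_k,t_k)\,A)$ in $\phi$ one has $t_1 \tlt t_2 \tlt \cdots \tlt t_k$.

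Next, I would bound $\height_\restr(\phi)$ for this witness. Any path of nodes in $\phi$ labelled by restriction names corresponds, by definition of $\AST{\cdot}$ on annotated terms, to a sequence $(x_1,t_1)\dots(x_k,t_k)$ (ending either at a sequential leaf or at a root, which is covered by taking the trace up to the deepest restriction node). By \tcompat[ibility] the base types form a strictly increasing chain in $(\Types, \parent)$. Since $\Types$ is a finite forest, the length of any such chain is at most $\height(\Types)$. Therefore $\height_\restr(\phi) \leq \height(\Types)$, and consequently
\[
  \depth(Q) \;\leq\; \height(\Types).
\]

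Finally, set $k \is \height(\Types) \in \Nat$. By hypothesis every $Q \in \reach(P)$ is \tcompat, so the bound above applies uniformly: $\depth(Q) \leq k$ for every reachable $Q$, which is exactly the definition of $P$ being depth-bounded (Definition~\ref{def:depth-bounded}).

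There is no real obstacle here; the argument is essentially a repackaging of Definition~\ref{def:t-compat} together with the finiteness of $\Types$. The only mildly delicate point is ensuring that the witness $\phi$ can be used directly without passing through $\nf$ and $\Phi_\Types$: this is fine because \tcompat[ibility] of $Q$ is defined via the existence of a forest in $\AST{Q}$ (not via $\Phi_\Types$), so we do not need to invoke Lemma~\ref{lemma:phi-tcompat} at all in this proposition.
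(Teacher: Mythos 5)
Your proof is correct and follows the same route as the paper, which justifies the proposition only by the one-line remark preceding it (a \tcompat\ term has depth bounded by the longest strictly increasing chain in $\Types$, hence by $\height(\Types)$, and finiteness of $\Types$ makes this a uniform bound over $\reach(P)$). Your write-up merely spells out the details via the identity $\depth(Q) = \min\set{\height_\restr(\phi) \mid \phi \in \AST{Q}}$ and the witness forest from Definition~\ref{def:t-compat}, which is exactly the intended argument.
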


\begin{example}
  Fix $\Types$ to be the forest
    $\ty{n} \parent \ty{v} \parent \ty{s} \parent \ty{a}$
  and take the term of Example~\ref{ex:stack} annotating it with types
  such that the base types of the names $n, v, s, a$ and $b$ are
  $\ty{n}, \ty{v}, \ty{s}, \ty{a}$ and $\ty{a}$ respectively.
  The term $\new n\:v\:s\:a.( \bang{S} \parallel \out s<a> )$ is \tcompat,
  but the term
    $Q = \new n\:v\:s\:a\:b\:b'.(
      \bang{S} \parallel
      (\out v<b>.\out n<a>) \parallel
      (\out v<b'>.\out n<b>) \parallel
      \out s<b'>
    )$,
  reachable from it, is not:
  $b$ and $b'$ have the same base type $\ty{a}$ but need to be in the same trace
  in any forest of $\AST{Q}$.
  As we have shown in Example~\ref{ex:unbounded},
  this term is not bounded in depth, so there cannot be any finite $\Types$
  such that every reachable term is \tcompat.
\end{example}

%!TEX root = main.tex
\section{A type system for hierarchical topologies}
\label{sec:typesys}

We now define a type system to prove depth boundedness.
Our goal is to use Proposition~\ref{prop:t-compat-db}
by devising a type system, parametrised over $\Types$,
such that typability implies invariance of \tcompat[ibility] under reduction.
Typability of a \tcompat\ term $P$ would then imply that
every term reachable from it is \tcompat, entailing depth boundedness of $P$.

A judgement $\Env\types P$ means that $P \in \PiNfAnnot$ can be typed under assumptions $\Env$, over the tree \Types;
we say that $P$ is \emph{typable} if $\Env\types P$ is provable for some $\Env$ and $\Types$.
An arbitrary term $P \in \PiAnnot$ is said to be \emph{typable} if its normal form is.
The typing rules are presented in Figure~\ref{fig:typesys}.

\begin{figure*}[!t]
  \centering
  \input{definitions/typesys}
  \caption{
    A type system for proving depth boundedness.
    The term $P$ stands for $\new X.\protect\Parallel_{i \in I} A_i$.
  }
  \label{fig:typesys}
\end{figure*}

The type system presents several non-standard features.
First, it is defined on normal forms as opposed to general \piterm{s}.
This choice is motivated by the fact that
different syntactic presentations of the same term
may be misleading when trying to analyse the relation
between the structure of the term and~$\Types$.
The rules need to guarantee that a reduction will not break \tcompat[ibility],
which is a property of the congruence class of the term.
As justified by Lemma~\ref{lemma:tied-tree},
the scope of names in a congruence class may vary,
but the tied-to relation puts constraints on the structure
that must be obeyed by all members of the class.
Therefore the type system is designed around this basic concept,
rather than the specific scoping of any representative
of the structural congruence class.
Second, no type information is associated with the typed term,
only restricted names hold type annotations.
Third, while the rules are compositional,
the constraints on base types have a global flavour
due to the fact that they involve the structure of $\Types$
which is a global parameter of typing proofs.

\iflongversion
  Let us illustrate intuitively how the constraints enforced by the rules
  guarantee preservation of \tcompat[ibility].
  Consider the term
  \[
    P = \new e\:a. \Bigl(
      \new b.\bigl( \out a<b>.A_0 \bigr)
        \parallel
      \new d.\bigl( \inp a(x).A   \bigr)
    \Bigr)
  \]
  with
  $A = \new c. (
          A_1 \parallel A_2 \parallel A_3
        )$,
  $A_0 = \inp b(y)$,
  $A_1 = \out x<c>$,
  $A_2 = \inp c(z).\out a<e>$ and
  $A_3 = \out a<d>$.
  Let $\Types$ be the forest with
    $t_e \parent t_a \parent t_b \parent t_c$ and $t_a \parent t_d$,
  where $t_x$ is the base type of the (omitted) annotation
  of the restriction $\restr x$, for $x \in \set{a,b,c,d,e}$.
  The reader can check that $\forest(P)$ is \tcompat.
  In the traditional understanding of mobility,
  we would interpret the communication of $b$ over $x$
  as an application of scope extrusion
  to include $\new d.\bigl( \inp a(x).A \bigr)$ in the scope of $b$
  and then syncronisation over $a$ with the application of the substitution
  $\subst{x->b}$ to $A$;
  note that the substitution is only valid because the scope of $b$
  has been extended to include the receiver.
  Our key observation is that we can instead interpret this communication
  as a migration of the subcomponents of $A$ that do get their scopes
  changed by the reduction, from the scope of the receiver
  to the scope of the sender.
  For this operation to be sound,
  the subcomponents of $A$ migrating to the sender's scope
  cannot use the names that are in the scope of the receiver but not of the sender.
  In our specific example,
  after the synchronisation between the prefixes $\out a<b>$ and $\inp a(x)$,
  $b$ is substituted to $x$ in $A_1$ resulting in the term $A_1' = \out b<c>$
  and $A_0, A_1', A_2$ and $A_3$ become active.
  The scope of $A_0$ can remain unchanged
  as it cannot know more names than before as a result of the communication.
  By contrast, $A_1$ now knows $b$ as a result of the substitution $\subst{x->b}$:
  $A_1$ needs to migrate under the scope of $b$.
  Since $A_1$ uses $c$ as well, the scope of $c$ needs to be moved under $b$;
  however $A_2$ uses $c$ so it needs to migrate under $b$ with the scope of $c$.
  $A_3$ instead does not use neither $b$ nor $c$ so it can avoid migration
  and its scope remains unaltered.
  This information can be formalised using the tied-to relation:
  on one hand, $A_1$ and $A_2$ need to be moved together
  because $1 \tiedto{A} 2$ and
  they need to be moved because $x \ntiedto{a(x).A} 1, 2$.
  On the other hand, $A_3$ is not tied to neither $A_1$ nor $A_2$ in $A$ and
  does not know $x$, thus it is not migratable.
  After reduction, our view of the reactum is the term
  \[
    \new a. \Bigl(
      \new b.\bigl(
        A_0 \parallel
        \new c. ( A_1' \parallel A_2 )
      \bigr)
    \parallel \new d.A_3
    \Bigr)
  \]
  the forest of which is \tcompat.
  Rule~\ref{rule:conf}, applied to $A_1$ and $A_2$,
  ensures that $c$ has a base type that can be nested under the one of $b$.
  Rule~\ref{rule:in} does not impose constraints on the base types of $A_3$
  because $A_3$ is not migratable.
  It does however check that the base type of $e$ is an ancestor of the one of $a$,
  thus ensuring that both receiver and sender are already in the scope of $e$.
  The base type of $a$ does not need to be further constrained since
  the fact that the synchronisation happened on it implies that
  both the receiver and the sender were already under its scope;
  this implies, by \tcompat[ibility] of $P$, that $c$ can be nested under $a$.

  We now describe the purpose of the rules of the type system in more detail.
\fi
Most of the rules just drive the derivation through the structure of the term.
The crucial constraints are checked by~\ref{rule:conf}, \ref{rule:in} and \ref{rule:out}.

The \ref{rule:out}~rule is the one enforcing types
to be consistent with the dataflow of the process:
the type of the argument of a channel $a$ must agree with the types
of all the names that may be sent over $a$.
This is a very coarse sound over-approximation of the dataflow;
if necessary it could be refined using well-known techniques from the literature
but a simple approach is sufficient here to type interesting processes.

Rule~\ref{rule:conf} is best understood imagining the normal form to be typed, $P$, as the continuation of a prefix $\pi.P$.
In this context a reduction exposes each of the active sequential subterms of $P$ which need to have a place in a \tcompat\ forest for the reactum.
The constraint in \ref{rule:conf} can be read as follows.
A ``new'' leaf $A_i$ may refer to names already present in the forests of the reaction context; these names are the ones mentioned in both $\freenames(A_i)$ and $\Env$.
Then we must be able to insert $A_i$ so that we can find these names in its path. However, $A_i$ must belong to a tree containing all the names in $X$ that are tied to it in $P$. So by requiring every name tied to $A_i$ to have a base type smaller than any name in the context that $A_i$ may refer to, we make sure that we can insert the continuation in the forest of the context without violating \tcompat[ibility].
Note that $\Env(\freenames(A_i))$ contains only types that annotate names both in $\Env$ and $\freenames(A_i)$, that is, names which are not restricted by $X$ and are referenced by $A_i$ (and therefore come from the context).

Rule~\ref{rule:in} serves two purposes:
on the one hand it requires the type of the messages that can be sent through $a$ to be consistent with the use of the variable $x$ which will be bound to the messages;
on the other hand, it constrains the base types of $a$ and $x$ so that
synchronisation can be performed without breaking \tcompat[ibility].
The second purpose is achieved by distinguishing two cases,
represented by the two disjuncts of the condition on base types of the rule.
In the first case the base type of the message is an ancestor of the base type of $a$ in $\Types$.
This implies that in any \tcompat\ forest representing $a(x).P$,
the name $b$ sent as message over $a$ is already in the scope of $P$.
Under this circumstance, there is no real migration and the substitution
$\subst{x->b}$ does not alter the scope of $P$ and the \tcompat[ibility] constraints to be satisfied are in essence unaltered.
The second case is more complicated as it involves migration.
This case also requires a slightly non-standard feature:
the premises predicate not only on the direct subcomponents of an input prefixed term, but also on the direct subcomponents of the continuation.
This is needed to be able to separate the continuation in two parts:
the one requiring migration and the one that does not.
The non migratable sequential terms behave exactly as the case of the first disjunct: their scope is unaltered.
The migratable ones instead are intended to be inserted as descendent of the node representing the message in the forest of the reaction context.
For this to be valid without rearrangement of the forest of the context,
we need all the names in the context that are referenced in the migratable terms, to be already in their scope; we make sure this is the case by requiring the free names of any migratable $A_i$ that are from the context (i.e.~in $\Env$) to have base types smaller than the base type of $a$.
The set $\base(\Env(\freenames(A_i)\setminus\set{a}))$ indeed represents the base types of the names in the reaction context referenced in a migratable continuation $A_i$.
In fact $a$ is a name that needs to be in the scope of both the sender and the receiver at the same time, so it needs to be a common ancestor of sender and receiver in any \tcompat\ forest. Any name in the reaction context and in the continuation of the receiver, with a base type smaller than the one of $a$, will be an ancestor of $a$---and hence of the sender, the receiver and the node representing the message---in any \tcompat\ forest. Clearly, remembering $a$ is not harmful as it must be already in the scope of receiver and sender so we exclude it from the constraint.

\begin{example}
  Take the normal form in Example~\ref{ex:servers}.
  Let us fix $\Types$ to be the forest
    $\ty{s} \parent \ty{c} \parent \ty{m} \parent \ty{d}$
  and annotate the normal form with the following types:
    $s \tas \type_s = \ty{s}[\type_m]$,
    $c \tas \type_c = \ty{c}[\type_m]$,
    $m \tas \type_m = \ty{m}[\ty{d}] $ and
    $d \tas \ty{d}$.
  Let $\Env = \set{(s \tas \type_s), (c \tas \type_c)}$.
  We want to prove $\emptyset \types \new s\:c.P$.
  We can apply rule~\ref{rule:conf}:
  in this case there are no conditions on types because,
  being the environment empty,
  we have $\base(\emptyset(\freenames(A))) = \emptyset$
  for every active sequential term $A$ of $P$.
  The rule requires
    $\Env\types \bang{S}$, $\Env\types \bang{C}$ and $\Env\types \bang{M}$,
  which can be proved by proving typability of $S$, $C$ and $M$ under $\Env$
  by rule~\ref{rule:bang}.
  To prove $\Env\types S$ we apply rule~\ref{rule:in};
  we have $s \tas \ty{s}[\type_m] \in \Env$
  and we need to prove that $\Env, x\tas\type_m \types \new d.\out x<d>$.
  No constraints on base types are generated at this step
  since the migratable sequential term $\new d.\out x<d>$
  does not contain free variables typed by $\Env$ making
  $\Env(\freenames(\new d.\out x<d>) \setminus \set{a}) = \Env(\set{x})$ empty.
  Next, $\Env, x\tas\type_m \types \new d.\out x<d>$ can be proved by
  applying rule~\ref{rule:conf} which amounts to
  checking $\Env, x\tas\type_m \types \out x<d>.\zero$
  (by a simple application of \ref{rule:out}
  and the axiom $\Env, x\tas\type_m \types \zero$)
  and verifying the condition---true in $\Types$---$\base(\type_m) < \base(\type_d)$:
  in fact $d$ is tied to $\out x<d>$ and,
  for $\Env' = \Env \union \set{x \tas \type_m}$,
  $\base(\Env'(\freenames(\out x<d>)))
    = \base(\Env'(\set{x,d}))
    = \base(\set{\type_m})$.
  The proof for $\Env \types M$ is similar and requires
  $\ty{c} \tlt \ty{m}$ which is true in $\Types$.
  Finally, we can proof $\Env \types C$ using rule~\ref{rule:in};
  both the two continuation $A_1 = \out s<m>$ and $A_2 = \inp m(y).\out c<m>$
  are migratable in $C$ and since $\base(\type_m) \tlt \base(\type_c)$ is false
  we need the other disjunct of the condition to be true.
  This amounts to check that
    $\base(\Env(\freenames(A_1) \setminus \set{c}))
      = \base(\Env(\set{s,m})) = \base(\set{\type_s}) < \ty{c}$
  (note $m \not\in \domain(\Env)$)
  and
    $\base(\Env(\freenames(A_a) \setminus \set{c}))
      = \base(\Env(\set{})) < \ty{c}$
  (that holds trivially).
  Fortunately, this is the case in $\Types$.
  To complete the typing we need to show
  $\Env, m \tas \type_m \types A_1$ and
  $\Env, m \tas \type_m \types A_2$.
  The former can be proved by a simple application of \ref{rule:out}
  which does not impose further constraints on $\Types$.
  The latter is proved by applying \ref{rule:in} which requires
  $\base(\type_c) < \ty{m}$, which holds in $\Types$.
  %%%
  Note how, at every step, there is only one rule that applies to each subproof.
\end{example}

\begin{example}
  There is no choice for (a finite) $\Types$ that would make
  the normal form in Example~\ref{ex:stack} typeable.
  To see why, one can build the proof tree without assumptions on $\Types$
  obtaining that:
  \begin{enumerate}
    \item the restrictions must be annotated with types consistent with the type assignments
      \ifshortversion%
        $s \tas t_s[t]$,
        $v \tas t_v[t]$,
        $n \tas t_n[t]$,
        $a \tas t$,
        $b \tas t$.
      \else%
        \begin{align*}
          s &\tas t_s[t] &
          v &\tas t_v[t] &
          n &\tas t_n[t] &
          a &\tas t      &
          b &\tas t
        \end{align*}
      \fi%
    \item $\Types$ must satisfy the constraint
          that the base type assigned to $b$ must be
          strictly greater than the one assigned to $x$,
          which is inconsistent with $s \tas t_s[t], b \tas t$.
  \end{enumerate}
\end{example}

\subsection{Soundness}
\label{sec:soundness}

In this section we show how the type system can be used
to prove depth-boundedness.
Theorem~\ref{th:subj-red} will show how typability is preserved by reduction.
Theorem~\ref{th:typed-tshaped} establishes the main property of the type system:
if a term is typable then \tshaped[ness] is invariant under reduction.
This allows us to conclude that if a term is \tshaped{} and typable,
then every term reachable from it will be \tshaped{}
and, therefore, it is depth-bounded.

We start with some simple properties of the type system.

\begin{lemma}
\label{lemma:typesys-props}
  Let $P \in \PiNfAnnot$ and $\Env$, $\Env'$ be type environments.
  \begin{enumerate}[label=\alph*), ref={\ref{lemma:typesys-props}.\alph*}]
    \item if $\Env \types P$ then $\freenames(P) \subseteq \domain(\Env)$;
      \label{lemma:typesys-domain}
    \item if  $\domain(\Env') \inters \boundnames(P) = \emptyset$
          and $\freenames(P) \subseteq \domain(\Env)$,
          then $\Env \types P$ if and only if $\Env \Env' \types P$;
      \label{lemma:typesys-weakening}
    \item if $P \congr P' \in \PiNfAnnot$ then, $\Env \types P$ if and only if $\Env \types P'$.
      \label{lemma:typesys-congr}
  \end{enumerate}
\end{lemma}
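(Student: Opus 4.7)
The three parts can be handled by induction on the typing derivation (for (a), (b)) and by case analysis on the structural congruence move (for (c)). Since all three properties are compositional, the bulk of the work lies in checking that the side conditions of \ref{rule:conf} and \ref{rule:in}, which involve $\Env(\freenames(A_i))$ and the tied-to relation, behave well under each manipulation.

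For (a), I would proceed by straightforward induction on the derivation. The rules \ref{rule:out} and \ref{rule:in} explicitly require $a \tas \cdots \in \Env$ (and $b$ as well for \ref{rule:out}), handling all free names introduced by prefixes. For \ref{rule:conf}, the IH gives $\freenames(A_i) \subseteq \domain(\Env) \cup \domain(X)$ for each $i$; since the bound names in $X$ are removed from $\freenames(\new X.\Parallel_i A_i)$, the remaining free names lie in $\domain(\Env)$. The cases \ref{rule:sum}, \ref{rule:bang}, \ref{rule:tau} preserve the free-name set and follow directly.

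For (b), I would induct on the derivation, showing both directions simultaneously. The crucial observation is that the side conditions refer to $\Env$ applied to free names of subterms, and under the hypothesis $\freenames(P) \subseteq \domain(\Env)$ together with $\domain(\Env') \inters \boundnames(P) = \emptyset$, we have $\Env(\freenames(A_i)) = (\Env\Env')(\freenames(A_i))$: any name in $\freenames(A_i) \inters \domain(\Env')$ that is not bound by a local restriction must already lie in $\domain(\Env)$ (by (a) applied to subderivations). The extension of the environment required by \ref{rule:in} (adding $x \tas \type_x$) and by \ref{rule:conf} (adding $X$) is compatible with $\Env'$ precisely because $\Env'$ avoids bound names, so the IH applies to the premises.

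For (c), the structural congruence restricted to normal forms reduces to a handful of moves: $\alpha$-renaming of bound names, permutations of restrictions, parallel components, and summands, elimination of vacuous restrictions ($\new x.P \congr P$ when $x \not\in \freenames(P)$), and replication unfolding ($\bang M \congr M \parallel \bang M$). Permutations are immediate since \ref{rule:conf} and \ref{rule:sum} are symmetric in their indexed premises; $\alpha$-renaming is handled by the fact that types are carried with bound names (Lemma~\ref{lemma:tcompat-alpha} makes the analogous point for \tcompat[ibility]), so the tied-to relation and all side conditions are invariant; vacuous restrictions contribute no constraint since no $A_i$ is tied to the irrelevant name. The main obstacle is replication unfolding: one must verify that typing $\cdots \parallel \bang M \parallel \cdots$ inside some $\new X$ is equivalent to typing $\cdots \parallel M \parallel \bang M \parallel \cdots$. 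Because $\freenames(M) = \freenames(\bang M)$, the new copy is linked-to exactly the same terms as $\bang M$, so the tied-to classes are preserved and the \ref{rule:conf} side conditions for the new $A_i = M$ coincide with those already discharged for $\bang M$; conversely, \ref{rule:bang} yields the required premise from $\Env,X \types M$. Part~(b) is then invoked where needed to re-balance environments when an inactive restriction is moved by a congruence step.
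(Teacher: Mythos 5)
The paper gives no proof of this lemma at all: it is introduced with the phrase ``some simple properties of the type system'' and no argument appears in the appendix either, so there is nothing to compare your write-up against. Your proposal is the natural argument and is essentially correct. Parts (a) and (b) are fine: the key point for (b) is exactly the one you identify, namely that $\Env(\freenames(A_i)) = (\Env\Env')(\freenames(A_i))$; note this follows even more directly than via (a), since $\freenames(A_i) \subseteq \freenames(P) \cup \boundnames(P)$ and $\domain(\Env')$ is disjoint from $\boundnames(P)$ by hypothesis and from $\freenames(P)$ because $\freenames(P) \subseteq \domain(\Env)$ and the juxtaposition $\Env\Env'$ is only defined for disjoint domains. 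The one soft spot is in (c): you take for granted that structural congruence between two \emph{normal forms} decomposes into $\alpha$-renaming, permutations of restrictions/components/summands, vacuous-restriction elimination and replication unfolding, applied recursively under prefixes. That characterisation is standard but is itself a small combinatorial lemma (scope extrusion, the $\zero$ laws and associativity can be composed in many ways, and one must check they never lead outside this list when both endpoints are normal forms); since the whole content of (c) rests on it, a careful version of the proof should state and justify it. Your treatment of the individual moves, in particular that the fresh copy of $M$ in an unfolding of $\bang M$ has the same linked-to neighbours as $\bang M$ and hence inherits its side conditions in the \textsc{Par} rule, is correct.
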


The subtitution lemma states that substituting names
without altering the types preserves typability.

\begin{lemma}[Substitution]
\label{lemma:subst}
  Let $P \in \PiNfAnnot$ and
  $\Env$ be a typing environment including the type assignments
  $a \tas \type$ and $b \tas \type$.
  Then it holds that if $\Env \types P$ then $\Env \types P\subst{a->b}$.
\end{lemma}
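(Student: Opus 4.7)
The plan is to proceed by induction on the derivation of $\Env \types P$ (equivalently, on the structure of the normal form), showing that each rule instance is stable under the substitution $\subst{a->b}$. The guiding observation is that replacing one free name by another of the same type leaves unchanged every ingredient the typing rules depend on.

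As preliminaries, I would establish three invariants. First, since $a, b \in \domain(\Env)$ are free and by the Name Uniqueness Assumption may be taken to be disjoint from $\boundnames(P)$, the substitution commutes with all constructors, $(\new X.\Parallel_{i\in I} A_i)\subst{a->b} = \new X.\Parallel_{i \in I} A_i\subst{a->b}$, and $P\subst{a->b}$ remains a normal form. Second, because $\Env(a) = \Env(b) = \type$, any rule premise of the shape $c \tas t_c[\type'] \in \Env$ (as in \ref{rule:out}, \ref{rule:in}) transfers verbatim to $c\subst{a->b}$; moreover, for every subterm $A$, $\base(\Env(\freenames(A\subst{a->b}))) = \base(\Env(\freenames(A)))$, since removing $a$ and inserting $b$ (or vice versa) does not alter the set of base types. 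Third, for any normal form $P' = \new X.\Parallel_{i \in I} A_i$, the relation $\linkedto{P'}$ is defined through sharing of names that are bound by $X$; as $a, b \notin X$, linked-to (and hence $\tiedto{P'}$, $\ntiedto{P'}$, and $\migr{\inp c(x).P'}(i)$) are preserved by the substitution.

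Given these invariants, each rule is handled mechanically. The rules \ref{rule:sum}, \ref{rule:bang}, and \ref{rule:tau} follow by direct appeal to the induction hypothesis on the subderivations. For \ref{rule:out} applied to $\out c<d>.Q$, the two type-membership premises remain valid after substitution by the second invariant, and IH handles $Q$. For \ref{rule:in} on $\inp c(x).P'$, we may assume $x \notin \set{a,b}$ by $\alpha$-conversion, so IH applies to the premise $\Env, x \tas \type_x \types P'$ to give $\Env, x \tas \type_x \types P'\subst{a->b}$; the base-type side condition transfers because migratability of each $A_i$ in $\inp c(x).P'$ coincides with migratability in $(\inp c(x).P')\subst{a->b}$, and the set $\base(\Env(\freenames(A_i)\setminus\set{c}))$ is invariant under the substitution. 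The case of \ref{rule:conf} for $\new X.\Parallel_{i \in I} A_i$ is analogous: IH yields $\Env, X \types A_i\subst{a->b}$ for each $i$, and the side condition involving $\ntiedto{P}$ and base types is preserved by the third and second invariants respectively.

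The only delicate point is the bookkeeping of the side conditions of \ref{rule:conf} and \ref{rule:in}, where one must track how the free-name sets and the tied-to relation interact with the substitution. Once one observes that tied-to is defined via restricted names (unaffected when one substitutes free names) and that substituting a name by one of the same base type does not change the multiset of base types that enters the inequalities, these conditions transfer without further work, completing the induction.
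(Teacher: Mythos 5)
Your proof is correct and follows essentially the same route as the paper's: induction on the structure of the normal form, using the facts that the tied-to and migratable relations depend only on restricted names (hence are unaffected by substituting free names) and that $\Env(a)=\Env(b)$ keeps both the type-membership premises and the sets $\base(\Env(\freenames(\cdot)))$ invariant. The three invariants you isolate are exactly the observations the paper's case analysis relies on.
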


\iflongversion
\begin{proof}
  %%%% SUBSTITUTION LEMMA
%
We prove the lemma by induction on the structure of $P$.
The base case is when $P \congr \zero$, where the claim trivially holds.

For the induction step,
let $P \congr \new X. \Parallel_{i \in I} A_i$
with $A_i = \Alt_{j \in J} \pi_{ij}.P_{ij}$,
for some finite sets of indexes $I$ and $J$.
Since the presence of replication does not affect the typing proof,
we can safely ignore that case as it follows the same argument.
Let us assume $\Env\types P$ and prove that $\Env\types P\subst{a->b}$.

Let $\Env'$ be $\Env \union X$.
From $\Env\types P$ we have
\begin{gather}
  % \forall i \in I \st
  \Env, X \types A_i
  \label{eq:subst-Ai}\\
  % \forall i \in I \st
    % \forall x \tas \type_x \in X \st
      % x \text{ tied to } A_i
      x \ntiedto{P} i
        \implies
          \base(\Env(\freenames(A_i))) \tlt \base(\type_x)
  \label{eq:subst-tied}
\end{gather}
for each $i \in I$ and $x \tas \type_x \in X$.
To extract from this assumptions a proof for $\Env\types P\subst{a->b}$,
we need to prove that \eqref{eq:subst-Ai} and \eqref{eq:subst-tied}
hold after the substitution.

Since the substitution does not apply to names in $X$
and the \emph{tied to} relation is only concerned
with names in $X$, the only relevant effect of the substitution is
modifying the set $\freenames(A_i)$ to
  $\freenames(A_i \subst{a->b}) =
      \freenames(A_i) \setminus \set{a} \union \set{b}$
when $a \in \freenames(A_i)$;
But since $\Env(a) = \Env(b)$ by hypothesis,
we have $\base(\Env(\freenames(A_i \subst{a->b}))) \tlt \base(\type_x)$.

It remains to prove \eqref{eq:subst-Ai} holds after the substitution as well.
This amounts to prove for each $j \in J$ that
$
    \Env' \types \pi_{ij}.P_{ij}
        \implies
    \Env' \types \pi_{ij}.P_{ij}\subst{a->b}
$;
we prove this by cases.

Suppose $\pi_{ij} = \out{\alpha}<\beta>$ for two names $\alpha$ and $\beta$,
then from $\Env' \types \pi_{ij}.P_{ij}$ we know the following
\begin{gather}
  \alpha \tas t_\alpha[\type_\beta] \in \Env'
  \qquad
  \beta \tas \type_\beta \in \Env'
  \label{eq:subst-out-types}\\
  \Env' \types P_{ij}
  \label{eq:subst-out-induction}
\end{gather}
Condition~\eqref{eq:subst-out-types} is preserved after the substitution
because it involves only types so, even if $\alpha$ or $\beta$ are $a$,
their types will be left untouched after they get substituted with $b$
from the hypothesis that $\Env(a) = \Env(b)$.
Condition~\eqref{eq:subst-out-induction} implies
$\Env' \types P_{ij}\subst{a->b}$ by inductive hypothesis.

Suppose now that $\pi_{ij} = \inp{\alpha}(x)$
and $P_{ij} \congr \new Y.\Parallel_{k \in K} A'_k$
for some finite set of indexes $K$;
by hypothesis we have:
\begin{gather}
  \alpha \tas t_{\alpha}[\type_x] \in \Env'
  \label{eq:subst-in-typea} \\
  \Env', x \tas \type_x \types P_{ij}
  \label{eq:subst-in-induction} \\
  \begin{multlined}
  \base(\type_x) \tleq t_{\alpha}
  \lor \\
  \forall k \in K \st
    % A'_k \text{ migrates} \implies
    \migr{\pi_{ij}.P_{ij}}(k) \implies
      \base(\Env'(\freenames(A'_k)\setminus\set{\alpha})) \tlt t_{\alpha}
  \end{multlined}
  \label{eq:subst-in-migrate}
\end{gather}
Now $x$ and $Y$ are bound names so they are not altered by substitutions.
The substitution $\subst{a->b}$ can therefore only be affecting
the truth of these conditions when
$\alpha =  a$ or when $a \in \freenames(A'_k)\setminus (Y \union \set{x})$.
Since we know $a$ and $b$ are assigned the same type by $\Env$
and $\Env \subseteq \Env'$,
condition~\eqref{eq:subst-in-typea} still holds when substituting $a$ for $b$.
Condition~\eqref{eq:subst-in-induction} holds by inductive hypotesis.
The first disjunct of condition~\eqref{eq:subst-in-migrate}
depends only on types, which are not changed by the substitution,
so it holds after applying it if and only if it holds before the application.
To see that the second disjunct also holds after the substitution
we observe that the \emph{migratable} condition depends on
$x$ and $\freenames(A'_k) \inters Y$
which are preserved by the substitution;
moreover, if $a \in \freenames(A'_k)\setminus\set{\alpha}$ then
$\Env'(\freenames(A'_k)\setminus\set{\alpha}) =
    \Env'(\freenames(A'_k\subst{a->b})\setminus\set{\alpha})$.

This shows that the premises needed to derive
$\Env', x \tas \type_x' \types \pi_{ij}.P_{ij}\subst{a->b}$
are implied by our hypothesis,
which completes the proof.

\end{proof}
\fi

Before we state the main theorem, we define the notion of
$P$-safe type environment, which is a simple restriction on the types
that can be assigned to names that are free at the top-level of a term.

\begin{definition}
A type environment $\Env$ is said to be \emph{$P$-safe} if
for each $x \in \freenames(P)$ and $(y\tas\type) \in \resboundnames(P)$,
$\base(\Env(x)) < \base(\type)$.
\end{definition}

\begin{theorem}[Subject Reduction]
\label{th:subj-red}
  Let $P$ and $Q$ be two terms in $\PiNfAnnot$
  and $\Env$ be a $P$-safe type environment.
  If $\Env \types P$ and $P \redto Q$, then $\Env \types Q$.
\end{theorem}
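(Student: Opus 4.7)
\begin{proof*}
The plan is to apply Lemma~\ref{lemma:typesys-congr} first so that we can work directly with the normal-form representatives given by Definition~\ref{def:sem-of-picalc}, and then split on the two reduction cases. The $\tact$ case is mostly bookkeeping: from $\Env \types \new W.(\tact.\new Y.P' \parallel C)$ we peel off \ref{rule:conf}, \ref{rule:sum}, \ref{rule:tau} to obtain $\Env,W \types \new Y.P'$ and then $\Env,W,Y \types A$ for every active sequential $A$ of $P'$ (together with the base-type side conditions). Since reduction via $\tact$ does not change free names or the tied-to relation among the $A_i$'s of $C$ and $P'$ taken together, we may reassemble a derivation using \ref{rule:conf} for $\new W Y.(P' \parallel C)$, merely checking that the side conditions surviving from the original application of \ref{rule:conf} to $P$ and from its application to $\new Y.P'$ combine to yield the side condition needed on $\new W Y.(P' \parallel C)$. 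The $P$-safety hypothesis handles the (rare) interaction between names of $W$ and free names of $C$.

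For the communication case, from $\Env\types \new W.(S \parallel R \parallel C)$ apply \ref{rule:conf} followed by \ref{rule:sum} to get $\Env,W \types \out a<b>.\new Y_s.S'$ and $\Env,W \types \inp a(x).\new Y_r.R'$. Rule~\ref{rule:out} then forces $a\tas t_a[\type_b], b\tas\type_b \in \Env,W$ together with $\Env,W \types \new Y_s.S'$; rule~\ref{rule:in} forces $a\tas t_a[\type_x] \in \Env,W$ together with $\Env,W,x\tas\type_x \types \new Y_r.R'$ plus the disjunctive base-type condition. Since $a$ is uniquely typed in $\Env,W$ we have $\type_x = \type_b$, so Lemma~\ref{lemma:subst} gives $\Env,W \types (\new Y_r.R')\subst{x->b}$. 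By Lemma~\ref{lemma:typesys-weakening} we may extend environments by $Y_s$ and $Y_r$ as needed, and the typings of the individual $A_i$'s of $C$ are preserved verbatim. What remains is to apply \ref{rule:conf} to the reactum $\new W Y_s Y_r.(S' \parallel R'\subst{x->b} \parallel C)$.

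The main obstacle is verifying the side condition of \ref{rule:conf} on the reactum: for every active sequential $A'_i$ and every $(z\tas\type_z) \in W Y_s Y_r$ with $z \ntiedto{Q} i$, we must show $\base(\Env(\freenames(A'_i))) \tlt \base(\type_z)$. The three sub-cases are driven by where $z$ comes from. If $z \in W$, then the condition for the $A_i$'s of $S'$, $R'\subst{x->b}$, and $C$ follows from the original \ref{rule:conf} side condition on $P$, because free names surviving in the reactum were already free in $S$, $R$, or $C$ (the only new free name is $b$, which is handled by $b\tas\type_b \in \Env,W$ and $\type_b = \type_x$). If $z \in Y_s$, the condition is exactly the one provided by the inner application of \ref{rule:conf} to $\new Y_s.S'$ in the typing of $S$. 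The delicate sub-case is $z \in Y_r$: here the tied-to relation may genuinely change because the substitution $\subst{x->b}$ can introduce new sharings through $b$. This is precisely where the two-disjunct condition in \ref{rule:in} is used: if $\base(\type_x) \tleq t_a$, then $b$'s base type is already an ancestor of $a$'s so any $A'_i$ referencing $b$ inherits a free-name profile compatible with $\Env$, and the $\new Y_r.R'$ inner \ref{rule:conf} side condition transfers to the reactum; otherwise, for each migratable $A'_i$ (those with $x \ntiedto{\inp a(x).\new Y_r.R'} i$) the second disjunct supplies $\base(\Env(\freenames(A'_i)\setminus\set{a})) \tlt t_a$, and since $a$ itself is free in the reactum under the scope of $W$ with $\base(\Env(a)) = t_a$, the composite chain $\base(\Env(\freenames(A'_i))) \tlt t_a \tlt \base(\type_z)$ holds by $P$-safety and the original side conditions. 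The non-migratable $A'_i$'s inherit their constraint unchanged from the inner \ref{rule:conf} on $\new Y_r.R'$, because $z \not\ntiedto{Q} i$ for them. Combining the three sub-cases closes the derivation and yields $\Env \types Q$.
\end{proof*}
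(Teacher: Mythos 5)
Your overall skeleton---pass to normal forms via Lemma~\ref{lemma:typesys-congr}, read off the unique syntax-directed derivation, transport the typing premises to the reactum via the Substitution Lemma (Lemma~\ref{lemma:subst}) and weakening (Lemma~\ref{lemma:typesys-weakening}), and close with a fresh application of \ref{rule:conf}---is exactly the paper's. The divergence, and the gap, is in how you discharge the side condition of \ref{rule:conf} on the reactum. The paper's point is that this condition is \emph{trivially} true: the set $\base(\Env(\freenames(A'_i)))$ only consults types of names in $\domain(\Env)$, i.e.\ free names of the whole term, while every $(z\tas\type_z)\in W Y_s Y_r$ is a restriction-bound name of $P$; so $P$-safety of $\Env$ yields $\base(\Env(u)) \tlt \base(\type_z)$ for every relevant $u$ outright, with no case analysis and no appeal to the tied-to relation at all (the same observation already discharges the premise $\Premise$ in the derivation for $P$).

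Your three-way case split is not only unnecessary but, as written, does not go through. The transfers you claim are not licensed, because the tied-to relation of the reactum is not controlled by the relations you cite: for $z \in Y_s$, $z$ may become tied to some $S'_i$ in $Q$ through links passing via $W$-names and $C$ even though $z \not\ntiedto{\new Y_s.S'} i$, so the inner \ref{rule:conf} side condition (which is moreover stated relative to $\Env W$, not $\Env$) supplies nothing for such an $i$; the analogous objection applies to the $z\in W$ sub-case, since $z \ntiedto{Q} i$ does not imply the corresponding tied-to fact in $P$. In the $z \in Y_r$ sub-case the chain $\base(\Env(\freenames(A'_i))) \tlt t_a \tlt \base(\type_z)$ requires $t_a \tlt \base(\type_z)$, which follows from nothing you have established when $a$ is itself restriction-bound. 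More broadly, the disjunctive condition of \ref{rule:in} and the migratability analysis play no role in subject reduction; in the paper they are consumed only in the proof of Theorem~\ref{th:typed-tshaped}, where the \tcompat\ forest of the reactum is actually rebuilt. Since the conclusion of your final step is nevertheless true, the proof is repaired by deleting the case analysis and invoking $P$-safety directly.
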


\begin{proof}
  We will only prove the result for the case when $P \redto Q$ is caused by
a synchronising send and receive action
since the $\tact$ action case is similar and simpler.
From $P\redto Q$ we know that
$P \congr \new W.(S \parallel R \parallel C) \in \PiNfAnnot$ with
$S \congr (\out a<b>.\new Y_s.S')+M_s$ and
$R \congr (\inp a(x).\new Y_r.R')+M_r$
the synchronising sender and receiver respectively;
$Q \congr \new W Y_s Y_r.
  (S' \parallel R'\subst{x->b} \parallel C)$.
In what follows, let $W' = W Y_s Y_r$,
  $C = \Parallel_{h\in H} C_h$,
  $S' = \Parallel_{i\in I} S'_i$ and
  $R' = \Parallel_{j\in J} R'_j$,
all normal forms.

For annotated terms, the type system is syntax directed:
there can be only one proof derivation for each typable term.
By Lemma~\ref{lemma:typesys-congr}, from the hypothesis $\Env \types P$
we can deduce $\Env \types \new W.(S \parallel R \parallel C)$.
The proof derivation for this typing judgment
can only be of the following shape:
\begin{equation}
\label{eq:P-deriv}
  \mprset{sep=1em}
  \inferrule{
    \Env W \types S \\
    \Env W \types R \\
    \forall h \in H \st
      \Env W \types C_h \\
    \Premise
  }{
    \Env \types \new W.(S \parallel R \parallel C)
  }
\end{equation}
where $\Premise$ represents the rest of the conditions of the \ref{rule:conf} rule.%
\footnote{Note that $\Premise$ is trivially true by $P$-safety of $\Env$.}
The fact that $P$ is typable implies that each of these premises must be provable.
The derivation proving $\Env, W \types S$ must be of the form
\begin{equation}
\label{eq:S-deriv}
  \mprset{sep=1em}
  \inferrule{
    \inferrule*{
      a \tas t_a[\type_b] \in \Env W \\
      b \tas \type_b \in \Env W \\
      \Env W \types \new Y_s.S'
    }{
      \Env W \types \out a<b>.\new Y_s.S'
    }\\
    \Premise_{M_s}\\
  }{
    \Env \types \out a<b>.\new Y_s.S' + M_s
  }
\end{equation}
where $\Env W \types \new Y_s.S'$ is proved by an inference of the shape
\begin{equation}
\label{eq:Si-deriv}
  % \mprset{sep=1em}
  \inferrule{
    \forall i\in I\st
      \Env W Y_s \types S'_i \\
    \forall i\in I\st
      \Premise_{S'_i}
  }{
    \Env W \types \new Y_s.S'
  }
\end{equation}

Analogously, $\Env W \types R$ must be proved by an inference with the following shape
\begin{equation}
\label{eq:R-deriv}
  \mprset{sep=1em}
  \inferrule{
    \inferrule*{
      a \tas t_a[\type_x] \in \Env W \\
      \Env W, x \tas \type_x \types \new Y_r.R'\\
      \Premise_{R'}
    }{
      \Env W \types \inp a(x).\new Y_r.R'
    }\\
    \Premise_{M_r}
  }{
    \Env W \types \inp a(x).\new Y_r.R' + M_r
  }
\end{equation}
and to prove $\Env W, x \tas \type_x \types \new Y_r.R'$
\begin{equation}
\label{eq:Rj-deriv}
  % \mprset{sep=1em}
  \inferrule{
    \forall j\in J\st
      \Env W, x \tas \type_x, Y_r \types R'_j\\
    \forall j\in J\st
      \Premise_{R'_j}
  }{
    \Env W, x \tas \type_x \types \new Y_r.R'
  }
\end{equation}

We have to show that from this hypothesis we can infer that $\Env \types Q$
or, equivalently (by Lemma~\ref{lemma:typesys-congr}), that $\Env \types Q'$
where $Q' = \new W Y_s Y_r. (S' \parallel R'\subst{x->b} \parallel C)$.
The derivation of this judgment can only end with an application of~\ref{rule:conf}:
\[\mprset{sep=1em}
  \inferrule{
    \forall i \in I \st
      \Env W' \types S'_i \\
    \forall j \in J \st
      \Env W' \types R'_j\subst{x->b} \\
    \forall h \in H \st
      \Env W' \types C_h \\
    \Premise'
  }{
    \Env \types \new W'. (S' \parallel R'\subst{x->b} \parallel C)
  }
\]
In what follows we show how we can infer these premises are provable
as a consequence of the provability of the premises of the proof of
$\Env \types \new W.(S \parallel R \parallel C)$.

From Lemma~\ref{lemma:typesys-weakening} and \ref{nameuniq},
$\Env W Y_s \types S'_i$ from~\eqref{eq:Si-deriv} implies
$\Env W' \types S'_i$ for each $i \in I$.

Let $\Env_r = \Env W, x \tas \type_x$.
We observe that by \eqref{eq:S-deriv} and \eqref{eq:R-deriv}, $\type_x = \type_b$.
From \eqref{eq:R-deriv} we know that $\Env_r Y_r \types R'_j$
which, by Lemma~\ref{lemma:subst}, implies $\Env_r Y_r \types R'_j\subst{x->b}$.
By Lemma~\ref{lemma:typesys-weakening} we can infer
$\Env_r Y_r Y_s \types R'_j\subst{x->b}$ and by applying the same lemma again
using $\freenames(R'_j\subst{x->b}) \subseteq \domain(\Env W Y_r Y_s)$
and \ref{nameuniq} we obtain
$\Env W' \types R'_j\subst{x->b}$.

Again applying Lemma~\ref{lemma:typesys-weakening} and \ref{nameuniq},
we have that $\Env W \types C_h$ implies $\Env W' \types C_h$ for each $h \in H$.

To complete the proof we only need to prove that for each
$A \in \set{S'_i | i \in I} \union \set{R'_j | j \in J} \union \set{C_h | h \in H}$,
$
  \Premise' = \forall (x \tas \type_x) \in W'\st
    x \text{ tied to } A \text{ in } Q' \implies
      \base(\Env(\freenames(A))) < \base(\type_x)
$
holds.
This is trivially true by the hypothesis that $\Env$ is $P$-safe.
\end{proof}

\begin{theorem}[Invariance of {\tshaped[ness]}]
\label{th:typed-tshaped}
  Let $P$ and $Q$ be terms in $\PiNfAnnot$ such that $P \redto Q$ and
  $\Env$ be a $P$-safe environment such that $\Env \types P$.
  Then, if $P$ is \tshaped{} then $Q$ is \tshaped{}.
\end{theorem}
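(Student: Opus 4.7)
\begin{proof*}
The plan is to first invoke Subject Reduction (Theorem~\ref{th:subj-red}) to obtain $\Env \types Q$, and then to establish that every subterm of $Q$ is \tcompat. We treat the synchronisation case; the $\tact$ case is analogous and simpler.

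Most subterms of $Q$ are already subterms of $P$, and therefore \tcompat\ by \tshaped[ness] of $P$: the unchanged context $C$ and all its subterms, the non-selected summands $M_s$ and $M_r$, and every continuation sitting beneath a prefix in $S'$, $R'$, or $C$. The only genuinely new subterms are $R'\subst{x->b}$ together with its sub-continuations, and $Q$ itself. For the substituted ones, we argue that \tcompat[ibility] is preserved under $\subst{x->b}$: rules~\ref{rule:out} and~\ref{rule:in} force $\Env(x)=\type_b=\Env(b)$, so substitution only relabels occurrences of $x$ as $b$ without altering any type annotation on restrictions. Since \tcompat[ibility] of a forest depends only on the base types labelling its nodes and not on the names themselves, any \tcompat\ forest for a subterm of $R'$ yields a \tcompat\ forest for the corresponding subterm of $R'\subst{x->b}$ via the same relabeling.

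The heart of the argument is showing that $Q$ itself is \tcompat, i.e.\ that $\AST{Q}$ contains some \tcompat\ forest. By Lemma~\ref{lemma:phi-tcompat}, it suffices to show that $\Phi_\Types(\nf(Q))$ is \tcompat, which by construction reduces to verifying that for every top-level leaf $A$ of $\nf(Q)$, the restrictions tied to $A$ in $\nf(Q)$ can be stacked above it in an order consistent with $\parent$. The typing derivation produced by Theorem~\ref{th:subj-red} ends in an application of rule~\ref{rule:conf} whose premises supply exactly these constraints for each top-level $A \in \set{S'_i} \union \set{R'_j\subst{x->b}} \union \set{C_h}$. For the $C_h$, the hierarchy inherited from $\nf(P)$ suffices together with $P$-safety of $\Env$. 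For each $S'_i$, the sub-derivation \eqref{eq:Si-deriv} in the proof of Theorem~\ref{th:subj-red} already certifies that the names of $Y_s$ tied to $S'_i$ can be nested above it. The case of $R'_j\subst{x->b}$ is the only delicate point: when $R'_j$ is non-migratable in $a(x).\new Y_r.R'$ the same argument as for $S'_i$ applies (via \eqref{eq:Rj-deriv}), whereas when it is migratable, the second disjunct of the premise of rule~\ref{rule:in} in \eqref{eq:R-deriv} asserts $\base(\Env(\freenames(R'_j)\setminus\set{a})) \tlt t_a$, and since $b$ has base type $\base(\type_b)$ related appropriately through $t_a$ in the existing \tcompat\ forest of $P$, the migrated position of $R'_j\subst{x->b}$ lies in a trace respecting $\tlt$.

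The main obstacle will be the bookkeeping in this last step: one must match up the tied-to classes in $\nf(Q)$ with their images from $\nf(P)$ after the restructuring caused by exposing $S'$ and $R'$, and verify that the premises of \ref{rule:conf} and \ref{rule:in} in the derivation for $P$ translate cleanly into the \tcompat[ibility] witnesses required for $\nf(Q)$. In particular, one has to check carefully that free names of $R'_j$ that become occurrences of $b$ after substitution either were already placed above the synchronisation node or are forced to be so by the migration clause, and that no new tied-to edge can arise that would force an inconsistent ordering on base types.
\end{proof*}
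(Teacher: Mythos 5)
There is a genuine gap, and it sits exactly at the step you yourself flag as ``the main obstacle'': exhibiting a \tcompat\ forest in $\AST{Q}$. Your route is circular as stated. Lemma~\ref{lemma:phi-tcompat} says that $\Phi_\Types(\nf(Q))$ is \emph{always} a \tcompat\ forest (part (a)), and that it belongs to $\AST{Q}$ \emph{if and only if} $Q$ is \tcompat\ (part (b)); so ``showing that $\Phi_\Types(\nf(Q))$ is \tcompat'' proves nothing, and showing that it lies in $\AST{Q}$ is, by that very lemma, equivalent to the goal. Moreover, the premises you propose to harvest from the derivation of $\Env\types Q$ produced by Theorem~\ref{th:subj-red} cannot supply the needed ordering: the top-level instance of \ref{rule:conf} for $Q$ only constrains names of $W Y_s Y_r$ against names in $\Env$, and these constraints are rendered trivially true by $P$-safety (as the paper notes in a footnote). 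They say nothing about whether the names of $W$, $Y_s$ and $Y_r$ that become tied to a common sequential term in $Q$ can be arranged on a single $\tlt$-increasing path --- which is precisely what \tcompat[ibility] of $Q$ demands and what typability alone does not give (typability and \tcompat[ibility] are independent hypotheses of the theorem for good reason).

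The paper closes this gap with a construction you are missing: it starts from a \tcompat\ forest $\phi\in\AST{P}$ (not from $\Phi(\nf(Q))$), deletes the leaves of the two reacting processes, and grafts $\Phi(\new Y_s.S')$, $\phinonmig$ and $\phimig$ onto the paths $p_S$ and $p_R$ using an insertion operator $\treeins$ that attaches each root of the inserted forest to the deepest node on the path with a strictly smaller base type; \tcompat[ibility] of the result is then automatic, and membership in $\AST{Q}$ is certified by checking the three conditions of Lemma~\ref{lemma:forest-nf}. It is only in that last check that the \emph{inner} premises of the derivation of $\Env\types P$ (the \ref{rule:conf} premises for $\new Y_s.S'$ and $\new Y_r.R'$, and the migration disjunct of \ref{rule:in}) are used, to guarantee that valid attachment points exist on $p_S$ and $p_R$. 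Your treatment of the migratable $R'_j\subst{x->b}$ (``$b$ has base type related appropriately through $t_a$\dots'') glosses over the one genuinely delicate point of that check: when $t_a\tlt t_b$ one must show, via Lemma~\ref{lemma:phi-tied} and the premises of the derivation of $\new Y_r.R'\subst{x->b}$, that every root of $\phimig$ carries a base type strictly greater than $t_b$, so that the migrating trees are forced to land below the node of $b$; the case $t_b\tlt t_a$ must be handled separately. Without the explicit witness construction and this argument, the proof does not go through.
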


\begin{proof}
  \ifshortversion
    The full proof is presented in~\appendixorfull.
We outline here only the crucial steps.
Consider the input output synchronisation case where the sending action $\out a<b>$ is such that
$\restr (a \tas \type_a)$ and $\restr (b \tas \type_b)$ are both active restrictions of $P$ and $t_a=\base(\type_a)\tlt\base(\type_b)=t_b$.
As in the proof of Theorem~\ref{th:subj-red}, the derivation of $\Env \types P$
must follow the shape of \eqref{eq:P-deriv}.
From \tshaped[ness] of $P$ we can conclude that
both $\new Y_s.S'$ and $\new Y_r.R'$ are \tshaped.
By Lemma~\ref{lemma:phi-tcompat} we know that
  $\phi   = \Phi(\new W.(S \parallel R \parallel C)) \in \AST{P}$,
  $\phi_r = \Phi(\new Y_r.R'\subst{a->b}) \in \AST{\new Y_r.R'\subst{x->b}}$ and
  $\phi_s = \Phi(\new Y_s.S') \in \AST{\new Y_s.S'}$.
Let $\phi_r = \phimig \dunion \phinonmig$
where only $\phimig$ contains a leaf labelled with a term with $b$ as a free name.
These leaves will correspond to the continuations $R'_j$
that migrate in $\inp a(x).\new Y_r.R'$,
after the application of the substitution $\subst{x->b}$.
By assumption, inside $P$ both $S$ and $R$ are in the scope of the restriction bounding $a$
and $S$ must also be in the scope of the restriction bounding $b$.
% Let $t_a = \base(\type_a)$ and $t_b = \base(\type_b)$,
$\phi$ will contain two leaves $n_S$ and $n_R$
labelled with $S$ and $R$ respectively,
having a common ancestor $n_a$ labelled with $(a, t_a)$;
$n_S$ will have an ancestor $n_b$ labelled with $(b, t_b)$.
Let $p_a$, $p_S$ and $p_R$ be the paths in $\phi$ leading from a root to $n_a$, $n_S$ and $n_R$ respectively.
\newcommand{\parentins}{\parent_{\mathrm{ins}}}
Now, we want to transform $\phi$, by manipulating this tree,
into a forest $\phi'$ that is
\tcompat\ by construction and such that
there exists a term $Q' \congr Q$ with $\forest(Q') = \phi'$,
so that we can conclude $Q$ is \tshaped.

To do so, we introduce the following function,
taking a labelled forest $\phi$, a path $p$ in $\phi$ and a labelled forest $\rho$
and returning a labelled forest:
$  \treeins(\phi, p, \rho) \is (
    N_\phi \dunion N_\rho,
    \parent_\phi \dunion \parent_\rho \dunion \parentins,
    \ell_\phi \dunion \ell_\rho
  )$
where $n\parentins n'$ if
  $n' \in \min_{\parent_\rho}(N_\rho)$,
  $\ell_\rho(n') = (y, t_y)$
and $n \in \max_{\parent_\phi}\set{m \in p | \ell_\phi(m) = (x,t_x), t_x < t_y}$.
Note that for each $n'$, since $p$ is a path, there can be at most one $n$ such that $n \parentins n'$.
To obtain the desired $\phi'$,
we first need to remove the leaves $n_S$ and $n_R$ from $\phi$,
as they represent the sequential processes which reacted,
obtaining a forest $\phi_C$.
We argue that the $\phi'$ we need is indeed
  $\phi' = \treeins(\phi_1, p_S, \phimig)$ where
  $\phi_1 = \treeins(\phi_2, p_R, \phinonmig)$ and
  $\phi_2 = \treeins(\phi_C, p_S, \phi_s)$.
It is easy to see that, by definition of $\treeins$, $\phi'$ is \tcompat:
$\phi_C$, $\phi_s$, $\phinonmig$ and $\phimig$ are \tcompat\ by hypothesis,
$\treeins$ adds parent-edges only when they do not break \tcompat[ibility].

The proof is completed by showing that $\phi'$ is the forest of a term congruent to
  $\new W Y_s Y_r.(S' \parallel R'\subst{x->b} \parallel C)$
using Lemma~\ref{lemma:forest-nf}.

  \else
    We will consider the input output synchronisation case as the $\tau$ action one is similar and simpler.
We will further assume that the sending action $\out a<b>$ is such that
$\restr (a \tas \type_a)$ and $\restr (b \tas \type_b)$ are both active restrictions of $P$,
i.e.~$(a \tas \type_a) \in W$, $(b \tas \type_b) \in W$
     with $P \congr \new W.(S \parallel R \parallel C)$.
The case when any of these two names is a free name of $P$ can be easily handled
with the aid of the assumption that $\Env$ is $P$-safe.

As in the proof of Theorem~\ref{th:subj-red}, the derivation of $\Env \types P$
must follow the shape of \eqref{eq:P-deriv}.

From \tshaped[ness] of $P$ we can conclude that
both $\new Y_s.S'$ and $\new Y_r.R'$ are \tshaped.
We note that substitutions do not affect \tcompat[ibility]
since they do not alter the set of bound names and their type annotations.
Therefore, we can infer that $\new Y_r.R'\subst{a->b}$ is \tshaped.
By Lemma~\ref{lemma:phi-tcompat} we know that
  $\phi   = \Phi(\new W.(S \parallel R \parallel C)) \in \AST{P}$,
  $\phi_r = \Phi(\new Y_r.R'\subst{a->b}) \in \AST{\new Y_r.R'\subst{x->b}}$ and
  $\phi_s = \Phi(\new Y_s.S') \in \AST{\new Y_s.S'}$.
Let $\phi_r = \phimig \dunion \phinonmig$
where only $\phimig$ contains a leaf labelled with a term with $b$ as a free name.
These leaves will correspond to the continuations $R'_j$
that migrate in $\inp a(x).\new Y_r.R'$,
after the application of the substitution $\subst{x->b}$.
By assumption, inside $P$ both $S$ and $R$ are in the scope of the restriction bounding $a$
and $S$ must also be in the scope of the restriction bounding $b$.
Let $t_a = \base(\type_a)$ and $t_b = \base(\type_b)$,
$\phi$ will contain two leaves $n_S$ and $n_R$
labelled with $S$ and $R$ respectively,
having a common ancestor $n_a$ labelled with $(a, t_a)$;
$n_S$ will have an ancestor $n_b$ labelled with $(b, t_b)$.
Let $p_a$, $p_S$ and $p_R$ be the paths in $\phi$ leading from a root to $n_a$, $n_S$ and $n_R$ respectively.
By \tcompat[ibility] of $\phi$, we are left with only two possible cases:
either \begin{inparaenum}[label=\arabic*)]
\item $t_a \tlt t_b$\label{case:ab} or \item $t_b \tlt t_a$.\label{case:ba}
\end{inparaenum}

Let us consider case~\ref{case:ab} first.
The tree in $\phi$ to which the nodes $n_S$ and $n_R$ belong,
would have the following shape:
\begin{center}%
  \ifshortversion\unskip\fi
  \tikzsetnextfilename{tree-before-sync}%
  \begin{tikzpicture}[y=.5cm,x=.6cm,baseline=0]
    \ensuretikzpicturebegin
\begin{scope}[
  forest,
  thick,
]
      \path[tree] (-2.6,-3.5) -- (0,0) coordinate (root) -- (2.6,-3.5) --cycle;
      \path[closer] (root)
        ++(-90:1)    node[node=above left:{$n_a$}] (a) {}
         +(-130:1.3) node[node=above left:{$n_b$}] (b) {}
      ;
      \path
         (1.3,-3.5)  node[node=below:{$n_R$}] (R) {}
         (-1.3,-3.5) node[node=below:{$n_S$}] (S) {}
      ;
  \draw[thick] (root)
    ..controls +(-120:.3) and +(60:.4) ..
    (a)
    ..controls +(-170:.5)  and +(30:.7) ..
    (b)
    ..controls +(-150:.7) and +(40:.8) ..
    (S.north)
    (a)
    ..controls +(-10:.5)  and +(260:-.6) ..
    ++(-60:1.2)
    ..controls +(260:.8)  and +(50:.7) ..
    (R.north);
\end{scope}
\ensuretikzpictureend
  \end{tikzpicture}
\end{center}
\ifshortversion\unskip\fi
\newcommand{\parentins}{\parent_{\mathrm{ins}}}
Now, we want to transform $\phi$, by manipulating this tree,
into a forest $\phi'$ that is
\tcompat\ by construction and such that
there exists a term $Q' \congr Q$ with $\forest(Q') = \phi'$,
so that we can conclude $Q$ is \tshaped.

To do so, we introduce the following function,
taking a labelled forest $\phi$, a path $p$ in $\phi$ and a labelled forest $\rho$
and returning a labelled forest:
\[
  \treeins(\phi, p, \rho) \is (
    N_\phi \dunion N_\rho,
    \parent_\phi \dunion \parent_\rho \dunion \parentins,
    \ell_\phi \dunion \ell_\rho
  )
\]
where $n\parentins n'$ if
  $n' \in \min_{\parent_\rho}(N_\rho)$,
  $\ell_\rho(n') = (y, t_y)$
and $n \in \max_{\parent_\phi}\set{m \in p | \ell_\phi(m) = (x,t_x), t_x < t_y}$.
Note that for each $n'$, since $p$ is a path, there can be at most one $n$ such that $n \parentins n'$.

To obtain the desired $\phi'$,
we first need to remove the leaves $n_S$ and $n_R$ from $\phi$,
as they represent the sequential processes which reacted,
obtaining a forest $\phi_C$.
% It is easy to check that $\Phi(\new W.C) = \phi_C$.
We argue that the $\phi'$ we need is indeed
\begin{align*}
  \phi' &= \treeins(\phi_1, p_S, \phimig)    \\
  \phi_1 &= \treeins(\phi_2, p_R, \phinonmig)\\
  \phi_2 &= \treeins(\phi_C, p_S, \phi_s)
\end{align*}
It is easy to see that, by definition of $\treeins$, $\phi'$ is \tcompat:
$\phi_C$, $\phi_s$, $\phinonmig$ and $\phimig$ are \tcompat\ by hypothesis,
$\treeins$ adds parent-edges only when they do not break \tcompat[ibility].

To prove the claim we need to show that $\phi'$ is the forest of a term congruent to
  $\new W Y_s Y_r.(S' \parallel R'\subst{x->b} \parallel C)$.
\newcommand{\Jnonmig}{J_{\neg\mathrm{mig}}}
\newcommand{\Jmig}{J_{\mathrm{mig}}}
Let $R' = \Parallel_{j \in J} R'_j$,
    $\Jmig = \set{j \in J | x \ntiedto{\new Y_r.R'} j}$,
    $\Jnonmig = J \setminus \Jmig$
and $Y'_r = \set{(x \tas \type) \in Y_r
                  | x \in \freenames(R'_j), j \in \Jnonmig}$.
We know that no $R'_j$ with $j \in \Jnonmig$ can contain $x$ as a free name
so $R'_j\subst{x->b} = R'_j$.
Now suppose we are able to prove that conditions
  \ref{lemma:forest-nf:seq-leaf},
  \ref{lemma:forest-nf:name-uniq} and
  \ref{lemma:forest-nf:scoping}
of Lemma~\ref{lemma:forest-nf} hold for
  $\phi_C$, $\phi_1$, $\phi_2$ and $\phi'$.
Then we could use Lemma~\ref{lemma:forest-nf} to prove
\begin{enumerate}[label=\alph*)]
  \item $\phi_C = \forest(Q_C)$, $Q_C \congr Q_{\phi_C} = \new W.C$,
  \item $\phi_2 = \forest(Q_2)$, $Q_2 \congr Q_{\phi_2} = \new W Y_s.(S' \parallel C)$,
  \item $\phi_1 = \forest(Q_1)$, $Q_1 \congr Q_{\phi_1} = \new W Y_s Y'_r.(S' \parallel \Parallel_{j \in \Jnonmig} R'_j \parallel C)$,
  \item $\phi' = \forest(Q')$, $Q' \congr Q_{\phi'} = \new W Y_s Y_r.(S' \parallel R'\subst{x->b} \parallel C) \congr Q$
\end{enumerate}
(it is straightforward to check that $\phi_C, \phi_2, \phi_1$ and $\phi'$ have the right sets of nodes and labels to give rise to the right terms).
We then proceed to check for each of the forests above that they satisfy
conditions
  \ref{lemma:forest-nf:seq-leaf},
  \ref{lemma:forest-nf:name-uniq} and
  \ref{lemma:forest-nf:scoping},
thus proving the theorem.

Condition~\ref{lemma:forest-nf:seq-leaf} requires that
only leafs are labelled with sequential processes,
condition that is easily satisfied by all of the above forests
since none of the operations involved in their definition alters this property
and the forests $\phi$, $\phi_s$ and $\phi_r$ satisfy it by construction.

Similarly, since $\new W.(S \parallel R \parallel C)$ is a normal form
it satisfies \ref{nameuniq}, \ref{lemma:forest-nf:name-uniq} is satisfied
as we never use the same name more than once.

Condition \ref{lemma:forest-nf:scoping} holds on $\phi$ and hence it holds on $\phi_C$
since the latter contains all the nodes of $\phi$ labelled with names.

Now consider $\phi_s$: in the proof of Theorem~\ref{th:subj-red} we established
that $\Env \types P$ implies that the premises $\Premise_{S'_i}$
from \eqref{eq:Si-deriv} hold, that is
$\base(\Env W(\freenames(S'_i))) \tlt \base(\type_x)$
holds for all $S'_i$ for $i \in I$ and all $(x \tas \type_x) \in Y_s$
such that $x \ntiedto{\new Y_s.S'} i$.
% such that $x$ is tied to $S'_i$ in $\new Y_s.S'$.
Since $\freenames(S'_i) \inters W \subseteq \freenames(S')$ we know that
every name $(w \tas \type_w) \in W$ such that $w \in \freenames(S'_i)$
will appear as a label $(w, \base(\type_w))$ of a node $n_w$ in $p_S$.
Therefore, by definition of $\treeins$,
we have that for each $n \in N_{\phi_C}$, $n_w <_{\phi_2} n$;
in other words, in $\phi_2$, every leaf in $N_{\phi_s}$ labelled with $S'_i$
is a descendent of a node labelled with $(w, \base(\type_w))$
for each $(w \tas \type_w) \in W$ with $w \in \freenames(S'_i)$.
This verifies condition \ref{lemma:forest-nf:scoping} on $\phi_2$.

Similarly, by \eqref{eq:Rj-deriv} the following premise must hold:
$ \base(\Env W(\freenames(R'_j))) \tlt \base(\type_x) $
for all $R'_j$ for $j \in J$ and all $(y \tas \type_y) \in Y_r$
such that $y \ntiedto{\new Y_r.R'} j$.
We can then apply the same argument we applied to $\phi_2$ to show
that condition \ref{lemma:forest-nf:scoping} holds on $\phi_1$.

From \eqref{eq:R-deriv} and the assumption $t_a \tlt t_b$,
we can conclude that the following premise must hold:
$\base(\Env W(\freenames(R'_j) \setminus \set{a})) \tlt t_a$
for each $j \in J$ such that $R'_j$ is migratable in $\inp a(x).\new Y_r.R'$,
i.e~$j \in \Jmig$.
% It easy to check that if $R'_j$ migrates in $\inp a(x).\new Y_r.R'$ then
% $R'_j\subst{x->b}$ is tied to $b$ in $\new W Y_s Y_r.(S' \parallel R'\subst{x->b} \parallel C)$.
From this we can conclude that for every name $(w \tas \type_w) \in W$
such that $w \in \freenames(R'_j\subst{x->b})$ with $j \in \Jmig$
there must be a node in $p_a$ (and hence in $p_S$) labelled with $(w, \base(\type_w))$.
Now, some of the leaves in $\phimig$ will be labelled with terms having $b$ as a free name;
we show that in fact every node in $\phimig$ labelled with a
$(y, t_y)$ is indeed such that $t_y \tlt t_b$.
From the proof of Theorem~\ref{th:subj-red} and Lemma~\ref{lemma:subst}
we know that from the hypothesis we can infer that
$\Env W \types \new Y_r.R'\subst{x->b}$ and hence that
for each $j \in \Jmig$ and each $(y \tas \type_y) \in Y_r$,
if $y$ is tied to $R'_j\subst{x->b}$ in $\new Y_r.R'\subst{x->b}$ then
$\base(\Env W(R'_j\subst{x->b})) \tlt \base(\type_y)$.
By Lemma~\ref{lemma:phi-tied} we know that every root of $\phimig$
is labelled with a name $(y, t_y)$ which is tied to each of the leaves in its tree.
Therefore each such $t_y$ satisfies $\base(\Env W(R'_j\subst{x->b})) < t_y$.
By construction, there exists at least one $j \in \Jmig$ such that
$x \in \freenames(R'_j)$ and consequently such that
$b \in \freenames(R'_j\subst{x->b})$.
From this and $b \in W$ we can conclude $t_b < t_y$ for $t_y$ labelling a root in $\phimig$.
We can then conclude that
$\set{n_b} = \max_{\parent_{\phi_2}}\set{m \in p_S | \ell_\phi(m) = (z,t_z), t_z < t_y}$
for each $t_y$ labelling a root of $\phimig$,
which means that each tree of $\phimig$ is placed as a subtree of $n_b$ in $\phi'$.
This verifies condition~\ref{lemma:forest-nf:scoping} for $\phi'$
completing the proof.

Pictorially, the tree containing $n_S$ and $n_R$ in $\phi$ is now transformed
in the following tree in $\phi'$:
\begin{center}
  \tikzsetnextfilename{tree-after-sync}%
  \begin{tikzpicture}[scale=.65,baseline=0]
    \ensuretikzpicturebegin
\begin{scope}[
  forest,
  thick,
  legend/.style={
    every node/.style={
      anchor=west,
      text width=4em,
      append after command={++(0,-1em)}
    },
    every pic/.style={
      shift={(-.4em,.6em)},
      y=2em, x=2em
    }
  },
]

  \path[tree] (-2.6,-3.5) -- (0,0) coordinate (root) -- (2.6,-3.5) --cycle;
  \path[closer] (root)
    ++(-90:1)    node[node=above left:{$n_a$}] (a) {}
     +(-130:1.3) node[node=above left:{$n_b$}] (b) {}
  ;
  \path
     (1.3,-3.5)  node[removed node=below:{$p_R$}] (R) {}
     (-1.3,-3.5) node[removed node=below:{$p_S$}] (S) {}
  ;
  \draw (root)
    ..controls +(-120:.3) and +(60:.4) ..
    (a)
    ..controls +(-170:.5) and +(30:.7) ..
      coordinate[pos=.55](pre-b)
    (b)
    ..controls +(-150:.7) and +(40:.8) ..
      coordinate[pos=.2](s')
    (S.north)
    (a)
    ..controls +(-10:.5)  and +(260:-.6) ..
      coordinate[pos=.8](r')
    ++(-60:1.2)
    ..controls +(260:.8)  and +(50:.7) ..
      coordinate[pos=.1](r'')
    (R.north);

  % Happy Christmas
  \draw (b) to[bend left] ++(.3,-.3 ) pic[migrating subtree]
        (b) to[bend left] ++(.4,-.35) pic[migrating subtree]
        (b) to[bend left] ++(.5,-.4 ) pic[migrating subtree];

  \draw (s') to[bend right] ++(-.4,-.4) pic[sender subtree]
     (pre-b) to[bend left ] ++( .3,-.3) pic[sender subtree]
        (r') to[bend left ] ++( .5,-.3) pic[receiver subtree]
       (r'') to[bend right] ++(-.3,-.5) pic[receiver subtree];

  \path(-2,0) pic[sender subtree];

  \path( 2,0) pic[receiver subtree];

  %%%% LEGEND %%%%
  \path[legend] (5,  -2em)
    {pic [sender subtree   ]} node {$\in \phi_s    $}
    {pic [migrating subtree]} node {$\in \phimig   $}
    {pic [receiver subtree ]} node {$\in \phinonmig$}
  ;
  %%%%%%%%%%%%%%%%

\end{scope}
\ensuretikzpictureend
  \end{tikzpicture}
\end{center}

Case~\ref{case:ba} --- where $t_b < t_a$ --- is simpler as the migrating
continuations can be treated just as the non-migrating ones.
  \fi
\end{proof}

\iflongversion
  To illustrate the role of $\phimig$, $\phinonmig$
  and the $\treeins$ operation in the above proof,
  we show an example that would not be typable if we choose a simpler
  ``migration'' transformation.
  \begin{example}
    Consider the normal form $P = \new a\:b\:c.(\bang{A} \parallel \out a<c>)$
where
  $A = \inp a(x).\new d.(\out a<d> \parallel \out b<x>)$.
To make types consistent we need annotations satisfying
$a \tas t_a[t]$, $b \tas t_b[t]$, $c \tas t$ and $d \tas t$.
Any $\Types$ satisfying the constraints $t_b \tlt t_a \tlt t$ would allow us
to prove $\emptyset\types P$;
let then $\Types$ be the forest with $\ty{b} \parent \ty{a} \parent \ty{t}$
with $t_a = \ty{a}$, $t_b = \ty{b}$ and $t = \ty{t}$.
Let $P' = \new a\:b\:c\:d.(\bang{A} \parallel \out a<d> \parallel \out b<c>)$
be the (only) successor of $P$.
The following picture shows $\Phi(P)$ in the middle,
on the left a forest in $\AST{P'}$ extracted by just putting
the continuation of $A$ under the message,
on the right the forest obtained by using $\treeins$
on the non-migrating continuations of $A$:
\begin{center}
  \tikzsetnextfilename{example-treeins}%
   \begin{tikzpicture}[yscale=.7]
     \ensuretikzpicturebegin
\begin{scope}[
  child anchor=north,
  every node/.style={
    outer sep=0pt,
    inner sep=2pt,
  },
  level distance=6mm,
  sibling distance=8mm,
]

\node (P) {$b$}
  child { node {$a$}
    child { node {$\bang{A}$}}
    child { node {$c$}
      child[level distance=8mm] { node {$\out a<c>$} }
    }
  };

\node (Pins) at (3,0) {$b$}
  child { node {$a$}
    child { node {$\bang{A}$}}
    child { node {$c$}
      child[level distance=8mm] { node {$\out b<c>$} }
    }
    child { node {$d$}
      child[level distance=8mm] { node {$\out a<d>$} }
    }
  };

\node (Pnaive) at (-3,0) {$b$}
  child { node {$a$}
    child { node {$\bang{A}$}}
    child { node {$c$}
      child { node {$d$}
        child { node {$\out b<c>$} }
        child { node {$\out a<d>$} }
      }
    }
  };

\path (P) --node[yshift=-15]{$\to$}   (Pins)
      (P) --node[yshift=-15]{$\gets$} (Pnaive);

\end{scope}
\ensuretikzpictureend
   \end{tikzpicture}
\end{center}
Clearly, the tree on the left is not \tcompat\ since $c$ and $d$
have the same base type $t$.
Instead, the tree on the right can be obtained because $\treeins$
inserts the non-migrating continuation as close to the root as possible.

  \end{example}
\fi

\begin{definition}[Typably Hierarchical term]
\label{def:hierarchical}
  A normal form $P$ is \emph{typably hierarchical} if
  $P$ is \tshaped\ and $\Env \types P$
  for some finite forest $\Types$ and $P$-safe environment $\Env$.
  A general \piterm\ $P$ is typably hierarchical if its normal form $\nf(P)$ is.
\end{definition}

\begin{theorem}[Depth-boundedness]
\label{th:hier-db}
  Every typably hierarchical term is depth-bounded.
\end{theorem}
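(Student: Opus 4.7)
The plan is a direct synthesis of the results already established. Let $P$ be typably hierarchical; by Definition~\ref{def:hierarchical}, there exist a finite forest $\Types$ and a $\nf(P)$-safe environment $\Env$ such that $\Env \types \nf(P)$ and $\nf(P)$ is \tshaped. Because typability, \tshaped[ness], and \tcompat[ibility] are all invariant under structural congruence (Lemmas~\ref{lemma:typesys-congr} and~\ref{lemma:phi-tcompat}), it suffices to reason about normal form representatives throughout.

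The main step is to prove by induction on the length of a reduction sequence that for every $Q \in \reach(P)$ the normal form $\nf(Q)$ satisfies both $\Env \types \nf(Q)$ and $\nf(Q)$ is \tshaped. The base case is the hypothesis on $P$ itself. For the inductive step, suppose $\nf(Q_0) \redto Q_1$ with $\Env \types \nf(Q_0)$ and $\nf(Q_0)$ \tshaped; then Theorem~\ref{th:subj-red} gives $\Env \types \nf(Q_1)$ and Theorem~\ref{th:typed-tshaped} gives $\nf(Q_1)$ \tshaped.

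From \tshaped[ness] of $\nf(Q)$ we obtain in particular that $\nf(Q)$ itself (as one of its own subterms) is \tcompat, hence $Q$ is \tcompat\ by congruence-invariance. With \tcompat[ibility] established uniformly along $\reach(P)$, Proposition~\ref{prop:t-compat-db} concludes that $P$ is depth-bounded.

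The one piece of bookkeeping I anticipate as the main obstacle is showing that $\Env$ remains $\nf(Q_0)$-safe at every step of the induction, since both Theorems~\ref{th:subj-red} and~\ref{th:typed-tshaped} require this hypothesis on the redex. This should be routine: reduction does not introduce new free names into the reduct, and the restrictions exposed by firing a prefix (those in $Y_s$ and $Y_r$ of Definition~\ref{def:sem-of-picalc}) already carry their annotations inside $P$, so the strict base-type separation guaranteed by $P$-safety is inherited by every reachable normal form.
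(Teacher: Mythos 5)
Your proposal is correct and follows essentially the same route as the paper's proof: an induction along the reduction sequence combining Theorem~\ref{th:subj-red} and Theorem~\ref{th:typed-tshaped} to conclude that every reachable term is \tshaped\ (hence \tcompat), followed by an application of Proposition~\ref{prop:t-compat-db}. Your explicit attention to preserving $P$-safety of $\Env$ along the induction is a piece of bookkeeping the paper leaves implicit, but it does not change the argument.
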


\iflongversion
\begin{proof}
  By Theorem~\ref{th:subj-red} and Theorem~\ref{th:typed-tshaped}
  every term reachable from a typably hierarchical term $P$ is \tshaped.
  Then by Proposition~\ref{prop:t-compat-db} $P$ is depth-bounded.
\end{proof}
\fi

\subsection{Type inference}
\label{sec:inference}

In this section we will show that it is possible to take
any non-annotated normal form $P$
and derive a forest $\Types$ and an annotated normal form for $P$
that can be typed under $\Types$.

It is straightforward to see that inference is decidable:
if a forest of base types can be found so that
the typing derivation for $P$ is successful,
there exists a $\Types$ with at most $\card{\boundnames(P)}$ nodes
and a $P$-safe environment $\Env$ with $\domain(\Env) = \freenames(P)$,
such that $\Env \types P$ and $P$ is \tshaped.
Therefore, a na\"ive algorithm could enumerate all such forests---there are finitely many---and type check $P$ against each.
However a better algorithm is possible.

We start by annotating the term with type variables:
each name $x$ gets typed with a type variable $\typevar_x$.
Then we start the type derivation,
collecting all the constraints on types along the way.
If we can find a $\Types$ and type expressions
to associate to each type variable, so that these constraints are satisfied,
the process can be typed under $\Types$.

The constraints have two forms:
\begin{enumerate}
  \item $\typevar_x = t_x[\typevar_y]$ where $t_x$ is a base type variable;
  \item $\base(\typevar_x) < \base(\typevar_y)$
        which correspond to constraints over the corresponding
        base type variables, i.e.~$t_x < t_y$.
\end{enumerate}
Note that the $P$-safe condition on $\Env$ translates to constraints of the second kind.
The first kind of constraints can be solved using unification.
If no solution exists, the process cannot be typed.
This is the case of processes that cannot be \emph{simply typed}~\cite{PierceS00}.
If unification is successful we get a set of equations where the unknowns
are the base type variables.
Any assignment of those variables to nodes in a suitable forest
that satisfies the constraints of the second kind
would be a witness of typability.

We have at most $n$ base type variables where $n$ is
the number of names occurring in $P$.
There are at most $\frac{n(n-1)}{2}$ distinct independent constraints
of the form $t_x < t_y$, which can be treated as uninterpreted propositions.
By inspecting rules~\ref{rule:conf} and \ref{rule:in} we observe that
all the ``tied-to'' and ``migratable'' predicates do not depend on $\Types$
so for any given $P$, the conjunction of constraints on base types generated in
the proof derivation forms a 2-CNF formula with $\bigO{n^2}$ boolean variables.
Since 2-CNF satisfiability is linear in the number of variables~\cite{2-cnf},
we obtain a $\bigO{n^2}$ bound on satisfiability of the base type constraints.
Once we prove satisfiability of these constraints,
to prove $P$ is typably hierarchical,
it remains to prove that there exists a model $\Types$ of the constraints
so that $P$ is \tshaped.
If a precise bound on the depth is needed, one can perform a search for the
shallowest forest which is a model of the base type constraints such that $P$ is \tshaped.
Otherwise, the search can be restricted to total orders.

%!TEX root = main.tex
\section{Equivalence with \NDA}
\label{sec:automata}

After isolating a fragment of a process calculus,
an interesting question is \emph{can we find an automata based presentation of the same fragment?}
In this section we give an answer to this question
by relating the typably hierarchical fragment
to a class of automata on data-words recently defined in~\cite{conrad:14}:
\emph{Nested Data Class Memory Automata} (\NDA).

The original presentation of \NDA{s} sees them as language recognition devices:
they can recognise sets of data-words,
that is sequences of symbols in $\Sigma \times \dataset$ where
  $\Sigma$ is a finite alphabet and
  $\dataset$ is an infinite set of \emph{data values}.
Notably, (weak) \NDA{s} are more expressive than Petri nets,
while enjoying decidability of some verification problems.
While \emph{Class Memory Automata}~\cite{cma}
do not postulate any structure on $\dataset$,
\NDA{s} assume that it is equipped with
an infinitely branching, finite height forest structure.
We will make use of this forest structure
to represent \tcompat\ \piterm\ forests.

We are primarily interested in establishing a tight relation between the
transition systems of \NDA{s} and typably hierarchical terms.
Therefore we do not regard \NDA{s} as language recognisation devices
but simply as computational models.
For this reason, our definition ignores the language-related components
of the original definition of~\cite{conrad:14}:
there is no finite alphabet $\Sigma$, no accepting control states,
no accepting run.
While in the language-theoretic formulation at each step in a run
a letter and a data value must be read from the input string,
here a transition can fire simply if \emph{there exists}
a data value satisfying the transition's precondition.

\begin{definition}[\NDA~\cite{conrad:14}]
\label{def:ndcma}\label{def:nda}
We define a \emph{nested dataset} $(\dataset, \pred_{\dataset})$
to be a forest of infinitely many trees of level $\ell$
which is \emph{full} in the sense that
for each data value $d$ of level less than $\ell$,
there are infinitely many data values $d'$ whose parent is $d$.

A \emph{class memory function} is a function
  $f \from \dataset \to A \dunion \set{\fresh}$
such that $f(d) = \fresh$ for all but finitely many $d \in \dataset$;
$\fresh$ is a special symbol indicating a data value is fresh,
i.e.~has never been used before.

Fix a nested data set of level $\ell$.
A \emph{Nested Data CMA of level $\ell$} is a tuple
  $(\States, \delta, q_0, f_0)$
where
  $\States$ is a finite set of states,
  $q_0 \in \States$ is the initial control state,
  $f_0 \from \dataset \to \States_\fresh$ is the initial class memory function
  satisfying $f_0(\pred(d)) = \fresh \implies f_0(d) = \fresh$,
  % $F \subseteq Q$ is a sets of final states
  and $\delta$ is the transition relation.
$\delta$ is given by a union $\delta = \Union_{i=1}^\ell \delta_i$
where each $\delta_i$ is a relation:
  $\delta_i \subseteq \States \times (\States_\fresh)^i \times \States \times \States^i$
%%%%%%%%%%%%%%%%%%%%%%%%%%%%%%%%%%%%%%%%%%%%%%%%%%%
and $\States_\fresh$ is defined as $\States \union \set{\fresh}$.
A configuration is a pair $(q, f)$ where $q \in \States$,
and $f \from \dataset \to \States_\fresh$ is a class memory function.
The initial configuration is $(q_0, f_0)$.
% where $f_0$ is the class memory function mapping every data value to $\fresh$.
The automaton can transition from configuration
$(q, f)$ to configuration $(q',f')$, written $(q, f) \autto (q',f')$,
just if there exists a level-$i$ data value $d$ such that
$(q, \lstc{q}{i}, q', \lstc{q'}{i}) \in \delta$,
for all $j \in \set{1,\dots,i}$, $f(\pred^{i-j}(d)) = q_j$
and
\[
  f' = f \Map{ \pred^{i-1}(d) -> q'_1; \dots,\:\pred(d) -> q'_{i-1}; d -> q'_i}.
\]
\end{definition}
Given a nested dataset $\dataset$ we write $\CMF$ for the set of all
class memory functions from $\dataset$ to $\States_\fresh$.

We want to show that, in some strong sense,
\NDA{s} are equi-expressive to typably hierarchical \piterm{s}.
First we show an encoding from typeable \piterm{s},
then we prove that a transition system generated from the \NDA\ encoding
is bisimilar to the transition system generated by
the reduction semantics of the \piterm.
This result is quite strong in that it
implies the equivalence of many decision problems of the two formalisms.
It also offers a bridge between infinite-alphabet automata
and decidable fragments of \picalc.

\subsection{Encoding Typably Hierarchical terms into \NDA}
\label{sec:encoding}

We make a few simplifying assumptions on the term to be encoded as an \NDA.
First, we assume $P$ is a closed normal form, i.e.~$\freenames(P) = \emptyset$,
second we assume $P$ contains no $\tact$ action.
It would be easy to support the general case
but we only focus on the core case for conciseness.
Fix a closed \tshaped\ \piterm\ $P$ such that $\emptyset \types P$,
with $\ell = \height(\Types)$.
We will construct a level-$\ell$ automaton $\AutEnc{P}$ from $P$
so that their transition systems are essentially bisimilar.

The intuition behind the encoding is as follows.
A configuration $(q, f)$ represents a \piterm\ $P$ by using $f$
to label a finite portion of $\dataset$ so that
it is isomorphic to a \tcompat\ forest in $\AST{P}$.
Our encoding proceeds in rounds.
A single synchronisation step between two processes
will be simulated by a predictable number of steps of the automaton.
Since \piterm{s} exhibit non-determinism,
the automata in the image of the encoding need to be non-deterministic as well.
We make use of the non-determinism of the automata model in a second way:
in a reduction, the two synchronising processes
are not in the same path in the syntax tree
(they are both leaves by construction)
but the automaton can only examine one path in $\dataset$ at a time;
we then first guess the sender, mark the channel carrying its message,
then select a receiver waiting on that channel
(which will be in the path of both processes)
and then spawn their continuations in the relevant places.
This requires separate steps and could lead to spurious deadlocks
when no process is listening over the selected channel.
These deadlocked states can be pruned from the bisimulation by restricting the
relevant transition system to those configurations where the control state
is a distinguished state that signals that
the intermediate steps of a synchronisation have been completed.
A successful round follows very closely the operations used
in the proof of Theorem~\ref{th:typed-tshaped}.

A round starts from a configuration with control state $\qready$,
then goes trough a number of intermediate steps until it either deadlocks or
reaches another configuration with control state $\qready$.
Only reachable configurations of $\AutEnc{P}$ with $\qready$ as control state
will correspond to reachable terms of $P$.
Thus, given an automaton
$\Aut = (\States, \delta, \qready, f_0)$,
we define the transition relation
  $(\readyto) \subseteq \CMF^2$
as the minimal relation such that $f \readyto f'$ if
$(\qready, f) \autto (q_1,f_1) \autto \cdots \autto (q_n,f_n) \autto (\qready, f')$
where in the possibly empty sequence of $(q_i, f_i)$, $q_i \neq \qready$.

To encode a reachable term $Q$ in a configuration $(\qready, f)$
we use $f$ to represent the forest $\Phi(Q)$:
roughly speaking we represent a node $n$ of $\Phi(Q)$ labelled with $l$
with a data value $d$ mapped to a $q_l$ by $f$.
Since in general, due to the generation of unboundedly many names,
there might be infinitely many such labels $l$
we need to show that we can indeed use only a finite number of distinct labels
to be able to represent them with control states.
This is achieved by using the concept of derivatives.
The set of \emph{derivatives} of a term $P$ is
the set of sequential subterms of $P$, both active or not active.
\ifshortversion
More formally, it is the set defined by the function
  $\deriv(\zero) \is \set{} $,
  $\deriv(\new x.P) \is \deriv(P) $,
  $\deriv(P \parallel Q) \is \deriv(P) \union \deriv(Q) $,
  $\deriv(\bang{M}) \is \set{\bang{M}} \union \deriv(M)$,
  $\deriv(M+M') \is \set{M+M'} \union \deriv(M) \union \deriv(M') $,
  $\deriv(\pi.P) \is \set{\pi.P} \union \deriv(P)$.
\else
More formally, it is the set defined by the following function
\begin{align*}
  \deriv(\zero) &\is \set{} \\
  \deriv(\new x.P) &\is \deriv(P) \\
  \deriv(P \parallel Q) &\is \deriv(P) \union \deriv(Q) \\
  \deriv(\bang{M}) &\is \set{\bang{M}} \union \deriv(M)\\
  \deriv(M+M') &\is \set{M+M'} \union \deriv(M) \union \deriv(M') \\
  \deriv(\pi.P) &\is \set{\pi.P} \union \deriv(P)
\end{align*}
\fi
Clearly, $\deriv(P)$ is a finite set.
Every active sequential subterm of a term $P'$ reachable from $P$ is congruent
to a $Q\sigma$ for some substitution $\sigma$.
When $P$ is depth-bounded, we know from \cite{meyer:db} that,
there is a finite set of substitutions such that
the substitution $\sigma$ above can always be drawn from this set.
The assumption that $P$ is \tshaped\ and typable allows us to be even more specific.
Let $X_\Types = \set{\chi_t | t \in \Types}$ be a finite set of names, we define
$\Deriv[P] \is
  \set{ Q \sigma |
          Q \in \deriv(P),
          \sigma \from \freenames(Q) \to (X_\Types \union \freenames(P))
  }$.

\begin{lemma}
\label{lemma:chi-type}
Let $P$ be a term such that $\forest(P)$ is \tcompat.
Then there exists a term $Q$ such that $\forest(Q)$ is \tcompat,
$Q$ is an \pre\alpha-renaming of $P$,
$\resboundnames(Q) \subseteq X_\Types$ and
each active sequential subterm of $Q$ is in $\Deriv[P]$.
\end{lemma}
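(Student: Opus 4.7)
The plan is to prove the lemma by constructively $\alpha$-renaming the restriction-bound names of $P$ according to their base types, using the structure of a $\Types$-compatible forest to justify that the renaming is sound even though names from $X_\Types$ will in general be reused across disjoint subtrees.

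Concretely, I would start from a $\Types$-compatible forest $\phi \in \AST{P}$ guaranteed by hypothesis, so that by Lemma~\ref{lemma:forest-nf} we may work with a representative $P' \congr P$ whose restrictions correspond exactly to the internal nodes of $\phi$. I then traverse $\phi$ from the roots towards the leaves and, for each node labelled $(x \tas \type)$ with $\base(\type) = t$, I replace $x$ by the representative $\chi_t \in X_\Types$. The key observation is that in any $\Types$-compatible trace the base types are strictly $<$-increasing, so no two restrictions lying on the same root-to-leaf path can share a base type; consequently, after the renaming, every occurrence of $\chi_t$ that appears free inside a sequential leaf of $\phi$ is captured by the unique $\new \chi_t$ on the path above it, and no binder shadows another. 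This shows that the renamed term $Q$ is indeed $\alpha$-equivalent to $P'$ (and hence to $P$) and that $\resboundnames(Q) \subseteq X_\Types$. $\Types$-compatibility of $\forest(Q)$ is then immediate from Lemma~\ref{lemma:tcompat-alpha}.

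For the last clause, I would observe that each active sequential subterm $A$ of $Q$ arises from exactly one active sequential subterm $A_0$ of $P'$, hence of $P$, by applying the renaming built above. Restricting this renaming to $\freenames(A_0)$ yields a substitution $\sigma \from \freenames(A_0) \to X_\Types \union \freenames(P)$: a free name of $A_0$ is either a free name of $P$, in which case it is left untouched, or it is one of the restriction-bound names of $P$ that became a free name of the derivative when the enclosing $\new$ was stripped away by $\deriv(\cdot)$, in which case it gets sent to some $\chi_t \in X_\Types$. Since $A_0 \in \deriv(P)$ by construction and $A = A_0\sigma$, we conclude $A \in \Deriv[P]$.

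The main subtlety, and the one step that needs a careful argument rather than a routine check, is justifying that the simultaneous renaming of several distinct bound names to the same symbol $\chi_t$ really is a legitimate $\alpha$-renaming: this is exactly where $\Types$-compatibility is used, via the fact that the strict ordering of base types along any path forbids two occurrences of $\chi_t$ on the same path after renaming, so each $\new \chi_t$ has a disjoint scope from every other and no inadvertent capture occurs. Once this is settled, the rest of the proof is a direct bookkeeping exercise on the structure of~$\phi$.
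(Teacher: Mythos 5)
Your proof is correct and follows essentially the same route as the paper's: the paper's (one-line) argument is exactly your key observation that \tcompat[ibility] forces strictly increasing base types along every path of $\forest(P)$, so no two restrictions on a path share a base type and the renaming $x \mapsto \chi_{\base(\type)}$ is capture-free. Your additional bookkeeping for the $\Deriv[P]$ clause is correct and merely makes explicit what the paper leaves implicit.
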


\begin{proof}
By definition of \tcompat\ forest we have that
in any path of $\forest(P)$ no two distinct nodes will have labels
$(x, t)$ $(x', t)$ so \pre\alpha-renaming
each restriction $(x \tas \type)$ of $P'$ to $(\chi_{\base(\type)} \tas \type)$
will yield the desired $Q$.
\end{proof}

Henceforth, we will write $\Phi'(P)$
for a relabelling of the forest $\Phi(P)$ such that its
labels use only names in $X_\Types$, as justified by Lemma~\ref{lemma:chi-type}.

\begin{corollary}
If a term $P$ is typably hierarchical, then every $P' \in \reach(P)$
is congruent to a term $Q$ such that
$\resboundnames(Q) \subseteq X_\Types$ and each active sequential subterm of $Q$
is in~$\Deriv[P]$.
\end{corollary}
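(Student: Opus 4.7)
The plan is to combine the invariance results of the previous subsection with Lemma~\ref{lemma:chi-type}, and then argue by induction on the length of the reduction $P \to^* P'$. Since $P$ is typably hierarchical, there exist a finite forest $\Types$ and a $P$-safe environment $\Env$ with $\Env \types P$, and $P$ is \tshaped. By Theorem~\ref{th:subj-red}, every $P' \in \reach(P)$ satisfies $\Env \types \nf(P')$, and by Theorem~\ref{th:typed-tshaped}, every such $\nf(P')$ is again \tshaped. In particular $\nf(P')$ is \tcompat, so Lemma~\ref{lemma:chi-type} produces a term $Q \congr \nf(P') \congr P'$ with $\resboundnames(Q) \subseteq X_\Types$.

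To show that each active sequential subterm of $Q$ lies in $\Deriv[P]$, I would strengthen the inductive hypothesis: at every stage of the reduction one can choose $Q$ so that, in addition, every active sequential subterm of $Q$ has the form $A\sigma$ with $A \in \deriv(P)$ and $\sigma \from \freenames(A) \to X_\Types \union \freenames(P)$. The base case is immediate from applying Lemma~\ref{lemma:chi-type} to $P$ itself: the \pre\alpha-renaming sends each bound name to some $\chi_t$, and the identity works on $\freenames(P)$. For the inductive step, write $P' \congr \new W.(S \parallel R \parallel C)$ where the reduction rule exposes the continuations of $S$ and $R$. By hypothesis $S$ and $R$ are instances $A_S\sigma_S$, $A_R\sigma_R$ of elements of $\deriv(P)$; since $\deriv(P)$ is closed under taking sequential subterms of continuations, each newly exposed sequential subterm in $S'$ and $R'\subst{x->b}$ is itself a derivative of $P$. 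The name $b$ transmitted in the reaction is either free in $P$---in which case $b \in \freenames(P)$---or it was restriction-bound in the renamed $Q$, in which case $b \in X_\Types$; hence the extended substitution still maps into $X_\Types \union \freenames(P)$.

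The main obstacle is coherence between successive applications of Lemma~\ref{lemma:chi-type}: each reduction step may introduce fresh restriction-bound names $Y_s, Y_r$ that must be reassigned to elements of $X_\Types$, and one has to ensure that these reassignments do not clash with the $\chi_t$ labels already chosen for names in $C$. What makes this manageable is that \tshaped[ness] of the reduct, guaranteed by Theorem~\ref{th:typed-tshaped}, forces any \tcompat\ witness forest to use distinct base types along each root-to-leaf path. Hence at most one $\chi_t$ per base type is ever simultaneously active along a path, and the finite pool $X_\Types$ always suffices; \pre\alpha-conversion as in Lemma~\ref{lemma:chi-type} can then be applied uniformly to the whole reduct, allowing the inductive substitution to be extended to the new bound names without breaking the codomain invariant.
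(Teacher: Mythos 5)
Your proposal is correct and follows essentially the same route as the paper, whose proof of this corollary is literally ``By Theorem~\ref{th:typed-tshaped} and Lemma~\ref{lemma:chi-type}'': invariance of \tshaped[ness] under reduction gives \tcompat[ibility] of every reachable term, and the \pre\alpha-renaming of Lemma~\ref{lemma:chi-type} supplies the witness $Q$. The extra induction you carry out to see that the exposed continuations remain instances of $\deriv(P)$ with substitutions into $X_\Types \union \freenames(P)$ is exactly the routine bookkeeping the paper leaves implicit (it appeals to the known fact that active sequential subterms of reachable terms are substitution instances of derivatives), so there is no substantive divergence.
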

\begin{proof}
By Theorem~\ref{th:typed-tshaped} and Lemma~\ref{lemma:chi-type}.
\end{proof}

The transition relation of the automaton encoding of a term $P$
is then derived from the set $\Deriv$.

Before we show how to construct the transitions of the automaton from the term,
we define a relation $\bisimA$ between terms and class memory functions.
This relation formalises how we encode the term as a labelling of data values,
and will have a crucial role in proving the soundness of the encoding.
Let $Q$ be a term reachable from $P$ and
$(\qready, f)$ be a configuration of an automaton $\Aut$.
Let $\phi = \Phi'(Q)$,
the relation $Q \bisimA f$ holds if and only if
there exists an injective function $\iota \from \nodes(\phi) \to \dataset$
such that for all $n \in \nodes(\phi)$:
\begin{enumerate}[label=\roman*)]
  \item if $\iota(n) = d$, $n' \parent_\phi n$ and $\iota(n') = d'$ then
        $d' = \pred(d)$;
  \item if $n$ is labelled with $(\chi_i, t)$ then $f(\iota(n)) = \chi_i$;
  \item if $n$ is labelled with a sequential process $Q'$ then
        $f(\iota(n)) = Q'$;
  \item for each $d$ such that $f(d) \neq \fresh$ either
          there is an $n$ such that $\iota(n) = d$
            or
          $f(d) = \qdead$.
\end{enumerate}

\begin{figure*}
  % \centering
  \ifshortversion%
    \tikzset{baseline=-1cm, x=.43cm, y=.5cm}%
  \else%
    \tikzset{baseline=-1cm, x=.43cm, y=.6cm}%
  \fi%
  \input{figures/encoding-explain-notikz}
  \ifshortversion\unskip\fi
  \caption{%
    A schema of the transitions simulating a synchronisation
    in the automaton encoding of a term.
    The trees represent the class memory functions associated with
    configurations in the run of the automaton.
    The run simulates a sender synchronising with a replicated receiver.
    The two displayed nodes in the path leading to $S$ are the ones
    labelled with the names of, from top to bottom,
      the synchronisation channel and
      the exchanged message.%
  }
  \label{fig:encoding-explain}
\end{figure*}

Let us now describe how we can simulate reduction steps of a \piterm\ with
transitions in a \NDA.
In encoding a \piterm's semantics into the transition relation of a \NDA,
we need to overcome the differences in the primitive steps allowed in the two models.
Simulating a \picalc\ synchronisation requires matching two paths,
leading to the two reacting sequential terms,
in $\dataset$ at the same time.
A step in the automata semantics can only manipulate a single path,
so we will need to split the detection of a redex in two phases:
finding the sender, then finding a matching receiver.
Moreover, finding a redex requires detecting that the path under consideration
contains a node labelled with the synchronising channel
and one with the appropriate sequential term,
ignoring how many and which other nodes are in between them.
To succinctly represent this operation, we introduce the following notation.
Fix a set $\States$ including $q, q', \lstc{l}{n}, \lstc{l'}{n}, l$.
We associate to the expression
  $\enctr{q,\lst{l}{n}}{q',\lst{l'}{n}}$
the set of transitions
\begin{multline*}
  \tran_{\States}(\enctr{q,\lst{l}{n}}{q',\lst{l'}{n}}) \is\\
  \bigl\{
    (q, \lstc{q}{m}, q', \lstc{q'}{m}) \in \States^{2m+2}
    \mid
      \exists\, \lst{i}{m}\st
    \\
        1 \leq i_1 < \dots < i_n \leq m,
        q_{i_j} = l_j, q'_{i_j} = l'_j
  \bigr\}.
\end{multline*}
When the sequence $\lstc{l}{n}$ is empty, the expression simply means
that the automaton may go from a configuration $(q,f)$ to $(q',f)$
with no condition (nor effect) on $f$.
Similarly, we associate to the expression
  $\enctradd{q,\lst{l}{n}}{q',\lst{l'}{n}}{l}$
the set of transitions
\begin{multline*}
  \tran_{\States}(\enctradd{q,\lst{l}{n}}{q',\lst{l'}{n}}{l}) \is\\
  \bigl\{
    (q, \lstc{q}{m}, \fresh, q', \lstc{q'}{m}, l) \in \States^{2m+4}
    \mid
      \exists\, \lst{i}{m}\st
    \\
        1 \leq i_1 < \dots < i_n = m,
        q_{i_j} = l_j, q'_{i_j} = l'_j
  \bigr\}.
\end{multline*}
Note that the sequence $\lstc{l}{n}$ may be empty,
in which case the data value labelled with $\fresh$
is selected among the level-1 ones.
The set of states mentioned in an expression is
$\enctrstates(\enctr{q,\lst{l}{n}}{q',\lst{l'}{n}}) \is \set{q,q',\lstc{l}{n},\lstc{l'}{n}}$
  and
$\enctrstates(\enctradd{q,\lst{l}{n}}{q',\lst{l'}{n}}{l}) \is \set{q,q',l,\lstc{l}{n},\lstc{l'}{n}}$.

To define the transitions of the encoding of a term,
we make use of some auxiliary definitions generating sets of transition expressions.
\ifshortversion%
  For lack of space, their formal definition can be found in~\appendixorfull.
\fi

$\trgen{Setup}(q,q',l,\phi)$ adds to the path leading to a data value
labelled with $l$, the nodes corresponding to a forest $\phi \in \AST{Q}$
for some $Q$.
These transitions are deterministic in the sense that a configuration
$(q, f)$ with only one data value labelled with $l$ will transition through all
the transitions dictated by $\trgen{Setup}(q,q',l,l',\phi)$ reaching $(q', f')$.
\iflongversion%
  Formally, suppose, for some $j$ and $k$,
  $\phi = \set{(x_1, \type_1)[\phi_1],\dots,(x_j, \type_j)[\phi_j]}
            \union
          \set{Q_1[],\dots,Q_k[]}$
where
  all $x_i$ are in $X_\Types$ and
  all $Q_i \in \Deriv$.
Then $\trgen{Setup}$ is defined as follows:
\bgroup
\newcommand{\nudge}{\\&\hspace{-4.5em}\union{}} %% Horrible spacing hack
\begin{align*}
  \trgen{Setup}(q,q',l,l',\phi)
    &  \is \set{\enctradd{q, l}{q_1, l}{\s{Q_1}{ready}}}
    \nudge \set{\enctradd{q_i,l}{q_{i+1},l}{\s{Q_{i+1}}{ready}} | 1 \leq i \leq j}
    \nudge \set{\enctradd{q_j,l}{q'_1,l}{\s{Q_j}{ready}}}
    \nudge \set{\enctradd{q'_i,l}{q''_i,l}{\s{x_i}{set}} | 1 \leq i \leq k}
    \nudge \Union_{i=1}^k \trgen{Setup}(q''_i,q'_{i+1},\s{x_i}{set},x_i,\phi_i)
    \nudge \set{\enctr{q'_{k+1},l}{q',l'}}
\end{align*}
where for all $1 \leq i \leq j$ and all $1 \leq i' \leq k$,
$q_i, q'_{i'}, q''_{i'}, q'_{k+1}$ are fresh intermediate control states.
in the sense that they are only mentioned in the transitions
generated by that specific application of $\trgen{Setup}$.
We allow $l$ to be the empty sequence,
in which case $l'$ needs to be the empty sequence as well.
\egroup
\fi

Similarly, we define $\trgen{Spawn}(q,q',l,l',\phi)$ to be
the set of transitions needed to append each tree in $\phi$ to nodes
in the path leading to a data value $d$ labelled with $l$;
the operation starts at control state $q$ and ends at control state $q'$
with the label for $d$ updated to $l'$.
Each tree is appended to the node with the lowest level such that every
name mentioned in its leaves is an ancestor of such node.
Since a single transition can add only one node of $\phi$,
we need a  number of transitions to complete the operation;
these transitions will however be deterministic in the same sense as the ones
required to complete a $\trgen{Setup}$ operation.
\iflongversion%
  Formally, let the forest $\phi = \Phi'(D)$ consist of trees $\lstc{\theta}{k}$,
for a term $D \in \Deriv$.
We can precompute, for each $\theta_i$, the base type
  $t_i \is \min_{\parent_\Types}\set{
    t | \chi_t \in \freenames(A), n \in N_{\theta_i}, \ell_{\theta_i}(n) = A}$
when defined.
For each label $\chi_t \in X_\Types$ we also have a label $\s{\chi_t}{sp}$
we write $\chi(\theta_i)$ (resp. $\s{\chi}{sp}(\theta_i)$)
for $\chi_{t_i}$ (resp. $\s{\chi_{t_i}}{sp}$) when $t_i$ is defined,
or the empty sequence when $t_i$ is undefined
(e.g.~when $\theta_i$ does not have free variables).
Then $\trgen{Spawn}(q,q',l,l',\phi)$ is the set of transition expressions
defined as follows:
\bgroup
\newcommand{\nudge}{\\&\hspace{-7.5em}} %% Horrible spacing hack
\begin{align*}
  \trgen{Spawn}(q_0,q',l,l',\phi)
    \is\nudge
    \set{\enctr{q_{i-1}, \chi(\theta_i)\:l}{q'_{i-1}, \s{\chi}{sp}(\theta_i)\:l}
          | 1 \leq i \leq k}
    \nudge\union{}
    \Union_{i=1}^k
      \trgen{Setup}(q'_{i-1}, q_i, \s{\chi}{sp}(\theta_i), \chi(\theta_i),\theta_i)
\end{align*}
where for all $1 < h \leq k$, $q_h, q'_h$ are fresh.
\egroup
\fi

\ifshortversion
Next, we define for each $D \in \Deriv$ the set of transition expressions
$\trgen{React}(D)$ representing the steps needed to simulate in the automaton
the potential reactions of $D$:
  $\trgen{React}(M) \is \trgen{React}_{\qdead}^M(M)$,
  $\trgen{React}(\bang{M}) \is \trgen{React}_{\bang{M}}^{\bang{M}}(M)$.
The set of transition expressions $\trgen{React}_q^D(M)$ collects all the
potential reactions of $M$ as a choice of $D$;
the label $q$ is the one that should be associated with
the ``consumed'' term $D$ after a reaction has been completed.
The transitions simulating a replicated component will not mark,
as the ones for non replicated terms, the reacted term with $\qdead$,
which will represent ``garbage'' inert nodes in $f$.
For lack of space we do not present the formal definition of $\trgen{React}$
which can be found in~\appendixorfull.
\else
  We define for each $D \in \Deriv$ the set of transition expressions
$\trgen{React}(D)$ representing the steps needed to simulate in the automaton
the potential reactions of $D$.
\begin{align*}
  \trgen{React}(M) &\is \trgen{React}_{\qdead}^M(M) \\
  \trgen{React}(\bang{M}) &\is \trgen{React}_{\bang{M}}^{\bang{M}}(M)
\end{align*}
The set of transition expressions $\trgen{React}_q^D(M)$ collects all the
potential reactions of $M$ as a choice of $D$;
the label $q$ is the one that should be associated with
the ``consumed'' term $D$ after a reaction has been completed.
The transitions simulating a replicated component will not mark,
as the ones for non replicated terms, the reacted term with $\qdead$,
which will represent ``garbage'' inert nodes in $f$.
The term $\zero$ cannot initiate any step and a choice may do any action
that one of its choices can:
\begin{align*}
  \trgen{React}_q^D(\zero) &\is \emptyset \\
  \trgen{React}_q^D(M+M') &\is \trgen{React}_q^D(M) \union \trgen{React}_q^D(M')
\end{align*}
Any sender can initiate a synchronisation from the ready state:
\bgroup
\newcommand{\nudge}{\\&\hspace{-7.5em}} %% Horrible spacing hack
\begin{align*}
  \trgen{React}_q^D(\out \chi_t<\chi_{t'}>.C) \is
    \nudge
    \set{\enctr{\qready,  \chi_t \chi_{t'} D}
               {q', \s{\chi_t}{syn} \s{\chi_{t'}}{msg} \s{C}{wait}}
      | t < t'}
    \nudge\union
    \set{\enctr{\qready,  \chi_{t'} \chi_t D}
               {q', \s{\chi_{t'}}{msg} \s{\chi_t}{syn} \s{C}{wait}}
      | t > t'}
    \nudge\union
    \trgen{Spawn}(q', \q{send}, \s{C}{wait}, q, \Phi'(C)).
\end{align*}
\egroup
where $q'$ is fresh.
Here, the state $\q{send}$ signals that we are in the middle of a synchronisation,
where the sender is committed but a receiver has yet to be selected.

For the case of an input prefix $M = \inp\chi_t(x).C$ we distinguish two cases:
when the base type of $\chi_t$ is greater than the base type of $x$
no migration occurs, otherwise part of the continuation needs to be spawned
in the sender's path.
In the case when the base type of $\chi_t$ is greater than the base type of $x$,
we set
\bgroup
\newcommand{\nudge}{\\&\hspace{-7.5em}} %% Horrible spacing hack
\begin{align*}
  \trgen{React}_q^D(\inp\chi_t(x).C) \is
    \nudge
    \set{\enctr{\q{send}, \s{\chi_t}{syn} \s{\chi_{t'}}{msg} D}
               {\q{rec},  \chi_t \chi_{t'} \s{C}{rec}}
        | t \tlt t' \in \Types}
    \nudge
    \set{\enctr{\q{send}, \s{\chi_{t'}}{msg} \s{\chi_t}{syn} D}
               {\q{rec},  \chi_{t'} \chi_t \s{C}{rec}}
        | t > t' \in \Types}
    \nudge\union
    \trgen{Spawn}(\q{rec}, \qready, \s{C}{rec}, q, \Phi'(C)).
\end{align*}
\egroup
In the case when the base type of $\chi_t$ is greater than the base type of $x$,
more transitions are required.
First, we precompute for each $M = \inp\chi_t(x).C$ as above
and $t \tlt t' \in \Types$,
the two forests $\mig{C, t'}$ and $\nmig{C}$ such that
$\Phi'(C\subst{x -> \chi_{t'}}) = \mig{C, t'} \dunion \nmig{C}$ and
$\mig{C, t'}$ contains all the nodes labelled with sequential terms
tied to $\chi_{t'}$ in $C\subst{x -> \chi_{t'}}$.
As we have shown in the proof of Theorem~\ref{th:typed-tshaped},
by virtue of Lemma~\ref{lemma:tied-tree},
$\mig{C, t'}$ and $\nmig{C}$ are indeed disjoint.
Then we set:
\bgroup
\newcommand{\nudge}{\\&\hspace{-7.5em}} %% Horrible spacing hack
\begin{align*}
  \trgen{React}_q^D(\inp\chi_t(x).C) \is
    \nudge
    \set{\enctr{\q{send}, \s{\chi_t}{syn} D}
               {\q{rec},  \chi_t \s{C}{rec}}}
    \nudge\union
    \trgen{Spawn}(\q{rec}, q',\s{C}{rec},q, \nmig{C})
    \nudge\union
    \Union_{t' \in \Types} \trgen{Setup}(
      q', \qready,
      \s{\chi_{t'}}{msg}, \chi_{t'},
      \mig{C, t'}
    )
\end{align*}
where $q'$ is a fresh intermediate control state.
\egroup

\fi
Figure~\ref{fig:encoding-explain} illustrates the steps the automaton
performs when simulating a synchronisation.

\begin{definition}[Automaton encoding]
  The automaton encoding of a typably hierarchical term $P$
  is the \NDA\ $\AutEnc{P} = (\States, \delta, \qready, f)$
  where
    $\trgen{Tr} = \Union \set{\trgen{React}(D) | D \in \Deriv[P]}$,
    $\States = \enctrstates(\trgen{Tr})$,
    $\delta = \tran_\States(\trgen{Tr})$ and
    $f$ is an arbitrary class memory function such that
    $P \bisimA f$.
\end{definition}

\subsection{Soundness of the encoding}
\label{sub:bisim}

In this section we will show that the transition system of the semantics of $P$
is bisimilar to the one of $\Aut$ when restricting it to configurations
with control state equal to $\qready$.

A transition system is a tuple $(S, \to, s)$ where $S$ is a set of configurations,
$(\to) \subseteq (S \times S)$ is the transition relation and
$s \in S$ is the initial state.
Two transition systems $(S_1, \to_1, s_1)$ and $(S_2, \to_2, s_2)$
are said to be \emph{bisimilar} if there exists a relation
$(\bisim) \subseteq S_1 \times S_2$
such that $s_1 \bisim s_2$ and $\bisim$ is a \emph{bisimulation},
that is, if $s \bisim t$ then:
\begin{enumerate*}[label=(\Alph*)]
  \item for each $s' \in S_1$ such that $s \to_1 s'$
        there is a $t' \in S_2$ such that $t \to_2 t'$ and $s' \bisim t'$;
  \item for each $t' \in S_2$ such that $t \to_2 t'$
        there is a $s' \in S_1$ such that $s \to_1 s'$ and $s' \bisim t'$.
\end{enumerate*}
Establishing that two transition systems are bisimilar implies that
a wide class of properties are preserved across bisimilar states.
For our purposes, proving that the automaton encoding of a term
gives rise to a bisimilar transition system has the important consequence
that reachability can be reduced from one model to the other.

\begin{theorem}
\label{th:bisim-encoding}
The transition system $(\CMF, \readyto, f_0)$
induced by the automaton $\AutEnc{P} = (\States, \delta, \qready, f_0)$
obtained from a closed typably hierarchical term $P$,
is bisimilar to the transition system of the reduction semantics of $P$,
$(\reach(P), \to, P)$.
\end{theorem}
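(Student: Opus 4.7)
The plan is to exhibit $\bisimA$ itself as the bisimulation relation.
The base case $P \bisimA f_0$ holds by construction of $\AutEnc{P}$,
so the work is in establishing the two simulation clauses.
Since $P$ is \tshaped{} and typable, Theorems~\ref{th:subj-red} and~\ref{th:typed-tshaped}
guarantee that every $Q \in \reach(P)$ is \tshaped{} and typable,
so Lemma~\ref{lemma:phi-tcompat} ensures $\Phi'(Q) \in \AST{Q}$
and Lemma~\ref{lemma:chi-type} bounds the labels used in any reachable
forest to $X_\Types \dunion \Deriv[P]$, which is precisely the finite set
of labels available in the automaton's control states.

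For the forward direction, suppose $Q \bisimA f$ via $\iota$ and $Q \to Q'$.
By Definition~\ref{def:sem-of-picalc} the step is a synchronisation between
a sender $S$ and a receiver $R$ over some channel $a$, whose nodes $n_S, n_R$
both sit in $\Phi'(Q)$ with a common ancestor labelled by the name of $a$.
I would select in the automaton the transition from $\trgen{React}(D)$
corresponding to the derivative $D$ labelling $\iota(n_S)$:
the initial transition, firing on a path of $\iota(n_S)$,
reads the required $\chi_t$ and $\chi_{t'}$ labels (which exist by
\tcompat[ibility] of $\Phi'(Q)$) and moves to state $\q{send}$ while marking
the channel and the message. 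Next, a matching transition from $\trgen{React}$
applied to the label of $\iota(n_R)$ fires on the $n_R$-path, reading
the same marked labels and moving to $\q{rec}$. The ensuing deterministic
$\trgen{Spawn}$ and $\trgen{Setup}$ bursts then append $\Phi'(S')$ to $\iota(n_S)$'s
path and $\Phi'(R'\subst{x->b})$ to $\iota(n_R)$'s path, placing migrating
subforests under the node labelled by the message exactly as prescribed by the
$\treeins$ construction in the proof of Theorem~\ref{th:typed-tshaped}.
The resulting $f'$ satisfies $f \readyto f'$ and $Q' \bisimA f'$ by extending
$\iota$ on the freshly consumed data values; the leaves labelled $S$ and $R$
become $\qdead$ or (in the replicated case) remain pointing at $\bang M$,
matching the ``garbage''/persistence convention of $\trgen{React}$.

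For the backward direction, suppose $f \readyto f'$ via a sequence
$(\qready,f) \autto (q_1,f_1) \autto \cdots \autto (\qready,f')$ with no
intermediate state equal to $\qready$. The shape of $\trgen{React}$ forces
any such completed round to begin with a sender transition of some
$\out \chi_t<\chi_{t'}>.C \in \Deriv[P]$, continue through the deterministic
$\trgen{Spawn}$ phase to $\q{send}$, then pick a matching
$\inp\chi_t(x).C'$ transition to $\q{rec}$, and finally run a deterministic
$\trgen{Spawn}$/$\trgen{Setup}$ phase back to $\qready$; any deviation
leads to a control state from which no transition back to $\qready$ exists,
so partial rounds cannot contribute to $\readyto$. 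Using $\iota$, the two
reacting data values identify unique leaves $n_S, n_R$ of $\Phi'(Q)$
sitting on paths compatible with the read labels, and Lemma~\ref{lemma:tied-tree}
together with Lemma~\ref{lemma:phi-tied} guarantees that the corresponding
sequential terms in $Q$ are indeed parallel active subterms sharing the
expected scopes. The resulting $\pi$-reduction $Q \to Q'$ then matches
$f'$ under the extended $\iota$, by the same $\treeins$-style argument.

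The main obstacle is showing that every $\readyto$-step arises from exactly
one of the intended scenarios, i.e.\ that no spurious bisimilation target
appears. The delicate points are: \emph{(i)} the label marks $\s{C}{wait}$,
$\s{\chi_t}{syn}$, $\s{\chi_{t'}}{msg}$ etc.\ are only producible by the
sender branch and only consumable by a matching receiver branch, so
intermediate states cannot be ``shared'' across distinct would-be redexes;
\emph{(ii)} the $\trgen{Spawn}$/$\trgen{Setup}$ phases, being deterministic
once their triggering fresh state is entered, faithfully reconstruct the
forest of $C$ or $C\subst{x->b}$ without allowing escape back to $\qready$
before completion; and \emph{(iii)} in the migration branch, the choice of
$\mig{C,t'}$ vs.\ $\nmig{C}$ exactly mirrors the migratable/non-migratable
split invoked in Theorem~\ref{th:typed-tshaped}, so the resulting forest
equals $\Phi'(Q')$. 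Each of these is a careful but finite case analysis on
the syntactic form of $\trgen{React}$; chaining them yields the two
simulation clauses and hence the bisimulation.
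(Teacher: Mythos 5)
Your proposal is correct and follows essentially the same route as the paper's own argument: exhibiting $\bisimA$ as the bisimulation, simulating each $\pi$-reduction by a sender-selection/receiver-selection round with deterministic $\trgen{Spawn}$/$\trgen{Setup}$ phases, and discharging the backward direction by observing that incomplete rounds deadlock without ever revisiting $\qready$. The only inessential deviation is the ordering of the spawn phases (the encoding spawns the sender's continuation before the receiver is selected, reaching $\q{send}$ only afterwards), which does not affect the argument.
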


The result is proved by showing that the relation $\bisimA$ defined above,
is a bisimulation that relates the initial states of the two transition systems.
\ifshortversion
  A proof sketch can be found in~\appendixorfull.
\else
  By definition of $\AutEnc{P}$ we have $P \bisimA f_0$.
Showing that $\bisimA$ is indeed a bisimulation amounts to showing that
if $Q \bisimA f$ then:
\begin{enumerate}[label=(\Alph*)]
  \item for each $Q'$ such that $Q \redto Q'$
        there is a $f'$ such that $f \readyto f'$ and $Q' \bisimA f'$;
        \label{eq:bisim-proc}
  \item for each $f'$ such that $f \readyto f'$
        there is a $Q'$ such that $Q \redto Q'$ and $Q' \bisimA f'$.
        \label{eq:bisim-aut}
\end{enumerate}
To show this holds we rely on the hypothesis that $Q \bisimA f$
to get a $\iota$ relating $\Phi'(Q)$ and $f$.
The proof then closely follows the constructions
in the proof of Theorem~\ref{th:typed-tshaped}.
If $Q \redto Q'$ we can find two nodes $n_S$ and $n_R$ in $\Phi'(Q)$
labelled with the sender and receiver processes responsible for the reduction;
they will share an ancestor $n_a$ labelled $(\chi_t, t)$
corresponding to the channel on which they are synchronising.
On the automaton side, we have that $(\qready, f)$ matches the rule generated
from the sender by selecting the data value $d_S = \iota(n_S)$,
a data value $d_b$ corresponding to the name being sent and $d_a = \iota(n_a)$.
This leads to
$(q', f)$ where
  $f'(d_a) = \s{\chi_t}{syn}$,
  $f'(d_b) = \s{\chi_{t'}}{msg}$,
  $f'(d_S) = \s{S'}{wait}$.
From here only one of the transitions generated from $\trgen{Spawn}$
of the continuation is enabled as there is only one node marked with `wait'.
The transitions are deterministic from here until a configuration
$(\q{send}, f')$ is reached with $f'$ representing the initial forest with
the continuation of the sender added and with the node of the sender updated
with either $\qdead$ or the sender itself if it is a replicated component.
At this point there is only one data value marked with `syn' and
the only transitions from $\q{send}$ are the ones generated from a process
that can receive from the marked channel.
We can pick the rule that has been generated
from the receiver involved in the reduction from $Q$ to $Q'$
and go to a configuration with control state $\q{rec}$.
From this configuration the transitions are deterministic.
The next configuration reached with control state
$\qready$ is bisimilar to $Q'$ by tracing the effects these transitions have
on the class memory function.
Fresh data values get assigned labels compatible with the non migrating
continuations of the receiver first,
and then the migrating ones as children of $d_b$;
data values with meaningless labels get assigned the label $\qdead$.

To prove~\ref{eq:bisim-aut} we proceed similarly.
Every reduction sequence from $(\qready, f)$ to $(\qready, f')$
must start with a transition to a configuration with control state $\q{send}$,
which is generated by rules extracted from a sender $S$
labelling a data value $d_S$;
since $Q \bisimA f$ we know that $n_S = \inv\iota(d_S)$ is labelled with $S$
in $\Phi'(Q)$, hence $S$ is an active sequential process of $Q$.
To complete this part of the proof we only need to follow the transitions
of the automaton in the same way as done for the previous point,
and note that the only way the automaton can reach a configuration
with control state $\qready$ from $(\qready, f)$
is by selecting a receiver that can synchronise with the selected sender.
This is important because there may be transitions from $(\qready, f)$
corresponding to selecting a sender trying to synchronise
on a channel on which no receiver is listening.
This transition would lead to a deadlocked configuration
(one with no successors)
but never going through a configuration with control state $\qready$.

\fi

\subsection{Encoding of \NDA\ into Typably Hierarchical terms}
\label{sec:nda2pi}

In this section we sketch how an \NDA\ can be encoded into
a bisimilar typably hierarchical \piterm.
\ifshortversion
  As this encoding is less surprising, we do not present it in full,
  the technical definitions and proofs can be found in~\appendixorfull.
\fi

Similarly as the encoding in the opposite direction,
the \picalc\ encoding of an automaton \Aut\ will
represent a reachable configuration $(q,f)$
using the forest of a reachable term $P$.
A term representing a reachable configuration may need to execute several steps
before reaching another term representing a successor configuration.

Fix an automaton $(\States, \delta, q_0, f_0)$.
For simplicity we show the case where $\forall d \st f_0(d) = \fresh$,
the general case follows the same scheme.
First we note that every transition in $\delta_i$ is of the form
\shortversioninline{
  (q_0, \lst{q}{j}, \underbrace{\fresh,\dots,\fresh}_{i-j},  q'_0, \lst{q'}{i})
}
for some $1 \leq j \leq i$, where $q_k\in \States$ for all $0\leq k\leq j$.
Instead of using the partition $\delta = \Union_{i=1}^\ell \delta_i$
we re-partition the transition relation as
$\delta = \Union_{j=0}^\ell \theta_j$
where \shortversioninline{
  \theta_j \is \Union_{i=j}^\ell \set{
    (q_0, \lst{q}{j},
      \underbrace{\fresh,\dots,\fresh}_{i-j},
     q'_0, \lst{q'}{i})
     \in \delta_i
   }
}
(fixing $\delta_0 = \emptyset$ for uniformity).
We introduce a channel name $c^i_q$
  for each $q \in \States$
  and each level of the automaton $i$.
Our encoding will show no mobility,
so each such channel $c$ will have type $t_c$,
hence no message will be exchanged on synchronisation;
we abbreviate this kind of synchronisation with $c.P$ and $\outz c.Q$.%
\footnote{%
  It is easy to see that this can be accommodated in our syntax
  by assuming a global name $r$, typed with a type $t_r$
  that is set to be the parent of each root in $\Types$;
  a synchronisation over a channel $c \tas t_c[t_r]$ without exchanging a message
  is then represented by $\inp c(x).P$ and $\out c<r>.Q$
  with $x \not\in \freenames(P)$.%
}
\newcommand{\tr}{\mathit{tr}}%
Let $\Chan^i \is \set{(c^i_q \tas t_{c^i_q}) | q \in \States }$.
Given a transition $\tr \in \theta_j$ where
$\tr = (q_0, \lst{q}{j}, \fresh,\dots,\fresh,  q'_0, \lst{q'}{i})$
we define the term $A_\tr$ to be
\[
  A_\tr \is
    c^0_{q_0}.\cdots.c^j_{q_j}.
      \new \Chan^{j+1}. \cdots \new \Chan^{i}.
        \left(
        \Parallel*_{k = 0}^i \outz c^k_{q'_k}
          \parallel
        \Parallel*_{k = j+1}^i P_{\theta_{k}}
        \right)
\]
where $P_{\theta_j} \is \Parallel_{\tr \in  \theta_j} \bang{(A_\tr)}$
and $\Parallel*_{k = i+1}^i P_{\theta_{k}} = \zero$.
% Note we could have used choice but this is equivalent and works without +
Note that these definitions are well-defined since they are not recursive.
The \emph{\piterm\ encoding} of the \NDA\ $\Aut = (\States, \delta, q_0, f_0)$ is then defined as
$\ProcEnc{\Aut} \is \new \Chan^0.( P_{\theta_0} \parallel \outz c^0_{q_0} )$.

Similarly to our previous result, the encoding needs more than one step
to simulate a single transition of the automaton.
Hence, to state the result on the correspondence
between the semantics of the automaton and its encoding,
we define a derived transition system on \piterm{s} as follows.
Let $P$ and $Q$ be two \piterm{s} such that $P \to^+ Q$,
if  $P \congr \new \Chan^0.(\outz c \parallel P')$
and $Q \congr \new \Chan^0.(\outz c' \parallel Q')$
with $c$, $c' \in \Chan^0$,
and none of the intermediate processes in the reduction from $P$ to $Q$
is in that form,
then $P \chanzto Q$.
Note that even after \pre\alpha-renaming a term in the encoding,
we would be able to pinpoint names from each $\Chan^i$
by looking at their types, as \pre\alpha-renaming does not affect type annotations.

\begin{theorem}
  \label{th:bisim-encoding-bis}
  The transition system generated by
  the semantics of a level-$\ell$ \NDA\ \Aut\ and
  the transition system $\chanzto$ with $\ProcEnc{\Aut}$ as initial state,
  are bisimilar.
\end{theorem}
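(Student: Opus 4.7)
The plan is to exhibit an explicit relation $\bisimP$ between NDA configurations and pi-terms reachable from $\ProcEnc{\Aut}$ via $\chanzto$, and show that it is a bisimulation containing the pair of initial states, following the same general strategy as the proof of Theorem~\ref{th:bisim-encoding}. For each configuration $(q,f)$, I define a representative term $\mathsf{Term}(q,f)$ by translating the (finite) forest of non-fresh data values of $f$ into a matching structure of nested restrictions: for every data value $d$ at level $i \geq 1$ with $f(d) = q' \in \States$, introduce a scope $\new \Chan^i.( \outz c^i_{q'} \parallel P_{\theta_i} \parallel \cdots)$ nested inside the scope for $\pred(d)$; at the outermost layer wrap everything in $\new \Chan^0.(\outz c^0_q \parallel P_{\theta_0} \parallel \cdots)$, which carries the control state. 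Then declare $(q,f) \bisimP P$ iff $P \congr \mathsf{Term}(q,f)$. The initial states are related because $\ProcEnc{\Aut}$ coincides with $\mathsf{Term}(q_0, f_0)$ when $f_0$ is everywhere $\fresh$.

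For the forward simulation, suppose $(q,f) \autto (q',f')$ by firing a transition $\mathit{tr}$ on a level-$i$ data value $d$ whose deepest non-fresh ancestor sits at level $j$; then $\mathit{tr} \in \theta_j$ and a copy of $\bang A_{\mathit{tr}}$ is present inside $P_{\theta_j}$ in that ancestor's scope. After unfolding the replication, the input prefixes $c^0_q.c^1_{q_1}.\cdots.c^j_{q_j}$ of $A_{\mathit{tr}}$ synchronise successively with the outputs $\outz c^k_{q_k}$ already present in the ancestor scopes (these are available precisely because the NDA's precondition is satisfied); the restrictions $\new \Chan^{j+1}.\cdots.\new \Chan^i$ create fresh scopes representing the new fresh data values, soundly by \pre\alpha-conversion; finally the outputs $\outz c^k_{q'_k}$ and the replicated handlers $P_{\theta_{j+1}}, \ldots, P_{\theta_i}$ are emitted. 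Using scope extrusion to lift each $\outz c^k_{q'_k}$ ($k \leq j$) to its intended level-$k$ ancestor scope, one checks that the resulting term is congruent to $\mathsf{Term}(q',f')$. Crucially, $\outz c^0_{q'_0}$ is produced only after all the input prefixes and inner restrictions have fired, so no intermediate term along the way has the shape required by $\chanzto$, and the whole derivation is captured by a single $\chanzto$-step.

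For the backward direction, consider $P \chanzto P'$ with $(q,f) \bisimP P$. The first reduction must consume the unique unrestricted output $\outz c^0_q$ at the outermost scope, and the only processes that can do so are input-prefixed instances of $A_{\mathit{tr}}$ whose first prefix is $c^0_q$; each such $\mathit{tr}$ is an NDA transition enabled by the current control state. From this point, all remaining reductions are forced to follow $A_{\mathit{tr}}$'s body---no other $A_{\mathit{tr}'}$ can start because no $\outz c^0_{?}$ is available anywhere---and the remaining non-determinism consists precisely of which matching output $\outz c^k_{q_k}$ to consume at each subsequent prefix, mirroring the NDA's choice of ancestor data values. The main obstacle is establishing this ``all-or-nothing'' invariant for $\chanzto$: I must verify that no prefix of the reduction sequence can regenerate an outermost $\outz c^0_{?}$ before $A_{\mathit{tr}}$ completes, and that a partial $A_{\mathit{tr}}$ cannot silently stall in a state matching the $\chanzto$ form. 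Both facts follow from inspecting the fixed syntactic shape of $A_{\mathit{tr}}$: the sole emission of $\outz c^0_{q'_0}$ is guarded by every preceding input and restriction. Once the invariant is in hand, tracing the deterministic suffix of the reduction shows $P' \congr \mathsf{Term}(q',f')$ for the configuration $(q',f')$ obtained by applying $\mathit{tr}$ with the chosen ancestor witnesses, completing the bisimulation argument.
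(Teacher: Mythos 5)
Your proposal is correct and follows essentially the same route as the paper's proof: the paper likewise builds a structural correspondence between the forest of non-fresh data values of $f$ and the nested restriction scopes of the term (stated there as an inductive family of relations $\bisimP_i$ on labelled forests rather than via a canonical term $\mathsf{Term}(q,f)$, a cosmetic difference), traces the $j+1$ synchronisations of $A_{\mathit{tr}}$ for the forward direction, and for the backward direction relies on the same key invariant you identify --- that the unique level-$0$ output is consumed first and only regenerated after all of $A_{\mathit{tr}}$'s guards fire, so stalled partial runs never produce a $\chanzto$ step.
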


\iflongversion
\begin{proof}
  Fix an \NDA\ $\Aut = (\States, \delta, q_0, f_0)$
with $\delta = \Union_{0 \leq j \leq \ell} \theta_j$ as before.
We prove the theorem by exhibiting a bisimulation relation
$(\bisimP) \subseteq
  (\States \times (\dataset \pto \States_\fresh)) \times \reach(\ProcEnc{\Aut})$
between the two transition systems.
For a class memory function $f \from \dataset \to \States_\fresh$,
let $f(\dataset)$ be the $\States$-labelled forest with
the set $N=\set{ d \in \dataset | f(d) \neq \fresh }$ as nodes,
each labelled with $f(d)$
and with $\pred_\dataset$ restricted to $N$ as parent relation.
We first define a hierarchy of relations $\bisimP_i$
between $\States$-labelled forests and \piterm{s},
for $0 \leq i \leq \ell$, as follows:
$
  q[\set{\lstc{\phi}{n}}]
    \bisimP_i
  \new \Chan^i.(
    P_{\theta_i} \parallel
    \outz c^i_q \parallel
    \Parallel_{1 \leq j \leq n} P_j
  )
$
if, for all $1 \leq j \leq n$, $\phi_j \bisimP_{i+1} P_j$.
Since $n$ must be 0 for $i=\ell$, the relation is well-defined.
Let $P \in \reach(\ProcEnc{\Aut})$ and $(q, f)$ be a reachable configuration of \Aut.
Then $(q, f) \bisimP P$ if there exists a $P' \congr P$ such that
$q_0[f(\dataset)] \bisimP_0 P'$.
To show that $\bisimP$ is indeed a bisimulation, we have to prove that
if $(q,f) \bisimP P$ then:
\begin{enumerate}[label=(\Alph*)]
  \item for each $(q',f')$ such that $(q,f) \autto (q',f')$
        there is a $P'$ such that $P \chanzto P'$ and $(q',f') \bisimP P'$;
        \label{eq:bisim-bis-aut}
  \item for each $P'$ such that $P \chanzto P'$
        there is a $(q',f')$ such that $(q,f) \autto (q',f')$
        and $P' \bisimP (q',f')$.
        \label{eq:bisim-bis-proc}
\end{enumerate}
To prove \ref{eq:bisim-bis-aut} we proceed as follows;
suppose $(q,f) \autto (q',f')$ is an application of a transition
$t = (q, \lst[1]{q}{j}, \fresh,\dots,\fresh,
  q', \lst[1]{q'}{i}) \in \theta_j$
then the forest $q[f(\dataset)]$ has a path from the root to a leaf
labelled with $q,\lstc{q}{j}$, which, by definition of $\bisimP$,
implies that $P$ is congruent to a term with the following shape:
\[
  \new \Chan^0.(
    R_0 \parallel
    \outz c^0_q \parallel
    \new \Chan^1.(
      R_1 \parallel
      \outz c^1_{q_1} \parallel
      \cdots
        \new \Chan^j.(
          R_j \parallel
          \outz c^j_{q_j} \parallel
          P_{\theta_j}
        )
      \cdots
    ).
\]
By construction, $P_{\theta_j} \congr \bang{(A_\tr)} \parallel R$ and
$A_\tr$ is a process inputting once from $c^0_q$ then
once from each $c^k_{q_k}$ in sequence.
From the shape of $P$ we can conclude all of these input prefixes
can synchronise with the dual $\outz c^k_{q_k}$ processes
in parallel with them, activating, in $j+1$ steps, the continuation
$C =
  \new \Chan^{j+1}. \cdots \new \Chan^{i}.
    \left(
    \outz c^0_{q'} \parallel
    \Parallel_{k=1}^\ell \outz c^k_{q'_k}
      \parallel
    P_{\theta_{j+1}}
    \right)$,
yielding the process
\[
  P' \congr \new \Chan^0.(
    R_0 \parallel
    \outz c^0_{q'} \parallel
    \new \Chan^1.(
      R_1 \parallel
      \outz c^1_{q'_1} \parallel
      \cdots
        \new \Chan^i.(
          R_i \parallel
          \outz c^j_{q'_j}
        )
      \cdots
    )
\]
where for $k$ between $j+1$ and $i$, $R_k = P_{\theta_k}$.
Now consider the forest $q'[f'(\dataset)]$:
it coincides with $q[f(\dataset)]$ except
on the path we singled out, now labelled with
$q', q'_0, \dots, q'_j$ and continuing to a leaf
with nodes labelled $q'_{j+1},\dots,q'_i$.
It is easy to see that $q'[f'(\dataset)] \bisimP P'$.

To prove \ref{eq:bisim-bis-proc} one can proceed similarly,
by observing that even if $\ProcEnc{\Aut}$ can perform some reductions which deadlock
that do not correspond to reductions of the automaton,
these steps cannot lead to a state with $\outz c^0_{q'}$
as one of the active sequential processes.
This claim is supported by the following easy to verify invariant:
in any term $P$ reachable from $\ProcEnc{\Aut}$, for each bound name $c$ in $P$
there is at most one active sequential subterm of $P$ outputting on $c$.
This is satisfied by $\ProcEnc{\Aut}$ and preserved by reduction.

\end{proof}
\fi

\begin{theorem}
  \label{th:encoding-is-hierarchical}
  $\ProcEnc{\Aut}$ is typably hierarchical.
\end{theorem}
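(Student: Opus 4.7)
The plan is to exhibit explicitly a finite forest $\Types$ of base types and a $P$-safe environment $\Env$ witnessing that $\ProcEnc{\Aut}$ is both \tshaped\ and typable. The key structural property of the encoding is that channels of $\Chan^i$ are introduced strictly below those of $\Chan^{i'}$ for $i' < i$: every active use of $c^i_q$ inside a reachable term lies within the scope of some $c^{i'}_{q'} \in \Chan^{i'}$ for every $i' < i$. This natural stratification by level suggests that $\Types$ should linearise the base types first by level, then by state.

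Concretely, I would fix an enumeration $\States = \set{\lstc{q}{n}}$ and take $\Types$ to be the chain $t_r \parent t_{c^0_{q_1}} \parent \cdots \parent t_{c^0_{q_n}} \parent t_{c^1_{q_1}} \parent \cdots \parent t_{c^\ell_{q_n}}$, where $t_r$ is the auxiliary root type from the no-message-exchange convention (so each $c^i_q$ has type $t_{c^i_q}[t_r]$). Set $\Env \is \set{r \tas t_r}$; this is $P$-safe because $t_r$ is a strict ancestor in $\Types$ of every base type annotating a restriction in $\ProcEnc{\Aut}$.

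To verify \tshaped[ness], observe that in every subterm of $\ProcEnc{\Aut}$ the active restrictions come from a contiguous block $\Chan^{j} \union \cdots \union \Chan^{i}$ for some $j \leq i$, and every active sequential subterm lies below all of them. Arranging those restrictions along the chain of $\Types$ yields a forest in which every trace has strictly increasing base types, so $\Phi_\Types$ applied to each subterm produces a \tcompat\ forest, and \tshaped[ness] follows by Lemma~\ref{lemma:phi-tcompat}.

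The typing derivation follows the syntactic shape of $\ProcEnc{\Aut}$. Rule \ref{rule:bang} introduces no base-type constraints, while \ref{rule:out} only requires type consistency between $c^k_q \tas t_{c^k_q}[t_r]$ and $r \tas t_r$, which is immediate. For \ref{rule:in} at each prefix $\inp{c^k_{q_k}}(x).(\cdots)$, the first disjunct $\base(t_r) \tleq t_{c^k_{q_k}}$ holds since $t_r$ is the root of $\Types$. The main work is \ref{rule:conf}, applied both to $\ProcEnc{\Aut}$ itself and to each continuation $\new\Chan^{j+1}.\cdots.\new\Chan^i.(\cdots)$ inside the body of an $A_\tr$ with $\tr \in \theta_j$: the anticipated obstacle is establishing the invariant that when typing at depth $j$ the ambient environment contains precisely $\Env \union \Chan^0 \union \cdots \union \Chan^j$ together with the (unused) input-bound variables, so that the only free names appearing in any component directly beneath the current restriction come from channels of level $\leq j$ or from $r$ itself. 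Once that invariant is in place, the side condition $\base(\Env(\freenames(A_{i'}))) \tlt \base(\type_x)$ reduces to a direct comparison in the chain $\Types$ between base types of level $\leq j$ and base types of level $\geq j+1$, which holds by construction of $\Types$.
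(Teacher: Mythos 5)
Your proof is correct and follows essentially the same route as the paper's: both fix $\Types$ to be a single chain that orders the base types $t_{c^i_q}$ first by level $i$ and then by an arbitrary total order on states, and both discharge the \textsc{Par} constraints by observing that a component under $\new \Chan^{j+1}.\cdots\new \Chan^{i}$ can only have free names of level at most $j$ (plus $r$), whose base types strictly precede those of levels $j+1,\dots,i$ in the chain. The only cosmetic differences are that you discharge rule \textsc{In} via the first disjunct $\base(t_r) \tleq t_{c^k_{q_k}}$ whereas the paper notes the second disjunct holds vacuously because no continuation is migratable, and that you treat the auxiliary name $r$, the environment $\set{r \tas t_r}$ and its $P$-safety, and the \tshaped[ness] check explicitly where the paper leaves them implicit.
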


\iflongversion
\begin{proof}
  Assume an arbitrary strict total order $<_\States$
on the automaton's control states;
let then $(\Types, \parent)$ be the forest with nodes
  $\Types = \set{t_{c^i_q} | 0 \leq i \leq \ell, q \in \States}$
and $t_{c^i_q} \parent t_{c^i_{q'}}$ if $q <_\States q'$,
and $t_{c^i_q} \parent t_{c^{i+1}_{q'}}$ if $q$ and $q'$ are respectively
the maximum and minimum states with respect to $<_\States$.
It can be proved that $\emptyset \types \nf(\ProcEnc{\Aut})$:
since no messages are exchanged over channels, the constraints on types
are trivially satisfied;
for the same reason, no sequential term under an input prefix is migratable,
making all the base type constraints in rule~\ref{rule:in} trivially valid.
The base type inequalities of rule~\ref{rule:conf} are also satisfied since
in $A_\tr$ for $\tr \in \theta_j$, every $P_{\theta_k}$ might be tied to any channel $c$ in $\Chan^{j+1}\union\dots\union\Chan^{i}$
but can only have as free names channels in $\Chan^h$ with $h \leq j$,
which all have base types smaller than $c$.
\end{proof}
\fi

%!TEX root = main.tex
\section{Related Work}
\label{sec:relwork}

%%% Depth-bounded systems  %%%%%%%%%%%%%%%%%%%%%%%%%%%%%%%%%%%%%%%%%%%%%%%%%%%%%
\emph{Depth boundedness} in the \picalc{} was first proposed in~\cite{Meyer2009c} and later studied in~\cite{meyer:db} where it is proved that depth-bounded systems are {well-structured transition systems}.
%This result implies that such important verification problems as termination and coverability are decidable for this class.
In~\cite{wies:fwd} it is further proved that (forward) coverability is decidable even when the depth bound $k$ is not known \emph{a priori}.
In~\cite{widening} an approximate algorithm for computing the \emph{cover set}---an over-approximation of the set of reachable terms---of a system of depth bounded by $k$ is presented.
All these analyses rely on the assumption of depth-boundedness and may even require a known bound on the depth to terminate.

%%% Other fragments of picalc  %%%%%%%%%%%%%%%%%%%%%%%%%%%%%%%%%%%%%%%%%%%%%%%%%
%The depth-bounded fragment is the most expressive known subset of \picalc\ enjoying decidable verification problems.
Several other interesting fragments of the $\pi$-calculus have been proposed in the literature, such as
  name bounded~\cite{meyer:name},
  mixed bounded~\cite{meyer:mixed}, and
  structurally stationary~\cite{Meyer2009c}.
Typically defined by a non-trivial condition on the set of reachable terms -- a \emph{semantic} property, membership becomes undecidable.
Links with Petri nets via encodings of proper subsets of depth-bounded systems have been explored in \cite{meyer:mixed}.
Our type system can prove depth-boundedness for processes that are breadth and name unbounded, and which cannot be simulated by Petri nets.
Recently H\"{u}chting et al.~\cite{boundsmobility} proved several relative classification results between fragments of \picalc.
Using %an acceleration technique based on
Karp-Miller trees,
they presented an algorithm to decide if an arbitrary \piterm\ is
bounded in depth by a given $k$.
The construction is based on an (accelerated) exploration of the state space of the \piterm\, which can be computationally expensive.
By contrast, our type system uses a very different technique
leading to a quicker algorithm, at the expense of precision.
Our forest-structured types can also act as specifications,
offering more intensional information to the user than just a bound $k$.

%%% Type systems for \picalc  %%%%%%%%%%%%%%%%%%%%%%%%%%%%%%%%%%%%%%%%%%%%%%%%%%
Our type system is based on Milner's sorts for the \picalc~\cite{Milner:1993},
later refined into I/O types~\cite{PierceS93} and their variants~\cite{PierceS00}.
Based on these types is a system for termination of \piterm{s}~\cite{piterm} that uses a notion of levels, enabling the definition of a lexicographical ordering.
Our type system can also be used to determine termination of \piterm{s} in an approximate but conservative way, by
composing it with a procedure for deciding termination of depth-bounded systems.
Because the respective orderings between types of the two approaches are different in conception,
we expect the terminating fragments isolated by the respective systems to be incomparable.

A rather different approach to typing \piterm{s}
is presented in~\cite{behavioural} where behavioural types are introduced.
%These type systems are based on two key ideas: first, types should carry information about the sequence of actions that can happen on a channel, and not just a static invariant;
%second, by decorating types with capabilities and obligations, liveness properties could be checked automatically.
Roughly speaking, the type system can extract from a \piterm\ $P$
a type which is itself a CCS term simulating $P$.
Properties of the type (such as absence of locks)
can then be transferred back to $P$ by virtue of this simulation.
By contrast, our types do not carry information about
the evolution of the system;
if a system is proved depth-bounded by the type system,
its evolution can be analysed quite accurately
using the decision procedures for depth-bounded systems.

%%% Automata on data words  %%%%%%%%%%%%%%%%%%%%%%%%%%%%%%%%%%%%%%%%%%%%%%%%%%%%
{Nested Data Class Memory Automata} were introduced~\cite{conrad:14} as an extension of Class Memory Automata to operate over tree-structured datasets. %, a new notion of nesting for data languages.
Without the local acceptance condition, NDCMA have decidable emptiness, and in the deterministic case are closed under all Boolean operations (see~\cite{conrad:14}).
Thanks to these algorithmic properties, NDCMA have recently found applications in algorithmic game semantics \cite{conrad:15}.

%NDCMA have found applications in algorithmic game semantics, specifically in the design of algorithms for deciding observational equivalence of several low-order fragments of Reduced ML \cite{conrad:15}, with the nested dataset reflecting the tree-structure of the threads in the game-semantic plays.

Automata that support name reasoning have been used to model the $\pi$-calculus, going back to the pioneering work of History-Dependent Automata \cite{MontanariP97}.
More recently, Tzevelekos~\cite{Tzevelekos11} introduced Fresh-Register Automata (FRA), which operate on an infinite alphabet of names and use a finite number of registers to process fresh names;
crucially it can compare incoming names with previously stored ones.
He showed that \emph{finitary \piterm{s}} (i.e.~processes that do not grow unboundedly in parallelism) are finitely representable in FRA.

\section{Future Directions}
\label{sec:future}

The type system we presented in Section~\ref{sec:typesys} is very conservative:
the use of simple types, for example, renders the analysis context-insensitive.
Although we have kept the system simple so as to focus on the novel aspects, a number of improvements are possible.
First, the extension to the polyadic case is straightforward.
Second, the type system can be made more precise by using subtyping and polymorphism to refine the analysis of control and data flow.
Third, the typing rule for replication introduces a very heavy approximation:
when typing a subterm, we have no information about which other parts of the term (crucially, which restrictions) may be replicated.
\ifshortversion
  In~\appendixorfull\ we briefly sketch a possible enhancement that is sensitive to replication.
\else

  Let us explain the issue through an example.
  Let
  $ A= \tact.\new b.\tact.\new c.\bang{(\out a<c> + \inp a(x).\out b<x>)} $
and consider the two terms
$P_1 = \new a.A$ and $P_2 = \new a.\bang{A}$.
The typing derivations for the two terms are almost identical and the set of constraints they impose on $\Types$ is the same.
However $P_1$ is depth bounded, $P_2$ is not.
Therefore the type system must reject both.
We briefly sketch a possible enhancement that is sensitive to replication.
Take the term
  $ \new b\tas t_b[t]. \new l\tas t_l[t]. \new r\tas t_r[t].
      \bang{\inp b(x).\inp l(y).(\out r<x> \parallel \out b<x>)} $
which acts as a 1 cell buffer between $l$ and $r$.
This term cannot be typed by the current type system because
  $\inp l(y).(\out r<x> \parallel \out b<x>)$
is migratable for the input $\inp b(x)$
thus requiring $t_l \tlt t_b$,
but at the same time $\out b<y>$ is migratable for $\inp l(y)$
requiring $t_b \tlt t_l$, leading to contradiction.
We propose to add to the structure of $\Types$ a notion of multiplicities
of base types; a base type can be marked with either $1$ or $\omega$.
Suppose the forest of a term has a path $p$ from a node $n$ to a node $n'$
where the trace of $p$ consists only of base types marked with $1$.
This situation will represent the fact that no branching will ever occur between the two replications corresponding to $n$ and $n'$ and having one of the two names in the scope guarantees that the other one is in the scope too.
In other words, all the restrictions represented by nodes in $p$ can be though as a indivisible unit;
when typing an input term on a name with base type $t$,
the constraints of rule~\ref{rule:in} can be relaxed to require the free variables of migratable terms to have base types smaller
than the lowest $t'$ such that the path between $t$ and $t'$ in $\Types$ is formed only of base types with multiplicity $1$.
In the case of buffer example, we observe that $b$, $l$ and $r$ could all be assigned base types of multiplicity $1$ thus replacing the two conflicting constraints with the constraints $t_l \leq t'$ and $t_b \leq t'$ where $t'$ is the greatest among $t_l$, $t_r$ and $t_b$.
\fi
The formalisation and validation of this extension is a topic of ongoing research.

\section*{Acknowledgement}
We would like to thank Damien Zufferey for helpful discussions
on the nature of depth boundedness.

\bibliographystyle{abbrvnat}
\bibliography{biblio}

\ifshortversion
\ifincludeappendix

  \clearpage

  \appendices

  \section{Supplementary Materials for Section~\ref{sec:prelim}}
\subsection{Definition of $\nf$}

\begin{figure*}[tb]
  \centering
  \input{definitions/nf}
  \caption{Definition of the $\nf \from \PiTerms \to \PiNf$ function.}
  \label{fig:nf}
\end{figure*}

The function $\nf \from \PiTerms \to \PiNf$,
defined in Figure~\ref{fig:nf},
extracts, from a term, a normal form structurally equivalent to it.

\begin{lemma}
  For each $P \in \PiTerms$,
 $P \congr \nf(P)$
\end{lemma}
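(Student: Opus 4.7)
The plan is to proceed by structural induction on $P \in \PiTerms$, following the case analysis that defines $\nf$ in Figure~\ref{fig:nf}. The base case $P = \zero$ is immediate since $\nf(\zero) = \zero$. The cases $P = \pi.P'$ and $P = \new x.P'$ follow directly from the induction hypothesis combined with the fact that $\congr$ is a congruence (so it is preserved under prefixing and restriction).

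For the case $P = M + M'$, I would apply the induction hypothesis to both $M$ and $M'$ and then handle the three subcases of the definition: the two subcases where one of $\nf(M)$ or $\nf(M')$ equals $\zero$ are discharged using the structural law $N + \zero \congr N$ (neutral element of choice), while the third subcase follows directly from congruence under $+$. The case $P = \bang{M}$ is similar, using the rule $\bang \zero \congr \zero$ to deal with the degenerate subcase, and congruence under replication otherwise.

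The main obstacle is the parallel composition case $P = P_1 \parallel P_2$. The two subcases where one operand's normal form is $\zero$ are handled by the neutral element law for $\parallel$ and congruence. The interesting subcase is when $\nf(P_1) = \new X_{P_1}.N_{P_1}$ and $\nf(P_2) = \new X_{P_2}.N_{P_2}$ with $\actrestr(N_{P_i}) = \emptyset$; here, one must show
\[
  P_1 \parallel P_2 \congr \new X_{P_1} X_{P_2}.(N_{P_1} \parallel N_{P_2}).
\]
After invoking the induction hypothesis to replace $P_i$ with $\nf(P_i)$, the desired equation is obtained by iterated scope extrusion. For this to be applicable we need the bound names in $X_{P_1}$ to be disjoint from the free names of $\nf(P_2)$ and vice versa, and the bound names of $P_1$ and $P_2$ to be pairwise disjoint; both conditions can be ensured by a preliminary $\alpha$-conversion step, which preserves $\congr$. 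The hypothesis that $\actrestr(N_{P_i}) = \emptyset$ guarantees that all restrictions of $\nf(P_i)$ have already been lifted to the outermost layer, so no further extrusion from within $N_{P_i}$ is required.

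The whole argument is essentially routine once the $\alpha$-renaming bookkeeping is set up, so I do not expect any serious technical difficulty beyond making that bookkeeping explicit in the parallel case.
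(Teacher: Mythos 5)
Your proposal is correct and matches the paper's proof, which is exactly the same straightforward structural induction on $P$ (the paper merely states ``A straightforward induction on $P$'' without spelling out the cases). The only detail worth keeping explicit is the one you already flag: the $\alpha$-renaming bookkeeping needed to justify scope extrusion in the parallel case, which the paper's \ref{nameuniq} assumption is there to handle.
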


\begin{proof}
A straightforward induction on $P$.
\end{proof}

\subsection{Proof of Lemma~\ref{lemma:forest-nf}}

  \section{Supplementary Materials for Section~\ref{sec:t-compat}}

\subsection{Proof of Lemma~\ref{lemma:tied-tree}}

\begin{proof}
  
\end{proof}

\subsection{Proof of Lemma~\ref{lemma:tcompat-takeout}}

\begin{proof}
  
\end{proof}

\subsection{Proof of Lemma~\ref{lemma:phi-tcompat}}

\begin{proof}
  
\end{proof}

  \section{Supplementary Materials for Section~\ref{sec:typesys}}

\subsection{Proof of Lemma~\ref{lemma:subst}}

\begin{proof}
   
\end{proof}

\subsection{Proof of Theorem~\ref{th:typed-tshaped}}

\begin{proof}
   
\end{proof}

\subsection{Role of $\phimig$, $\phinonmig$ and $\treeins$}

To illustrate the role of $\phimig$, $\phinonmig$
and the $\treeins$ operation in the above proof,
we show an example that would not be typable if we choose a simpler
``migration'' transformation.

  \section{Supplementary Materials for Section~\ref{sec:automata}}

\subsection{Definition of $\trgen{Setup}$ and $\trgen{Spawn}$}

$\trgen{Setup}(q,q',l,\phi)$ adds to the path leading to a data value
labelled with $l$, the nodes corresponding to a forest $\phi \in \AST{Q}$
for some $Q$.
Note that these transitions are deterministic in the sense that a configuration
$(q, f)$ with only one data value labelled with $l$ will transition through all
the transitions dictated by $\trgen{Setup}(q,q',l,l',\phi)$ reaching $(q', f')$.

Similarly, we define $\trgen{Spawn}(q,q',l,l',\phi)$ to be
the set of transitions needed to append each tree in $\phi$ to nodes
in the path leading to a data value $d$ labelled with $l$;
the operation starts at control state $q$ and ends at control state $q'$
with the label for $d$ updated to $l'$.
Each tree is appended to the node with the lowest level such that every
name mentioned in its leaves is an ancestor of such node.
Since a single transition can add only one node of $\phi$,
we need a  number of transitions to complete the operation;
these transitions will however be deterministic in the same sense as the ones
required to complete a $\trgen{Setup}$ operation.

\subsection{Definition of $\trgen{React}$}

\subsection{Proof sketch of Theorem~\ref{th:bisim-encoding}}

Let us recall the definition of the relation $\bisimA$.

Theorem~\ref{th:bisim-encoding} is proved by showing that
the relation $\bisimA$ defined above, is a bisimulation
that relates the initial states of the two transition systems.
%%%

\subsection{Proof sketch of Theorem~\ref{th:bisim-encoding-bis}}

\subsection{Proof sketch of Theorem~\ref{th:encoding-is-hierarchical}}

We can prove that $\ProcEnc{\Aut}$ is typably hierarchical
by fixing a specific forest of base types.
%%%

  \section{Supplementary Materials for Section~\ref{sec:future}}

The typing rule for replication introduces a very heavy approximation:
when typing a subterm, we have no information about which other parts of the term (crucially, which restrictions) may be replicated.
Let us explain the issue through an example.
The formalisation and validation of this extension is the topic of ongoing research.

\fi
\fi

\end{document}